\newcommand{\myfrac}[3][0pt]{\genfrac{}{}{}{0}{\raisebox{-#1}{$#2$}}{\raisebox{-#1}{$#3$}}}
\newcommand{\fdfrac}[2]{\mbox{\footnotesize$\displaystyle\frac{#1}{#2}$}}
\DeclareMathOperator*{\argmin}{arg\,min}
\newcommand{\X}{\mathcal{X}}
\newcommand{\veps}{\varepsilon}
\newtheorem{fact}{Fact}
\newcommand{\C}{\mathcal{C}}
\newcommand{\LK}{{\tt List-k-service}}
\definecolor{mycolor}{rgb}{1, 0.0, 0.0}
\newcommand{\dishant}[1]{\textcolor{mycolor}{}}
\begin{document}

\title{FPT Approximation for Constrained Metric $k$-Median/Means}
%
%
\author{Dishant Goyal \and Ragesh Jaiswal \and Amit Kumar}
%
%
%
\institute{Department of Computer Science and Engineering, 
Indian Institute of Technology Delhi.\thanks{\email{\{Dishant.Goyal, rjaiswal, amitk\}@cse.iitd.ac.in}}
}
{\def\addcontentsline#1#2#3{}\maketitle}

\begin{abstract}
The Metric $k$-median problem over a metric space $(\X, d)$ is defined as follows: given a set  $L \subseteq \X$ of {\em facility locations} and a set $C \subseteq \X$ of {\em clients}, open a set $F \subseteq L$ of $k$ facilities such that the total service cost, defined as $\Phi(F, C) \coloneqq \sum_{x \in C} \min_{f \in F} d(x, f)$, is minimised. 
The metric $k$-means problem is defined similarly using squared distances (i.e., $d^2(., .)$ instead of $d(., .)$). 
In many applications there are additional constraints that any solution needs to satisfy.
For example, to balance the load among the facilities in resource allocation problems, a capacity $u$ is imposed on every facility. 
That is, no more than $u$ clients can be assigned to any facility. 
This problem is known as the \emph{capacitated} $k$-means/$k$-median problem. 
Likewise, various other applications have different constraints, which give rise to different {\em constrained} versions of the problem such as \emph{$r$-gather}, \emph{fault-tolerant}, \emph{outlier} $k$-means/$k$-median problem.
Surprisingly, for many of these constrained problems, no constant-approximation algorithm is known. 
Moreover, the \emph{unconstrained} problem itself is known~\cite{capacitated:FPT_2018_Byrka} to be $\mathsf{W[2]}$-hard when parameterized by $k$.
We give FPT algorithms with constant approximation guarantee for a range of constrained $k$-median/means problems. 
For some of the constrained problems, ours is the first constant factor approximation algorithm whereas for others, we improve or match the approximation guarantee of previous works.
We work within the unified framework of Ding and Xu~\cite{Ding_and_Xu_15} that allows us to simultaneously obtain algorithms for a range of constrained problems.
In particular, we obtain a $(3+\veps)$-approximation and $(9+\veps)$-approximation for the constrained versions of the $k$-median and $k$-means problem respectively in FPT time.
In many practical settings of the $k$-median/means problem, one is allowed to open a facility at any client location, i.e., $C \subseteq L$.
For this special case, our algorithm gives a $(2+\veps)$-approximation and $(4+\veps)$-approximation for the constrained versions of $k$-median and $k$-means problem respectively in FPT time.
Since our algorithm is based on simple sampling technique, it can also be converted to a constant-pass log-space streaming algorithm. 
In particular, here are some of the main highlights of this work:
\begin{enumerate}
\item For the uniform capacitated $k$-median/means problems our results matches previously known results of Addad {\em et al.}~\cite{capacitated:FPT_2019_vincent}.
\item For the $r$-gather $k$-median/means problem (clustering with lower bound on the size of clusters), our FPT approximation bounds are better than what was previously known. 
\item Our approximation bounds for the fault-tolerant, outlier, and uncertain versions is better than all previously known results, albeit in FPT time.
\item For certain constrained settings such as chromatic, $l$-diversity, and semi-supervised $k$-median/means, we obtain the first constant factor approximation algorithms to the best of our knowledge.
\item Since our algorithms are based on a simple sampling based approach, we also obtain constant-pass log-space streaming algorithms for most of the above-mentioned problems.
\end{enumerate}
\end{abstract}

\section{Introduction}\label{section:introduction}

The metric $k$-means and $k$-median problems are similar. We combine the discussion of these problems by giving a definition of the $k$-service problem that encapsulates both these problems. 

\begin{definition}[$k$-service problem]
Let $(\X, d)$ be a metric space, $k > 0$ be any integer and $\ell \geq 0$ be any real number.
Given a set $L \subseteq \X$ of feasible facility locations, and a set $C \subseteq \X$ of clients, find a set $F \subseteq L$ of $k$ facilities that minimises the total service cost: $\Phi(F,C) \equiv \sum_{j \in C} \min_{i \in F}d^{\ell}(i,j)$.
\end{definition}
Note that the $k$-service problem is also studied with respect to a more general cost function $\sum_{j \in C} \min_{i \in F} \delta(i,j)$, where $\delta(i,j)$ denotes the cost of assigning a client $j \in C$ to a facility $i \in F$.
We  consider the special case $\delta(i,j) \equiv d^{\ell}(i,j)$. 
For $\ell = 1$, the problem is known as the $k$-\emph{median} problem and for $\ell = 2$, the problem is known as the $k$-\emph{means} problem.
The above definition is motivated by the \emph{facility location} problem and differs from it in two ways.  
First, in the facility location problem, one is allowed to open any number of facilities. 
Second, one has to pay for an additional facility establishment cost for every open facility. 
Thus the $k$-service problem is basically the facility location problem for a fixed number of facilities.

The $k$-service problem can also be viewed as a clustering problem, where the goal is to group the objects that are similar to each other. Clustering algorithms are commonly used in data mining, pattern recognition, and  information retrieval~\cite{akjain_survey_1}. 
However, the notion of a cluster differs for different applications. 
For example, some applications consider a cluster as a dense region of points in the data-space~\cite{dbscan_1996,OPTICS_1999}, while others consider it as a highly connected sub-graph of a graph~\cite{graph_clustering_2000}. 
Likewise, various models have been developed in the past that capture the clustering properties in different ways~\cite{survey_on_all_algorithms_2015}. 
The $k$-means and $k$-median problems are examples of the \emph{center-based} clustering model.
In this model, the objects are mapped to the points in a metric space such that the distance between the points captures the degree of dissimilarity between them. 
In other words, the closer the two points are, the more similar they are to each other. 
In order to measure the quality of a clustering, a center (known as the cluster representative) is assigned to each cluster and the cost is measured based on the distances of the points to their respective cluster centers. 
Then the problem objective is to obtain a clustering with the minimum cost.
To view the $k$-median instance as a clustering instance, consider the client set as a set of data points and the facility locations as the feasible centers. 
In a feasible solution, the clients which are assigned to the same facility are considered a part of the same cluster and the corresponding facility act as their cluster center. 
During our discussion, we will use the term \emph{center} and \emph{facility} interchangeably.
Similarly, we can view the $k$-means problem as a clustering problem where the cost is measured with respect to the squared distances.


Various variants of the $k$-median/means problem have been studied in the clustering literature.
For example, the Euclidean $k$-means problem (where $C \subseteq L = \mathbb{R}^{d}$) is  $\mathsf{NP}$-hard even for a fixed $k$ or a fixed dimension $d$~\cite{das08,aloise09,MNV09,V09}. 
This opens the question of designing a PTAS (polynomial-time approximation schemes) for the problem when either the number of clusters or the dimension is fixed. 
Indeed, various PTASs are known under such conditions~\cite{kumar02,Feldman07,chen09,jks14,friggstad16,addad16}. 
In general, it is known that the problem can not be approximated within a constant factor, unless $\mathsf{P} = \mathsf{NP}$~\cite{acks15,addad19}.

The hardness results in the previous paragraph was for Euclidean setting.
These problems may be harder in general metric spaces which is indeed what has been shown. 
The metric $k$-median problem is hard to approximate within a factor of $(1+2/e)$, and the metric $k$-means problem is hard to approximate within a factor of $(1+8/e)$~\cite{GUHA1998,kamal02}. 
On the positive side, various constant-factor approximation algorithms are known for the $k$-means (and $k$-median) problems in the metric and Euclidean settings ~\cite{kanungo02,charikar99,naveen02,anupam_gupta_08,svensson13,byrka15,Svensson17}. 
Improving these bounds is not the goal of this paper. 
Instead, we undertake the task of improving/obtaining approximation bounds of a more general class of problems called the \emph{constrained $k$-means/$k$-median} problem. 
Let us see what these problems are and why they are important.

For many real-world applications, the classical (unconstrained) $k$-means and $k$-median problems do not entirely capture the desired clustering properties. 
For example, consider the popular $k$-\emph{anonymity} principle~\cite{rgather:k_anonymity_2002_Sweeney}. 
The principle provides anonymity to a public database while keeping it meaningful at the same time. 
One way to achieve this is to cluster the data release only partial information related to the clusters obtained.
Further, to protect the data from the \emph{re-identification} attacks, the clustering should be done in such a way that each cluster gets at least $r$ data-points. This method is popularly known as \emph{$r$-gather} clustering~\cite{rgather:k_center_2010_Aggarwal} (see the formal definition in Table~\ref{table:1}). Likewise, various other applications impose a specific set of constraints on the clusters. 
Such applications have been studied extensively. A survey on these applications is mentioned in Section 1.1 of ~\cite{Ding_and_Xu_15}). We collectively mention these problems in Table~\ref{table:1} and their known approximation results in Table~\ref{table:2}. 
We discuss these problems and their known results in detail in Section~\ref{sec:constrained_problems} of the Appendix.

\begin{table}[h]
\begin{adjustbox}{width=\columnwidth,center}
\centering
\begin{tabular}{|l|l|l|}
\hline
\# & {\bf Problem} & {\bf Description} \\ \hline
1. & \makecell[l]{$r$-gather $k$-service problem* \\ $(r, k)$-{\tt GService}} & \makecell[l]{Find clustering $\mathcal{C} = \{C_1, ..., C_k\}$ with minimum $\Psi^{*}(\mathcal{C})$ \\such that for all $i$, $|C_i| \geq r_{i}$} \\ \hline
2. & \makecell[l]{$r$-Capacity $k$-service problem* \\ $(r, k)$-{\tt CaService}} & \makecell[l]{Find clustering $\mathcal{C} = \{C_1, ..., C_k\}$ with minimum $\Psi^{*}(\mathcal{C})$ \\such that for all $i$, $|C_i| \leq r_{i}$} \\ \hline
3. & \makecell[l]{$l$-Diversity $k$-service problem \\ $(l, k)$-{\tt DService}} & \makecell[l]{Given that every client has an associated colour, \\find a clustering $\mathcal{C} = \{C_1, ..., C_k\}$ with minimum $\Psi^{*}(\mathcal{C})$ \\such that for all $i$, the fraction of points sharing the \\same colour inside $C_i$ is $\leq \frac{1}{l}$} \\ \hline
4. & \makecell[l]{Chromatic $k$-service problem \\ $k$-{\tt ChService}} & \makecell[l]{Given that every client has an associated colour,  \\find a clustering $\mathcal{C} = \{C_1, ..., C_k\}$ with minimum $\Psi^{*}(\mathcal{C})$ \\such that for all $i$, $C_i$ should not have any two \\points with the same colour.} \\ \hline
5. & \makecell[l]{Fault tolerant $k$-service problem \\ $(l, k)$-{\tt FService}} & \makecell[l]{Given a value $l_{p}$ for every client, find a clustering \\ $\mathcal{C} = \{C_1, ..., C_k\}$ and a set $F$ of $k$ centers, such that\\ the sum of service cost of the points to $l_{p}$ of nearest\\ centers out of $F = \{f_{1},f_{2}, \dotsc,f_{k}\}$, is minimised.} \\ \hline
6. & \makecell[l]{Semi-supervised $k$-service problem \\ $k$-{\tt SService}} & \makecell[l]{Given a target clustering $\mathcal{C}' = \{C_1', ..., C_k'\}$ and constant $\alpha$ \\find a clustering
$\mathcal{C} = \{C_1, ..., C_k\}$ and a center set $F$, such \\that the cost $\overline{\Psi}(F,\C) \coloneqq \alpha \cdot \Psi(F,\C) + (1-\alpha) \cdot Dist(\C', \C)$ \\ is minimised. $Dist$ denotes the set-difference distance.} \\ \hline
7. & \makecell[l]{Uncertain $k$-service problem \\ $k$-{\tt UService}} & \makecell[l]{Given a discrete probability distribution for every client, \\ i.e., for a point $p \in C$ there is a set $D_{p} = \{ p_{1},\dotsc,p_{h}\}$ \\ such that $p$ takes the value $p_{i}$ with probability $t_{p}^{i}$ \\ and $\sum_{i =1}^{h} t_{p}^{i} \leq 1$. Find a clustering
$\mathcal{C} = \{C_1, ..., C_k\}$ \\ so that the expected cost of $\Psi^{*}(\C)$ is minimized.} \\ \hline
8. & \makecell[l]{Outlier $k$-service problem* \\ $(k,m)$-{\tt OService}} & \makecell[l]{Find a set $Z\subseteq C$ of size $m$ \\ and a clustering $\mathcal{C'} = \{C'_1, ..., C'_{k}\}$ of the set $C' \coloneqq C \setminus Z$,\\ such that $\Psi^{*}(\C')$ is minimized.}  \\ \hline
\end{tabular}
\end{adjustbox}
\vspace*{0.5mm}
\caption{Constrained $k$-service problems with efficient partition algorithm (see Section~4 and~5.3 in \cite{Ding_and_Xu_15} and references therein). The (*)marked problems were not discussed in~\cite{Ding_and_Xu_15}. We mention their partition algorithms in Section~\ref{section: partition_algm}. 
}
\label{table:1}
\end{table}

An important distinction between the constrained problems and their unconstrained counterparts is the idea of \emph{locality}. 
In simple words, the \emph{locality} property says that the points which are close to each other should be part of the same cluster. 
This property holds for the unconstrained version of the problem.
However, this may not necessarily hold for many of the constrained versions of the problem where minimising clustering cost is not the only requirement.
To understand this, consider a center-set $F = \{f_{1},f_{2},\dotsc,f_{k}\}$ and let $\{C_1, ..., C_k\}$ denote the clustering of the dataset such that the cost function gets minimised. 
That is, $C_i$ contain all the points for which $f_i$ is the closest center in the set $F$.
Note that the clustering $\{C_1, ..., C_k\}$ just minimises the distance based cost function and may not satisfy any additional constraint that the clustering may need to satisfy in a constrained setting.
In a constrained setting we may need an algorithm that, given a center-set $\{f_1, ..., f_k\}$ as input, outputs a clustering $\{\bar{C}_1, ..., \bar{C}_k\}$ which in addition to minimising $\sum_i \sum_{x \in \bar{C}_i}d^{\ell}(x, f_i)$ also satisfies certain clustering constraints.
Such an algorithm is called a {\em partition algorithm}.
In the unconstrained setting, the partition algorithm simply assigns points to closest center in $F$.
However, designing such an efficient partition algorithm for the constrained versions of the problem is a non-trivial task. 
Ding and Xu~\cite{Ding_and_Xu_15} gave partition algorithms for all the problems mentioned in Table~\ref{table:1} (see Section 4 and 5.3 of~\cite{Ding_and_Xu_15}). 
Though these algorithms were specifically designed for the Euclidean space, they can be generalized to any metric space.
We will see that such a partition algorithm is crucial in the design of our FPT algorithms.

The partition algorithm gives us a way for going from center-set to clustering. What about the reverse direction?
Given a clustering $\C = \{ C_{1}, C_{2}, \dotsc, C_{k}\}$, can we find a center set that gives minimum clustering cost?
The solution to this problem is simple. 
Construct a complete weighted bipartite graph $G = (V_{l},V_{r},E)$, where a vertex in $V_{l}$ corresponds to a facility location in $L$, and a vertex in $V_{r}$ corresponds to a cluster $C_{j} \in \C$. 
The weight on an edge $(i,j) \in V_{l} \times V_{r}$ is equal to the cost of assigning the cluster $C_{j}$ to the $i^{th}$ facility, i.e., $\sum_{x \in C_{j}} d^{\ell}(x,i)$. 
Then we can easily obtain an optimal assignment by finding the minimum cost perfect matching in the graph $G$. 
Let us denote the minimum cost by $MCPM(\mathcal{C}, L)$
Thus, it is sufficient to output an optimal clustering for a constrained $k$-service instance. 
In fact, all problems in Table~\ref{table:1} only requires us to output an optimal clustering for the problem.


Ding and Xu~\cite{Ding_and_Xu_15} suggested the following unified framework for considering any constrained $k$-means/$k$-median problem by modeling an arbitrary set of constraints using feasible clusterings.
Note that they studied the problem in the Euclidean space where $C \subseteq L = \mathbb{R}^{d}$ whereas we study the problem in general metric space where $L$ and $C$ are discrete and separate sets. 
We will use a few more definitions to define the problem. 
A $k$-center-set is a set of $k$ distinct elements from $L$ and for any $k$-center-set $F = \{f_1, ..., f_k\}$ and a clustering $\mathcal{C} = \{C_1, ..., C_k\}$, we will use the cost function:
$$\Psi(F, \mathcal{C}) \equiv \min_{\textrm{permutation } \pi} \left\{ \sum_{i=1}^{k} \sum_{x \in C_i} d^{\ell}(x, f_{\pi(i)}) \right\}.$$

\begin{definition}[Constrained $k$-service problem]\label{definition:constrained_k_service}
Let $(\X, d)$ be a metric space, $k > 0$ be any integer and $\ell \geq 0$ be any real number.
Given a set $L \subseteq \X$ of feasible facility locations, a set $C \subseteq \X$ of clients, and a set $\mathbb{C}$ of feasible clusterings, find a clustering $\C = \{C_{1}, C_{2}, \dotsc, C_{k}\}$ in $\mathbb{C}$, that minimizes the following objective function:
$\Psi^{*}(\C) ~ \equiv ~ \min\limits_{\textrm{$k$-center-set $F$}} \Psi(F, \mathcal{C})$.
\end{definition}
Note that $\Psi(F, \mathcal{C})$ is $MCPM(\mathcal{C}, L)$, the minimum cost perfect matching as discussed earlier.
The key component of the above definition is the set of feasible clusterings $\mathbb{C}$. 
Using this, we can define any constrained version of the problem.
Note that $\mathbb{C}$ can have an exponential size. However, for many problems it can be defined concisely  using a simple set of mathematical constraints. 
For example, $\mathbb{C}$ for the $r$-gather problem can be defined as $\mathbb{C} \coloneqq \{ \C ~ \mid ~ \textrm{for every cluster } C_{i} \in \C$, $|C_{i}| \geq r_{i} \}$, where $\C = \{C_{1},C_{2},\dotsc,C_{k} \}$ is a partitioning of the client set. 
Note that we consider the \emph{hard assignment} model for the problem.
That is, one cannot open more than one facility at a location. 
It differs from the \emph{soft assignment} model where one can open multiple facilities at a location. 
The soft version can be stated in terms of the hard version -- by allowing $L$ to be a multi-set and creating $k$-copies for each location in $L$. 
It has been observed that the soft-assignment models are easier and allow better approximation guarantees than the hard-assignment models~\cite{capacitated:kcenter_2012_khuller,capacitated:kmedian_2017_Li}. 
For our discussion, we will call a center-set a \emph{soft center-set} if it contains facility location multiple times, otherwise we call it a \emph{hard center-set}. 
In fact, a soft center-set is a multi-set. We will avoid using the term multi-set to keep our discussion simple.

As observed in past works~\cite{Ding_and_Xu_15,bjk18}, any constrained version of $k$-median/means can be solved using a partition algorithm for this version and a solution to a very general ``list'' version of the clustering problem which we discuss next.
Let us define this problem which we call the \emph{list $k$-service} problem
\footnote{This notion of list version of the clustering problem was implicitly present in the work of Ding and Xu~\cite{Ding_and_Xu_15}. Bhattacharya et al.~\cite{bjk18} formalized this as the {\em list $k$-means problem}.}.
This will help us solve the constrained $k$-service problem. 

\begin{definition}[List $k$-service problem]\label{definition:list_k_service}
Let $\alpha$ be a fixed constant.
Let $\mathcal{I} = (L,C,k,d,\ell)$ be any instance of the $k$-service problem and let $\C = \{C_{1}, C_{2}, \dotsc,C_{k}\}$ be an arbitrary clustering of the client set $C$.
The goal of the problem is: given $\mathcal{I}$, find a list $\mathcal{L}$ of $k$-center-sets (i.e., each element of the list is a set of $k$ distinct elements from $L$) such that, with probability at least $(1/2)$, there is a $k$-center-set $F$ such that $\Psi(F,\C) \leq \alpha \cdot \Psi^{*}(\C)$.
\end{definition}

Note that the clustering algorithm in the above setup does not get access to the clustering $\mathcal{C}$ and yet is supposed to find good centers (constant $\alpha$ approximation) for this clustering. 
Given this, it is easy to see that finding a single set of $k$ centers that are good for $\mathcal{C}$ is not possible. 
However, finding a reasonably small list of $k$-center-sets such that at least one of the $k$-center-sets in the list is good may be feasible.
This is main realization behind the formulation of the list version of the problem.
The other reason is that since the target clustering is allowed to be a completely arbitrary partition of the client set $C$, we can use the solution of the list $k$-service problem to solve any constrained $k$-service problem as long as there is a partition algorithm.
The following theorem combines the list $k$-service algorithm and the partition algorithm for a constrained version of the problem to produce a constant-approximation algorithm this problem.

\begin{theorem}\label{theorem: list_alpha_approx}
Let $\mathcal{I} = (C,L,k,d,\ell,\mathbb{C})$ be any instance of any constrained $k$-service problem and let $A_{\mathbb{C}}$ be the corresponding partition algorithm.
Let $B$ be an algorithm for the list $k$-service problem that runs in time $T_B$ for instance $(C,L,k,d,\ell)$.
There is an algorithm that, with probability at least $1/2$, outputs a clustering $\C \in \mathbb{C}$, which is an $\alpha$-approximation for the constrained $k$-service instance. 
The running time of the algorithm is $O(T_B + |\mathcal{L}| \cdot T_A)$, where $T_A$ is the running time of the partition algorithm.
\end{theorem}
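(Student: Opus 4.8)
The plan is to use the obvious ``generate candidates, then keep the cheapest feasible one'' algorithm, and to verify correctness by pointing to a single element of the list that already witnesses the bound. First I would spell out the algorithm: run the list $k$-service algorithm $B$ on the unconstrained instance $(C,L,k,d,\ell)$ to obtain a list $\mathcal{L}$ of $k$-center-sets; for every $F \in \mathcal{L}$, run the partition algorithm $A_{\mathbb{C}}$ on input $F$ to obtain a feasible clustering $\mathcal{C}_F \in \mathbb{C}$, and compute $\Psi^{*}(\mathcal{C}_F)$ via the minimum-cost perfect matching $MCPM(\mathcal{C}_F,L)$ described earlier; finally, output the clustering $\mathcal{C}_F$ attaining $\min_{F \in \mathcal{L}} \Psi^{*}(\mathcal{C}_F)$. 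That the output lies in $\mathbb{C}$ is immediate since $A_{\mathbb{C}}$ always returns a feasible clustering by definition.

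For the approximation guarantee I would argue as follows. Fix an optimal feasible clustering $\mathcal{C}^{*} = \{C^{*}_1,\dots,C^{*}_k\} \in \mathbb{C}$ and write $OPT = \Psi^{*}(\mathcal{C}^{*})$. Since $\mathcal{C}^{*}$ is a fixed (though algorithmically unknown) clustering of $C$, the defining guarantee of $B$ applies to it verbatim: with probability at least $1/2$ there is a $k$-center-set $F \in \mathcal{L}$ with $\Psi(F,\mathcal{C}^{*}) \le \alpha \cdot \Psi^{*}(\mathcal{C}^{*}) = \alpha \cdot OPT$. Condition on this event for the rest of the argument. Unfolding the definition of $\Psi$ and relabelling the elements of $F = \{f_1,\dots,f_k\}$ along the permutation attaining that minimum, we obtain $\sum_{i=1}^{k}\sum_{x \in C^{*}_i} d^{\ell}(x,f_i) \le \alpha \cdot OPT$.

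Next I would invoke the optimality built into the partition algorithm. Running $A_{\mathbb{C}}$ on the (relabelled) $F$ returns $\mathcal{C}_F = \{\bar C_1,\dots,\bar C_k\} \in \mathbb{C}$ minimising $\sum_i \sum_{x \in \bar C_i} d^{\ell}(x,f_i)$ among all feasible clusterings; as $\mathcal{C}^{*}$ is itself feasible, this yields $\sum_i \sum_{x \in \bar C_i} d^{\ell}(x,f_i) \le \sum_i \sum_{x \in C^{*}_i} d^{\ell}(x,f_i) \le \alpha \cdot OPT$. Chaining with the definitions gives $\Psi^{*}(\mathcal{C}_F) = \min_{F'}\Psi(F',\mathcal{C}_F) \le \Psi(F,\mathcal{C}_F) \le \sum_i \sum_{x \in \bar C_i} d^{\ell}(x,f_i) \le \alpha \cdot OPT$, where the middle inequality just takes the identity permutation inside $\Psi$. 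Since $\mathcal{C}_F$ is one of the clusterings the algorithm compares, the clustering it outputs has cost at most $\alpha \cdot OPT$; this holds on the event above, hence with probability at least $1/2$.

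For the running time I would account for it directly: computing $\mathcal{L}$ costs $T_B$, and each of the $|\mathcal{L}|$ iterations runs $A_{\mathbb{C}}$ once and solves one minimum-cost perfect matching, the latter polynomial in $|C|+|L|$ and absorbed into $T_A$, giving the stated $O(T_B + |\mathcal{L}|\cdot T_A)$. The one point I expect to require genuine care rather than routine unfolding is the bookkeeping of labels and permutations: $B$ certifies a good \emph{unordered} $k$-center-set for the unknown $\mathcal{C}^{*}$, whereas $A_{\mathbb{C}}$ optimises over feasible clusterings for a \emph{fixed labelling} $f_1,\dots,f_k$ of its input; reconciling the two via the single relabelling of $F$ dictated by $\Psi(F,\mathcal{C}^{*})$ (legitimate precisely because $\Psi$ minimises over all permutations) is what makes the three inequalities line up. Everything else is a direct consequence of the definitions of $\Psi$, $\Psi^{*}$, and the partition algorithm.
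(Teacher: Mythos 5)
Your proposal is correct and follows essentially the same route as the paper: run $B$ to get $\mathcal{L}$, run $A_{\mathbb{C}}$ on each center-set, return the cheapest resulting feasible clustering, and argue via the list guarantee applied to the (unknown but fixed) optimal clustering $\mathcal{C}^{*}$ together with the optimality of the partition algorithm that $\Psi(F,A_{\mathbb{C}}(F)) \le \Psi(F,\mathcal{C}^{*}) \le \alpha\cdot\Psi^{*}(\mathcal{C}^{*})$. Your explicit treatment of the permutation/relabelling issue is a point the paper's proof passes over silently, but it is the same argument.
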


\begin{proof}
The algorithm is as follows. We first run algorithm $B$ to obtain a list $\mathcal{L}$.
For every $k$-center-set in the list, the algorithm runs the partition algorithm $A_{\mathbb{C}}$ on it. 
Then the algorithm outputs that $k$-center-set that has the minimum clustering cost. 
Let $F'$ be this $k$-center-set and $\C'$ be the corresponding clustering. 
We claim that $(F',\C')$ is an $\alpha$-approximation for the constrained $k$-service problem.

Let $\C^{*}$ be an optimal solution for the constrained $k$-service instance $(C,L,k,d,\ell,\mathbb{C})$ and $F^{*}$ denote the corresponding $k$-center-set. 
By the definition of the list $k$-service problem, with probability at least $1/2$, there is a $k$-center-set $F$ in the list $\mathcal{L}$, such that $\Psi(F,\C^{*}) \leq \alpha \cdot \Psi(F^{*},\C^{*})$. 
Let $\C = A_{\mathbb{C}}(F) \in \mathbb{C}$ be the clustering corresponding to $F$. 
Thus, $\Psi(F,\C) \leq \Psi(F,\C^{*}) \leq \alpha \cdot \Psi(F^{*},\C^{*})$. Since $F'$ gives the minimum cost clustering in the list, we have $\Psi(F',C') \leq \Psi(F,C)$. 
Therefore, $\Psi(F',C') \leq \alpha \cdot \Psi(F^{*},\C^{*})$. 

Since, the algorithm runs a partition procedure for every center set in the list, the running time of this step is $|\mathcal{L}| \cdot T_A$. 
Picking a minimum cost clustering from the list takes $O(|\mathcal{L}|)$ time. Hence the overall running time is $O(T_B + |\mathcal{L}| \cdot T_A)$.
\end{proof}

Now suppose we are given a list $\mathcal{L}$ of size $g(k)$ (for some function $g$) and a partition algorithm for the problem with the polynomial running time.
Then by Theorem~\ref{theorem: list_alpha_approx}, we get an FPT algorithm for the constrained $k$-service problem. 
Since for many of the constrained $k$-service problems there exists efficient partition algorithms, it makes sense to design an algorithm for the list $k$-service problem that outputs a list of size at most $g(k)$. 
We will design such an algorithm in Section~\ref{section:list_k_service} of this paper.
We also need to make sure that the partition algorithms for constrained problems that we saw in Table~\ref{table:1} exists and our plan of approaching the constrained problem using the list problem can be executed. 
Indeed, Ding and Xu~\cite{Ding_and_Xu_15} gave partition algorithms for a number of constrained problems. 
We make addition to their list which allows us to discuss new problems in this work. 
These additions and other discussions on approaching specific constrained problems using the list problem is discussed in Section~\ref{section: partition_algm} of Appendix. 
What we note here is that the approximation guarantee for the list problem carries over to all the constrained problem in Table~\ref{table:1}. 
We now look at our main results for the list $k$-service problem and its main implications for the constrained problems.

\subsection{Our Results}

We will show the following result for the list $k$-service problem.

\begin{theorem}[Main Theorem]\label{theorem:list_3_approx}
Let $0 < \veps \leq 1$.
Let $(C,L,k,d,\ell)$ be any $k$-service instance and let $\mathcal{C} = \{C_{1}, C_{2},\dotsc,C_{k}\}$ be any arbitrary clustering of the client set. 
There is an algorithm that, with probability at least $1/2$, outputs a list $\mathcal{L}$ of size $(k /\veps)^{O(k \, \ell^{\,2})}$, such that there is a $k$-center-set $S \in \mathcal{L}$ in the list such that
$\Psi (S,\C) \leq (3^{\ell} + \veps) \cdot \Psi^{*}(\C)$.
Moreover, the running time of the algorithm is $O \left( n \cdot (k /\veps)^{O(k \, \ell^{\,2})} \right)$.
For the special case when $C \subseteq L$, the algorithm gives a $(2^{\ell} + \veps)$-approximation guarantee.
\end{theorem}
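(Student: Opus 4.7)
My plan is to build $\mathcal{L}$ by an iterative $D^{\ell}$-sampling algorithm with branching, analysed by induction on the round. The algorithm maintains a list of partial center-sets, starting with just $\{\emptyset\}$. In round $i$, for every partial center-set $F$ of size $i-1$ in the current list, it draws $N = (k/\veps)^{\Theta(\ell^{2})}$ independent samples from $C$ where each client $x$ is chosen with probability proportional to $d^{\ell}(x, F)$. Each sampled client $y$ is replaced by its closest facility $f \in L$ (in the special case $C \subseteq L$ we take $f = y$), and $F \cup \{f\}$ is added to the updated list. After $k$ rounds the list has size $N^{k} = (k/\veps)^{\Theta(k \ell^{2})}$, matching the bound claimed in the theorem.

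For each cluster $C_{j}$ let $c_{j}^{*} \in L$ denote a best single facility and $\mathrm{OPT}_{j} = \sum_{x \in C_{j}} d^{\ell}(x, c_{j}^{*})$, so that $\sum_{j} \mathrm{OPT}_{j} \leq \Psi^{*}(\mathcal{C})$. The inductive claim I want to establish is: with probability at least $1 - i/(10k)$, the list after round $i$ contains some $F_{i}$ of size $i$ and an injection $\sigma : \{1,\dots,i\} \hookrightarrow \{1,\dots,k\}$ such that $\sum_{j=1}^{i} \sum_{x \in C_{\sigma(j)}} d^{\ell}(x, F_{i}) \leq (3^{\ell}+\veps) \sum_{j=1}^{i} \mathrm{OPT}_{\sigma(j)}$ (with $2^{\ell}$ replacing $3^{\ell}$ when $C \subseteq L$). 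The inductive step rests on a dichotomy. Fix a good $F_{i-1}$ with its $\sigma$ and set $R = \{1,\dots,k\} \setminus \sigma(\{1,\dots,i-1\})$. For every $j \in R$, either (a) $F_{i-1}$ already $(3^{\ell}+\veps)$-approximates $C_{j}$, in which case extending $\sigma$ by $i \mapsto j$ preserves the invariant irrespective of which new facility is sampled (since adding centers cannot increase cost), or (b) every $f \in F_{i-1}$ is ``far'' from $c_{j}^{*}$ in the sense that $d^{\ell}(f, c_{j}^{*}) \gg \mathrm{OPT}_{j}/|C_{j}|$.

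In case (b), two consequences hold simultaneously. First, by pigeonhole among the $|R| \leq k$ uncovered clusters, at least one $j \in R$ carries an $\Omega(1/k)$ fraction of the total $D^{\ell}$-mass $\Phi(F_{i-1}, C)$. Second, the induced $D^{\ell}$-distribution restricted to $C_{j}$ is $(1+O(\veps))$-close to uniform, because all points of $C_{j}$ are nearly equidistant from $F_{i-1}$ once $F_{i-1}$ is sufficiently far from $c_{j}^{*}$. Consequently, a single sample lands inside $C_{j}$ distributed almost uniformly with probability $\Omega(1/k)$, and among such samples a Markov argument yields $d^{\ell}(y, c_{j}^{*}) \leq (1+\veps/2) \cdot \mathrm{OPT}_{j}/|C_{j}|$ with probability $\Omega(\veps)$. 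For any such $y$, the snapped facility $f \in L$ satisfies $d(f, c_{j}^{*}) \leq 2 \, d(y, c_{j}^{*})$, and combining this with the triangle inequality and the power-mean estimate $(a+b)^{\ell} \leq (1+\delta) a^{\ell} + (1+1/\delta)^{\ell-1} b^{\ell}$, optimised over $\delta > 0$, gives $\sum_{x \in C_{j}} d^{\ell}(x, f) \leq (3^{\ell}+\veps) \mathrm{OPT}_{j}$; the case $C \subseteq L$ uses $f = y$ directly and the same calculation sharpens the factor to $2^{\ell}+\veps$. Choosing $N = (k/\veps)^{\Theta(\ell^{2})}$ drives the per-round failure probability to at most $1/(10k)$, and a union bound over the $k$ rounds closes the induction.

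The main obstacle I foresee is calibrating the ``far'' threshold of the dichotomy: it must be strong enough to force the restricted $D^{\ell}$-distribution on $C_{j}$ to be $(1+O(\veps))$-uniform (which requires the threshold to scale with $\ell$ and $1/\veps$), yet weak enough that its failure already implies $F_{i-1}$ covers $C_{j}$ within the target $(3^{\ell}+\veps)$ slack. The exponent $\ell^{2}$ in the sample size is forced by this calibration: one factor of $\ell$ appears inside the Markov step because raising distances to the $\ell$-th power amplifies small deviations, and a second factor of $\ell$ appears when converting the uniformity approximation to a high-probability success event. Once the list is built, Theorem~\ref{theorem: list_alpha_approx} converts it into a $(3^{\ell}+\veps)$-approximation, or a $(2^{\ell}+\veps)$-approximation when $C \subseteq L$, for each constrained $k$-service problem of Table~\ref{table:1}.
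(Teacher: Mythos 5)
Your scheme is structurally different from the paper's, and the differences matter. The paper does \emph{not} build the list by adaptively updating the $D^{\ell}$-sampling distribution round by round. It instead computes a single $\alpha$-approximate scaffold $F$ for the \emph{unconstrained} instance $(C,C,k,d,\ell)$ up front, draws all $\eta k$ $D^{\ell}$-samples with respect to that fixed $F$, adjoins $F$ itself to the sample set, and takes the $k$ nearest facilities of each sample. The analysis then splits the target clusters into ``low-cost'' $W$ and ``high-cost'' $H$ with respect to $F$, and handles the low-cost ones not by sampling at all but by using $F$ as a literal proxy.

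The most serious gap in your argument is the near-uniformity step. You assert that when every $f \in F_{i-1}$ is far from $c_{j}^{*}$, the $D^{\ell}$-distribution conditioned on $C_{j}$ is $(1+O(\veps))$-close to uniform. In a constrained instance $C_{j}$ is an \emph{arbitrary} part, not a Voronoi cell, and it can easily contain points whose distance to $c_{j}^{*}$ is comparable to or larger than $d(F_{i-1}, c_{j}^{*})$; those points may lie almost anywhere relative to $F_{i-1}$ and can dominate the $D^{\ell}$-mass of $C_{j}$ while lying far from $c_{j}^{*}$. So the conditional distribution is not near-uniform no matter how you calibrate the ``far'' threshold, and your Markov step on $d^{\ell}(y, c_{j}^{*})$ (which uses $\mathbb{E}_y[d^{\ell}(y,c_j^*)] = \mathrm{OPT}_j/|C_j|$, a uniform-sampling identity) does not carry over. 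The paper avoids near-uniformity entirely: for a high-cost cluster it partitions $C_i$ into near points $C_i^{n}$ and far points $C_i^{f}$ with respect to a radius $R$, replaces near points by their $F$-proxies to form $M_i = M_i^{n} \cup C_i^{f}$, proves the \emph{uniform}-sampling bound over $M_i$ (Lemma~\ref{lemma:far_cluster}), then shows separately that $F$ already contains the good points coming from $M_i^{n}$ and that every point of $C_i^{f}$ has $D^{\ell}$-probability at least $\tau$ (Lemma~\ref{lemma:d2_sample}). It needs only a \emph{lower bound} on per-point sampling probability, not closeness to uniformity.

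Two further issues. First, your dichotomy ((a) $F_{i-1}$ already covers $C_j$ vs.\ (b) $F_{i-1}$ far from $c_j^{*}$) omits the case in which $C_j$ has small absolute $D^{\ell}$-mass under the current scaffold but the ratio $\Phi(F_{i-1},C_j)/\mathrm{OPT}_j$ still exceeds $3^{\ell}+\veps$; such clusters are neither covered nor reliably sampled, and your pigeonhole claim that some uncovered $j$ carries $\Omega(1/k)$ of the $D^{\ell}$-mass does not hold. The paper classifies clusters by \emph{absolute} mass ($W$ vs.\ $H$) and charges the $W$-clusters' error against the global quantity $\Phi(F,C) \le \alpha \cdot OPT(C,C) \le 2^{\ell}\alpha \cdot \Psi^{*}(\mathcal{C})$ (Lemma~\ref{lemma:close_cluster}), which is only possible because $F$ was chosen as a constant-factor approximation of the unconstrained optimum. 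Second, snapping each sample to a single nearest facility can produce a multiset with fewer than $k$ distinct facilities; the paper collects the $k$ nearest facilities of each sample and shows through the {\tt FindFacilities} subroutine that some \emph{distinct} $k$-subset still achieves the guarantee. These are not cosmetic differences; each one closes a step that your proposal leaves open.
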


\noindent Using the above Theorem together with Theorem~\ref{theorem: list_alpha_approx}, we obtain the following main results for the constrained $k$-means and $k$-median problems.
\begin{corollary}[$k$-means]
For any constrained version of the metric $k$-means problem with an efficient partition algorithm, there is a $(9+\veps)$-approximation algorithm with an FPT running time of $(k/\veps)^{O(k)} \cdot n^{O(1)}$. For a special case when $C \subseteq L$, the algorithm gives a $(4 + \veps)$-approximation guarantee.
\end{corollary}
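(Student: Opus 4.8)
The plan is to derive this corollary by plugging the Main Theorem (Theorem~\ref{theorem:list_3_approx}) with $\ell = 2$ into the reduction of Theorem~\ref{theorem: list_alpha_approx}; all the real work sits in those two statements, which I am allowed to assume. First I would observe that a constrained metric $k$-means instance is, by definition, a constrained $k$-service instance with $\ell = 2$, so its list version is exactly the $\ell = 2$ case of Definition~\ref{definition:list_k_service}. Applying Theorem~\ref{theorem:list_3_approx} with $\ell = 2$ and accuracy parameter $\veps' \in (0,1]$ (fixed at the end) gives an algorithm $B$ that, with probability at least $1/2$, produces a list $\mathcal{L}$ of size $(k/\veps')^{O(k \cdot \ell^{2})} = (k/\veps')^{O(k)}$ --- the constant $\ell^{2} = 4$ being absorbed into the $O(\cdot)$ --- containing a $k$-center-set $S$ with $\Psi(S,\C) \le (3^{\ell} + \veps')\,\Psi^{*}(\C) = (9 + \veps')\,\Psi^{*}(\C)$ for the (adversarial, unseen) target clustering $\C$; its running time is $T_B = O\left(n \cdot (k/\veps')^{O(k)}\right)$. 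In the special case $C \subseteq L$ the same theorem improves the guarantee to $\alpha = 2^{\ell} + \veps' = 4 + \veps'$.

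Next I would feed $B$, together with the partition algorithm $A_{\mathbb{C}}$ for the particular constrained variant under consideration, into Theorem~\ref{theorem: list_alpha_approx}. The only hypothesis left to check is that $A_{\mathbb{C}}$ runs in polynomial time $T_A = n^{O(1)}$; this is precisely the ``efficient partition algorithm'' assumption of the corollary, and it holds for every constrained $k$-means problem in Table~\ref{table:1} via the partition procedures of Ding and Xu together with the additions in Section~\ref{section: partition_algm}. Theorem~\ref{theorem: list_alpha_approx} then yields, with probability at least $1/2$, a clustering $\C \in \mathbb{C}$ that is an $\alpha$-approximation for the constrained instance, with $\alpha = 9 + \veps'$ in general and $\alpha = 4 + \veps'$ when $C \subseteq L$. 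Its running time is $O(T_B + |\mathcal{L}| \cdot T_A) = O\left(n \cdot (k/\veps')^{O(k)}\right) + (k/\veps')^{O(k)} \cdot n^{O(1)} = (k/\veps')^{O(k)} \cdot n^{O(1)}$, which is FPT in $k$.

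Finally I would set $\veps' = \veps$: since Theorem~\ref{theorem:list_3_approx} requires $\veps' \le 1$, for $\veps > 1$ there is nothing to prove (a $(9+1)$- or $(4+1)$-approximation already witnesses the claim), so we may assume $0 < \veps \le 1$ and obtain exactly the stated $(9+\veps)$- and $(4+\veps)$-approximation guarantees with running time $(k/\veps)^{O(k)} \cdot n^{O(1)}$. There is no genuine obstacle in this derivation --- the difficulty is all in Theorem~\ref{theorem:list_3_approx} itself, whose proof (the sampling-based argument) is deferred to Section~\ref{section:list_k_service}. The only points requiring care are bookkeeping: that $\ell = 2$ turns the $\ell^{2}$ in the list-size exponent into an absolute constant, that the polynomial-time partition algorithms of Section~\ref{section: partition_algm} indeed cover all the Table~\ref{table:1} variants, and that the two independent ``probability $\ge 1/2$'' events are composed correctly (either can be boosted by $O(\log(1/\delta))$ repetitions at no asymptotic cost if a higher success probability is desired).
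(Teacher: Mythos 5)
Your derivation is correct and is exactly the paper's route: instantiate the Main Theorem (Theorem~\ref{theorem:list_3_approx}) with $\ell=2$ to get a list of size $(k/\veps)^{O(k)}$ containing a $(9+\veps)$-good (resp.\ $(4+\veps)$-good when $C\subseteq L$) $k$-center-set, then compose with the efficient partition algorithm via Theorem~\ref{theorem: list_alpha_approx} to obtain the stated approximation in $(k/\veps)^{O(k)}\cdot n^{O(1)}$ time. The only cosmetic remark is that the success probability comes from the list algorithm alone (the partition step is deterministic), so there is just one probabilistic event to track, not two.
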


\begin{corollary}[$k$-median]
For any constrained version of the metric $k$-median problem with an efficient partition algorithm, there is a $(3+\veps)$-approximation algorithm with an FPT running time of $(k/\veps)^{O(k)} \cdot n^{O(1)}$. For the special case when $C \subseteq L$, the algorithm gives a $(2 + \veps)$-approximation guarantee.
\end{corollary}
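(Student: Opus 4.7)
The plan is an iterative $D^\ell$-sampling procedure that builds $\mathcal{L}$ as the set of leaves of a depth-$k$ search tree with per-level branching $(k/\veps)^{O(\ell^2)}$, in the spirit of \cite{jks14,bjk18}. Fix an optimum $F^* = \{f_1^*, \ldots, f_k^*\}$ matching $C_i \mapsto f_i^*$, and write $\Delta_i \coloneqq \sum_{x \in C_i} d^\ell(x, f_i^*)$, so $\Psi^*(\C) = \sum_i \Delta_i$. At a node with partial center set $F$ of size $j-1$, I would $D^\ell$-sample a multiset $T$ of $N = \Theta(\ell^2/\veps)$ clients (uniformly if $F = \emptyset$, otherwise with probability proportional to $d^\ell(x, F)$), and for every subset $T' \subseteq T$ add one child whose newly appended center is $p(T') \coloneqq \argmin_{p \in L}\sum_{t \in T'} d^\ell(t, p)$; in the special $C \subseteq L$ case the child's center is simply the best point of $T'$ itself. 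Taking $\mathcal{L}$ to be the set of leaves makes the per-level branching $2^N \leq (k/\veps)^{O(\ell^2)}$, hence the total size $(k/\veps)^{O(k\ell^2)}$.

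For correctness, it suffices to exhibit a single lucky root-to-leaf path whose $k$ new centers are all good representatives of distinct clusters. Inductively, suppose the partial center set handles some $I \subsetneq [k]$ within a $(3^\ell+\veps)$-factor, and let $i^* \notin I$ index the uncovered cluster with the most residual $D^\ell$-mass (which is at least a $1/(k-|I|) \geq 1/k$ fraction of the total residual). A standard $D^\ell$-sampling argument shows that with probability $\Omega(1/k)$ a constant fraction of $T$ is drawn uniformly at random from $C_{i^*}$, and a Markov bound on the $\ell$-th moment then produces a subset $T' \subseteq T \cap C_{i^*}$ of size $\Theta(\ell^2/\veps)$ satisfying $\sum_{t \in T'} d^\ell(t, f_{i^*}^*) \leq (1+\veps) |T'|\,\Delta_{i^*}/|C_{i^*}|$ with constant probability. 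Combining the defining minimality of $p(T')$ (giving $\sum_t d^\ell(t, p(T')) \leq \sum_t d^\ell(t, f_{i^*}^*)$), the triangle inequality $d(x, p(T')) \leq d(x, f_{i^*}^*) + d(f_{i^*}^*, t) + d(t, p(T'))$ averaged over $t \in T'$, and a $(1+\veps)$-relaxed version of $(a+b+c)^\ell \leq 3^{\ell-1}(a^\ell + b^\ell + c^\ell)$, I would obtain $\sum_{x \in C_{i^*}} d^\ell(x, p(T')) \leq (3^\ell + O(\veps)) \Delta_{i^*}$. In the $C \subseteq L$ case, picking $p(T') \in T'$ outright eliminates the $d(t, p(T'))$ term and improves the constant to $2^\ell + O(\veps)$.

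Since the lucky path is extended at each of $k$ levels with probability $\Omega(1/k)$, its joint success probability is at least $k^{-O(k)}$; I would amplify past $1/2$ by repeating the procedure $k^{O(k)}$ times and outputting the union of the lists, a factor absorbed into the $(k/\veps)^{O(k\ell^2)}$ budget. The main obstacle is the concentration step for general $\ell$: a sample of size $\Theta(\ell^2/\veps)$ must both $(1+\veps)$-estimate the $\ell$-th-power moment of distances to $f_{i^*}^*$ (forcing the quadratic $\ell^2$ dependence via Markov on the $\ell$-th power) and look like a uniform sample of $C_{i^*}$ despite the $D^\ell$ reweighting induced by previously chosen centers --- so I would need a separate \emph{heavy-cluster is approximately uniformly $D^\ell$-sampled} lemma in the flavor of \cite{bjk18} to close this gap, together with careful bookkeeping of the $(1+\veps)$ losses so that they compose across the $k$ levels without degrading the leading constant $3^\ell$.
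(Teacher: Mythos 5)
Your proposal takes a genuinely different route from the paper: you build the center set iteratively via a depth-$k$ search tree with $D^\ell$-sampling against the \emph{partial} center set built so far (in the style of~\cite{jks14,bjk18}), whereas the paper does a single, non-adaptive round of $D^\ell$-sampling against a \emph{fixed} warm-start $F$ obtained from an $\alpha$-approximation for the unconstrained instance on $(C,C,k,d,\ell)$, and then crucially adds all of $F$ into the sampled multiset $M$ (lines~4--5 of \LK). That second design choice is where the two approaches diverge most.

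The gap you flag at the end is, in fact, the crux, and I don't think it can be closed in the form you sketch. In the constrained setting the target clustering $\C$ is an arbitrary partition, so the cluster $C_{i^*}$ with the largest residual $D^\ell$-mass can contain a large subpopulation of points lying very close to your already-chosen centers. Conditioned on landing in $C_{i^*}$, a $D^\ell$-sample is then heavily skewed toward the far points, so $T \cap C_{i^*}$ is nowhere near uniform, and the Markov/$\ell$-th-moment argument you invoke (which needs $\sum_{t \in T'} d^\ell(t,f^*_{i^*}) \lesssim |T'|\Delta_{i^*}/|C_{i^*}|$ for a uniform-like $T'$) does not go through. In~\cite{bjk18} this is handled using Euclidean centroid structure, which has no metric analogue. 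The paper resolves it differently, with a near/far decomposition $C_i = C_i^n \cup C_i^f$ against the fixed $F$ and a \emph{proxy} trick: near points are represented by their closest centers $c(x) \in F$, whose cost of substitution is controlled by Lemma~\ref{lemma:radius}, and since all of $F$ is injected into $M$ these proxies are always available to the enumeration; $D^\ell$-sampling is then only asked to hit the far points, where the lower bound $d^\ell(x,c(x)) \ge R^\ell$ makes the sampling weights roughly balanced (Lemma~\ref{lemma:d2_sample}). Your iterative scheme has no analogous proxy set --- the $j-1$ centers built so far are \emph{not} an approximation of the unconstrained problem, so there is no reason they can serve as stand-ins for the near points of $C_{i^*}$ --- and you would need to import the warm-start idea wholesale to repair this. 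Two smaller points: for $C \subseteq L$ your claim that $p(T') \in T'$ ``eliminates the $d(t,p(T'))$ term'' does not immediately yield $2^\ell$, because the residual term becomes $d^\ell(f^*_{i^*}, p(T'))$ and cost-minimality of $p(T')$ within $T'$ does not bound its distance to $f^*_{i^*}$; and you also do not address how to turn the resulting soft center set into a hard one, which the paper handles via the $k$-nearest-locations step (lines~7--8) and the {\tt FindFacilities} argument.
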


Note that by Theorem~\ref{theorem: list_alpha_approx}, as long as the running time of the partition algorithm is $g(k) \cdot n^{O(1)}$, the total running time of the algorithm still stays FPT.
All the problems in Table~\ref{table:1} either have an efficient partition algorithm (polynomial in $n$ and $k$) or a partition algorithm with an FPT running time.
We discuss these partition algorithms in Section~\ref{section: partition_algm} of the Appendix.
Therefore, all the problems given in Table~\ref{table:1} admit a $(9+\veps)$-approximation and a $(3+\veps)$-approximation for the $k$-means and $k$-median objectives respectively.
It should be noted that other than the problems mentioned in Table~\ref{table:1}, our algorithm works for any problem that fits the framework of the constrained $k$-service problem (i.e., Definition~\ref{definition:constrained_k_service}) and has a partition algorithm.
This makes the approach extremely versatile since we one may be able to solve more problems that may arise in the future.\footnote{We note that new ways of modelling fairness in clustering is giving rise to new clustering problems with fairness constraints and some of these new problems may fit into this framework.}
The known results on constrained problems in Table~\ref{table:1} is summarised in Table~\ref{table:2}. 
Even though our work does not address the $k$-center problem, we state results on $k$-center  just to understand the state of art about these problems. 
Note that for {\bf all} these problems we obtain FPT time $(9+\veps)$-approximation and $(3+\veps)$-approximation for $k$-means and $k$-median respectively. For the special case when $C \subseteq L$ (a facility can be opened at any client location), we obtain FPT time $(4+\veps)$-approximation and $(2+\veps)$-approximation for $k$-means and $k$-median respectively.
There are some subtle differences in the problems in Table~\ref{table:1} and Table~\ref{table:2}. This is to be able to compare our results with known results. We will highlight these differences in the related work section.

\begin{table}[h]
\begin{adjustbox}{width=\columnwidth,center}
\centering
\setcellgapes{1ex}\makegapedcells
\begin{tabular}{|c|l|c|c|c|}
\hline
\# & {\bf Problem} & {\bf Metric $k$-center} & {\bf Metric $k$-median} & {\bf Metric $k$-means}\\ \hline
1. & \makecell[l]{$r$-gather $k$-service \\ (uniform case) \\ } & \makecell[l]{$2$-approx.~\cite{rgather:k_center_2010_Aggarwal}
} & \makecell[l]{ 
7.2-approx~\cite{rgather:k_all_2018_Ding} \\ (for $C = L$) \\(in FPT time)} & \makecell[l]{ 86.9-approx~\cite{rgather:k_all_2018_Ding} \\ (for $C = L$) \\(in FPT time)
} \\ \hline
2. & \makecell[l]{$r$-Capacity $k$-service \\ (uniform case)} & \makecell[l]{$6$-approx.~\cite{capacitated:kcenter_2000_khuller}
} & \makecell[l]{ $(3+\veps)$-approx~\cite{capacitated:FPT_2019_vincent} \\ (in FPT time)} & \makecell[l]{ $(9+\veps)$-approx~\cite{capacitated:FPT_2019_vincent} \\ (in FPT time)} \\ \hline
3. & \makecell[l]{$l$-Diversity $k$-service} & \makecell[c]{$(2+\veps)$-approx.~\cite{L_diversity:2010_kcenter_Li_Jian}
} & \makecell[c]{ - } & \makecell[c]{ - } \\ \hline
4. & \makecell[l]{Chromatic $k$-service} & \makecell[c]{ - } & \makecell[c]{ - } & \makecell[c]{ - } \\ \hline
5. & \makecell[l]{Fault tolerant $k$-service} & \makecell[c]{$3$-approx. ~\cite{fault:kcenter_2000_khuller}
} & \makecell[c]{ 93-approx.~\cite{fault:kmedian_2014_non_uniform_haji_li_SODA} } & \makecell[c]{ - } \\ \hline
6. & \makecell[l]{Semi-supervised $k$-service} & \makecell[c]{-} & \makecell[c]{ - } & \makecell[c]{ - } \\ \hline
7. & \makecell[l]{Uncertain $k$-service \\ (assigned version) \\}  & \makecell[c]{$10$-approx.~\cite{uncertain:kcenter_2018_Alipour}
} & \makecell[l]{ $(6.35+\veps)$-approx.~\cite{uncertain:kcenter_2008_Cormode}\vspace*{0.5mm} \\
(for $C \subseteq L$)} & \makecell[l]{ $(74+\veps)$-approx.~\cite{uncertain:kcenter_2008_Cormode} \vspace*{0.5mm}\\
(for $C \subseteq L$)} \\ \hline
8. & \makecell[l]{Outlier $k$-service} & \makecell[c]{3-approx.~\cite{outlier:kcenter_2001_Charikar}} & \makecell[c]{ $(7+\veps)$-approx.~\cite{outlier:kmeans_2018_Ravishankar} } & \makecell[c]{ $(53+\veps)$-approx.~\cite{outlier:kmeans_2018_Ravishankar} } \\ \hline
 \multicolumn{5}{l}{} \\[-1.6em] \hline 
\multicolumn{5}{|l|}{\begin{tabular}[c]{@{}l@{}}For the Euclidean $k$-means and $k$-median (where $C \subseteq L = \mathbb{R}^{d}$), all the constrained problems \vspace*{-2mm} \\[-0.6em] have a FPT time $(1+\veps)$  approximation algorithm~\cite{Ding_and_Xu_15,bjk18}.\protect 
\end{tabular}}
 \\ \hline
\end{tabular}
\end{adjustbox}
\vspace*{0.5mm}
\caption{Known results for the constrained clustering problems. Note that for {\bf all} the above problems we obtain FPT time $\mathbf{(3+\veps)}$-approximation and $\mathbf{(9+\veps)}$-approximation for $k$-median and $k$-means respectively. For the special case when $C \subseteq L$ (a facility can be opened at any client location), we obtain FPT time $\mathbf{(2+\veps)}$-approximation and $\mathbf{(4+\veps)}$-approximation for $k$-median and $k$-means respectively.
}
\label{table:2}
\end{table}

Moreover, we can convert our algorithms to streaming algorithms using the technique of Goyal~\emph{et al.}~\cite{gjk19}. 
We basically require a streaming version of our algorithm for the list $k$-service problem and a streaming partition algorithm for the constrained $k$-service problem.
In Section~\ref{section:streaming}, we will design a constant-pass log-space streaming algorithm for the list $k$-service problem. 
We already know streaming partition algorithms for the various constrained $k$-service problems~\cite{gjk19}. 
This would give a streaming algorithm for all the problems given in Table~\ref{table:1} except for the $\ell$-diversity and chromatic $k$-service problems.  
Although \emph{single}-pass streaming algorithms are considered much useful, it is interesting to know that there is a constant-pass streaming algorithm for many constrained versions of the $k$-service problem.

\subsection{Related Work}
A unified framework for constrained $k$-means/$k$-median problems was introduced by Ding and Xu~\cite{Ding_and_Xu_15}.
Using this framework, they designed a PTAS (fixed $k$) for various constrained clustering problems. 
However, their study was limited to the Euclidean space where $C \subseteq L = \mathbb{R}^{d}$. 
Their results were obtained through an algorithm for the list version of the $k$-means problem (even though it was not formally defined in their work). 
The running time of this algorithm was $O(nd \cdot (\log n)^{k} \cdot 2^{poly(k/\veps)})$ and the list size was $(\log n)^{k} \cdot 2^{poly(k/\veps)}$.
Bhattacharya \emph{et al.}~\cite{bjk18} formally defined and studied the list $k$-service problem. 
They obtained a faster algorithm for the list problem with running time to $O(nd \cdot (k/\veps)^{O(\log(k/\veps))})$ and list size to $ (k/\veps)^{O(\log(k/\veps))}$ for the constrained $k$-means/$k$-median problem. 
Recently, Goyal \emph{et al,}~\cite{gjk19} designed streaming algorithms for various constrained versions of the problem by extending the previous work of Bhattacharya \emph{et al.}~\cite{bjk18}. 
In this paper, we study the problem in general metric spaces while treating $L$ and $C$ as separate sets. 
More importantly, we design an algorithm that gives a better approximation guarantee than the previously known algorithms by taking advantage of FPT running time. 
Moreover, for many problems, it is the first algorithm that achieves a constant-approximation in FPT running time. 
Please see Table~\ref{table:2} for the known results on the problem.
We have a detailed discussion on these problems in Section~\ref{sec:constrained_problems} of the Appendix.

In the introduction, we would specifically like to discuss the result of Addad \emph{et al.}~\cite{capacitated:FPT_2019_vincent} for the capacitated $k$-service problem. 
Their definition of the capacitated $k$-service problem is different from the one mentioned in Table~\ref{table:1} that we are consider. 
Following is their definition of the capacitated $k$-service problem.

\begin{definition}[Addad {\em et al.}~\cite{capacitated:FPT_2019_vincent}]
Given an instance $\mathcal{I} = (C,L,k,d,\ell)$ of the $k$-service problem and a capacity function $r : L \to \mathbb{Z}_{+}$, find a set $F \subseteq L$ of $k$ facilities such that the assignment cost $\sum_{j \in C} \min_{i \in F} d^{\ell}(j,i)$ is minimized, and no more than $r_{i}$ clients are assigned to a facility $i \in L$. 
\end{definition}
Note that in the above definition, a facility has a capacity of $r_{i}$ whereas in our definition a cluster has a capacity of $r_{i}$. 
This is an important difference since in our case we can assign a cluster of size $r_{i}$ to any facility location which is not possible by their definition. 
However, for the uniform capacities the problem definitions are equivalent and the results become comparable.
We match the approximation guarantees obtained Addad {\em et al.}~\cite{capacitated:FPT_2019_vincent} for the uniform case even though using very different techniques.

As we mentioned earlier, the unconstrained metric $k$-median problem is hard to approximate within a factor of $(1+2/e)$, and the metric $k$-means problem is hard to approximate within a factor of $(1+8/e)$. 
Surprisingly this lower bound persists even if we allow an FPT running time~\cite{vincent_hardness_2019}. 
However, this FPT lower bound is based on some recent complexity theoretic conjecture. 
The problem also has a matching upper bound algorithm with an FPT running time~\cite{vincent_hardness_2019}.  
So, the unconstrained $k$-means and $k$-median problems in the metric setting is fairly well understood.
On the other hand, our understanding of most constrained versions of the problem is still far from complete. 
We believe that our work is be an important step in understanding constrained problems in general metric spaces.


\subsection{Our Techniques}
In this section, we discuss our sampling based algorithm for list $k$-service problem.
As described earlier, an FPT algorithm for the list $k$-service problem gives an FPT algorithm for a constrained version of the $k$-service problem that has an efficient or FPT-time partition algorithm. 
Our sampling based algorithm is similar to the algorithm of Goyal \emph{et al.}~\cite{gjk19} that was specifically designed for the Euclidean setting.
However, working in a metric space instead of Euclidean space poses challenges as some of the main tools used for analysis in the Euclidean setting cannot be used in metric spaces.
We carefully devise and prove new sampling lemmas that makes the high-level analysis of Goyal {\em et al.}~\cite{gjk19} go through.
Our algorithm is based on $D^\ell$-sampling. Given a point set $F$, $D^\ell$-sampling a point from the client set $C$ w.r.t. center set $F$ means sampling using the distribution where the sampling probability of a client $x \in C$ is $\frac{\Phi(F, \{x\})}{\Phi(F, C)} = \frac{\min_{f \in F}d^{\ell}(f, x)}{\sum_{y \in C} \min_{f \in F}d^{\ell}(f, y)}$. In case $F$ is empty, then $D^\ell$-sampling is the same as uniform sampling.
Following is our algorithm for the list $k$-service problem:

\begin{Algorithm}[h]
\begin{framed}
\LK ~($C, L, k, d, \ell, \veps$) \vspace{1mm}\\
\hspace*{0.3in} {\bf Inputs}: $k$-service instance $(C,L,k,d,\ell)$ and accuracy $\veps$ \\
\hspace*{0.3in} {\bf Output}: A list $\mathcal{L}$, each element in $\mathcal{L}$ being a $k$-center set \vspace*{2mm}\\
\hspace*{0.3in} {\bf Constants}: $\beta = 4^{\ell-1} \cdot \left( \myfrac[2pt]{\ell^{\ell} \cdot 3^{\ell^{2}+4\ell+3}}{\veps^{ \, \ell+1}}+1 \right)$; $\gamma = \myfrac[2pt]{
\ell^{\ell} \cdot 3^{\ell^{2} + 5\ell+1}}{\veps^{\, \ell}}$; $\mathbf{\eta} = \myfrac[2pt]{\alpha \, \beta \, \gamma \, k \cdot 3^{\ell+2}}{\veps^{2}}$\\[4pt]
\hspace*{0.1in} (1) \ \ \ Run any $\alpha$-approximation algorithm for the \emph{unconstrained} $k$-service  \\
\hspace*{0.1in} \ \ \ \ \ \ \ \ instance $(C,C,k,d,\ell)$ and let $F$ be the obtained center-set. \\
\hspace*{0.6in} ({\it $k$-means++ ~\cite{kmeanspp} is one such algorithm.})\\
\hspace*{0.1in} (2) \ \ \ $\mathcal{L} \gets \emptyset$\\
\hspace*{0.1in} (3) \ \ \ Repeat $2^k$ times:\\
\hspace*{0.1in} (4)\hspace*{0.3in}  \ \ \ Sample a multi-set $M$ of $\eta k$ points from $C$ using $D^{\ell}$-sampling w.r.t. \\
\hspace*{0.1in} \hspace*{0.5in} \ \ \ center set $F$\\
\hspace*{0.1in} (5)\hspace*{0.3in}  \ \ \ $M \gets M \cup F$ \\
\hspace*{0.1in} (6)\hspace*{0.3in}  \ \ \ $T \gets \emptyset$ \\
\hspace*{0.1in} (7)\hspace*{0.3in} \ \ \ For every point $x$ in $M$:\\
\hspace*{0.1in} (8)\hspace*{0.9in} 
$T \gets T \cup \{k \text{ points in $L$ that are closest to $x$}\}$\\
\hspace*{0.1in} (9)\hspace*{0.3in} \ \ \ For all subsets $S$ of $T$ of size $k$:\\
\hspace*{0.1in} (10)\hspace*{0.84in} $\mathcal{L} \gets \mathcal{L} \cup \{ S\}$\\
\hspace*{0.1in} (11) \ \  return($\mathcal{L}$)
\end{framed}
\vspace*{-4mm}
\caption{Algorithm for the list $k$-service problem}
\label{algorithm:kmeans_intro}
\end{Algorithm}
Let us discuss some of the main ideas of the algorithm and its analysis.
Note that in the first step, we obtain a center-set $F \subseteq C$ which is an $\alpha$-approximation for the \emph{unconstrained} $k$-service instance $(C,C,k,d,\ell)$.
That is, $\Phi(F,C) \leq \alpha \cdot OPT(C,C)$. 
One such algorithm is the $k$-means++ algorithm~\cite{kmeanspp} that gives an $O(4^{\ell} \cdot \log k)$-approximation guarantee and a running time $O(nk)$.
Now, let us see how the center-set $F$ can help us.
Let us focus on any cluster $C_{i}$ of a target clustering $\C = \{C_{1}, \dotsc, C_{k}\}$. 
We note that the closest facility to a uniformly sampled client from any any client set $C_i$ provides a constant approximation to the optimal $1$-median/means cost for $C_i$ in expectation.
This is formalized in the next lemma.
This lemma (or a similar version) has been used in multiple other works in analysing sampling based algorithms.
\begin{lemma}~\label{lemma:exp_intro}
Let $S \subseteq C$ be any subset of clients and let $f^{*}$ be any center in $L$. 
If we uniformly sample a point $x$ in $S$ and open a facility at the closest location in $L$, then the following identity holds:
\[
\mathbb{E}[\Phi(t(x),S)] \leq 3^{\ell} \cdot \Phi(f^{*},S), \textrm{ where $t(x)$ is the closest facility location from $x$.}
\]
\end{lemma}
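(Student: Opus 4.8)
The plan is to unfold the expectation as a double sum over $S\times S$ and then control it with two triangle inequalities followed by a single convexity step. Since $x$ is sampled uniformly from $S$, $\mathbb{E}[\Phi(t(x),S)] = \frac{1}{|S|}\sum_{x\in S}\sum_{y\in S} d^{\ell}(t(x),y)$. For a fixed pair $x,y\in S$ I would first bound $d(t(x),y)\le d(t(x),x)+d(x,y)$, then use $d(t(x),x)\le d(x,f^{*})$, which holds because $t(x)$ is by definition the point of $L$ nearest to $x$ while $f^{*}\in L$, and finally $d(x,y)\le d(x,f^{*})+d(f^{*},y)$. Chaining the three inequalities gives the pointwise estimate $d(t(x),y)\le d(x,f^{*})+d(x,f^{*})+d(f^{*},y)$, which I would keep written as a sum of three nonnegative terms rather than simplifying it prematurely to $2\,d(x,f^{*})+d(f^{*},y)$.

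Next I would raise this to the $\ell$-th power and apply the power-mean (Jensen) inequality $(a+b+c)^{\ell}\le 3^{\ell-1}(a^{\ell}+b^{\ell}+c^{\ell})$, valid for $\ell\ge 1$ by convexity of $t\mapsto t^{\ell}$, to obtain $d^{\ell}(t(x),y)\le 3^{\ell-1}\bigl(2\,d^{\ell}(x,f^{*})+d^{\ell}(f^{*},y)\bigr)$. Summing over all $x,y\in S$, the first term contributes $\sum_{x,y}2\,d^{\ell}(x,f^{*})=2|S|\cdot\Phi(f^{*},S)$ and the second contributes $\sum_{x,y}d^{\ell}(f^{*},y)=|S|\cdot\Phi(f^{*},S)$; dividing by $|S|$, the right-hand side collapses to $3^{\ell-1}\cdot 3\cdot\Phi(f^{*},S)=3^{\ell}\cdot\Phi(f^{*},S)$, which is exactly the claim (and is tight when every point of $S$ is equidistant from $f^{*}$).

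I do not expect a genuine obstacle; the only subtle point is the \emph{order} of operations. If one groups $d(t(x),x)$ with $d(x,y)$ and applies a two-term convexity bound before finishing with the last triangle inequality, the constant degrades to $2^{\ell-1}(1+2^{\ell})$, which is $10$ rather than $9$ in the $k$-means case; keeping the three triangle pieces together and invoking convexity exactly once is what produces the clean factor $3^{\ell}$. The same template also yields the sharper bound mentioned after the lemma for the special case $C\subseteq L$: there every $x\in S$ already lies in $L$, so $t(x)=x$ and $d(t(x),x)=0$, leaving only $d(t(x),y)\le d(x,f^{*})+d(f^{*},y)$; applying $(a+b)^{\ell}\le 2^{\ell-1}(a^{\ell}+b^{\ell})$ and repeating the same counting replaces $3^{\ell}$ by $2^{\ell}$.
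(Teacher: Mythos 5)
Your proposal is correct and follows essentially the same route as the paper's proof: both unfold the expectation as a double sum, bound $d(t(x),x')$ via the path $t(x)\to x\to f^{*}\to x'$ together with $d(t(x),x)\le d(x,f^{*})$ (the definition of $t(x)$), apply the three-term power-mean bound with factor $3^{\ell-1}$, and count the three contributions of $|S|\cdot\Phi(f^{*},S)$. The only cosmetic difference is that the paper substitutes $d^{\ell}(t(x),x)\le d^{\ell}(f^{*},x)$ after raising to the $\ell$-th power rather than before, which changes nothing.
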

Unfortunately, we cannot uniformly sample from $C_i$ directly since $C_{i}$ is not known to us.
Given this, our main objective should be to use $F$ to try to uniformly sample from $C_{i}$ so that we could achieve a constant approximation for $C_{i}$.
Let us do a case analysis based on the distance of points in $C_i$ from the nearest point in $F$.
Consider the following two possibilities:
The first possibility is that the points in $C_{i}$ are close to $F$. 
If this is the case, we can uniformly sample a point from $F$ instead of $C_{i}$. 
This would incur some extra cost. 
However, the cost is small and can be bounded. 
To cover this first possibility, the algorithm adds the entire set $F$ to the set of sampled points $M$ (see line (5) of the algorithm).
The second possibility is that the points in $C_{i}$ are far-away from $F$. 
In this case, we can $D^{\ell}$-sample the points from $C$.
Since the points in $C_{i}$ are far away, the sampled set would contain a good portion of points from $C_{i}$ and the points will be {\em almost} uniformly distributed. 
We will show that almost uniform sampling is sufficient to apply Lemma~\ref{lemma:exp_intro} on $C_{i}$. 
However, we would have to sample a large number of points to boost the success probability. 
This requirement is taken care of by line (4) of the algorithm.
Note that we may need to use a hybrid approach for analysis since the real case may be a combination of the first and second possibility.
Most of the ingenuity of this work lies in formulating and proving appropriate sampling lemmas to make this hybrid analysis work.

To apply lemma~\ref{lemma:exp_intro}, we need to fulfill one more condition. 
We need the closest facility location from a sampled point. This requirement is handled by lines (7) and (8) of the algorithm. However, note that the algorithm picks $k$-closest facility locations instead of just one facility location. We will show that this step is crucial to obtain a \emph{hard-assignment} solution for the problem. 
Finally, the algorithm adds all the potential center sets to a list $\mathcal{L}$ (see line (9) and (10) of the algorithm). The algorithm repeats this procedure $2^{k}$ times to boost the success probability (see line (3) of the algorithm).
We will show the following result from which our main theorem (Theorem~\ref{theorem:list_3_approx}) trivially follows.

\begin{theorem} \label{theorem:list_k_service_intro}
Let $0 < \veps \leq 1$.
Let $(C,L,k,d,\ell)$ be any $k$-service instance and let $\mathcal{C} = \{C_{1}, C_{2},\dotsc,C_{k}\}$ be any arbitrary clustering of the client set.
The algorithm \\
\LK($C,L,k,d,\ell,\veps$), with probability at least $1/2$, outputs a list $\mathcal{L}$ of size $(k /\veps)^{O(k \, \ell^{\,2})}$, such that there is a $k$ center set $S \in \mathcal{L}$ in the list such that
\[
    \Psi (S,\C) \leq (3^{\ell} + \veps) \cdot \Psi^{*}(\C).
\]
Moreover, the running time of the algorithm is $O \left( n \cdot (k /\veps)^{O(k \, \ell^{\,2})} \right)$. For the special case of $C \subseteq L$, the approximation guarantee is $(2^{\ell}+\veps)$.
\end{theorem}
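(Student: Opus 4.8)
The plan is to fix an optimal solution for the (unknown) target clustering $\C$ and show that the enumeration in \LK~reproduces it up to the claimed factor. Let $F^{*}=\{f_{1}^{*},\dotsc,f_{k}^{*}\}$ be a $k$-center-set attaining $\Psi^{*}(\C)$, with $f_{i}^{*}$ serving $C_{i}$; write $\Delta_{i}:=\Phi(f_{i}^{*},C_{i})$ and $OPT:=\Psi^{*}(\C)=\sum_{i}\Delta_{i}$. Since a fixed iteration of the repeat loop puts \emph{every} size-$k$ subset of its set $T$ into $\mathcal{L}$, and $\Psi(S,\C)\le\sum_{i}\Phi(s_{i},C_{i})$ whenever $S=\{s_{1},\dotsc,s_{k}\}$ assigns $s_{i}$ to $C_{i}$, it suffices to prove that with probability at least $1/2$ over the whole run some iteration produces a $T$ that contains $k$ \emph{distinct} facilities $s_{1},\dotsc,s_{k}$ with $\sum_{i}\Phi(s_{i},C_{i})\le(3^{\ell}+\veps)\cdot OPT$. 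Also note that, because $F$ is an $\alpha$-approximation for the unconstrained instance $(C,C,k,d,\ell)$ and a nearest-client replacement argument gives $OPT(C,C,k)\le 2^{\ell}\,OPT$, we have $\Phi(F,C)\le\alpha\cdot 2^{\ell}\cdot OPT$, which caps the total cost that the sampling step must ``cover''.

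The core of the argument is a per-cluster analysis inside one iteration, following the hybrid strategy sketched in Section ``Our Techniques''. For each $C_{i}$ I would split $C_{i}=C_{i}^{\mathrm{near}}\cup C_{i}^{\mathrm{far}}$ according to whether a point's distance to $F$ is below or above a threshold tied to $\Delta_{i}/|C_{i}|$ and $\veps$, and treat two regimes. If $C_{i}^{\mathrm{near}}$ carries all but an $\veps$-fraction of the relevant mass, $C_{i}$ is essentially served by its closest point $\phi_{i}\in F$; since $F\subseteq M$ (line 5) and $T$ contains the $k$ facilities of $L$ nearest to $\phi_{i}$ (lines 7--8), I would bound $\Phi(\phi_{i}',C_{i})\le(3^{\ell}+\veps)\Delta_{i}$ for a suitable $\phi_{i}'$ among those by combining the ``near'' assumption with a Lemma~\ref{lemma:exp_intro}-style routing inequality. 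If instead $C_{i}^{\mathrm{far}}$ is heavy, its points contribute an $\Omega(\veps/k)$-fraction of $\Phi(F,C)$, so a single $D^{\ell}$-sample from $C$ w.r.t.\ $F$ lands in $C_{i}^{\mathrm{far}}$ with probability $\Omega(\veps/k)$; and --- this is the new sampling lemma to isolate and prove --- conditioned on landing there, the sample is within a $(1\pm\veps)$ factor of uniform on $C_{i}^{\mathrm{far}}$, because those points all have nearly equal $D^{\ell}$-weights. Drawing $\eta k$ samples then harvests, with high probability, $\mathrm{poly}(3^{\ell},1/\veps,\log k)$ near-uniform samples from each heavy-far cluster, and applying Lemma~\ref{lemma:exp_intro} to $C_{i}^{\mathrm{far}}$ on one such sample $x$ (routed to its nearest facility $t(x)$, which lies in the set of $k$ closest facilities added to $T$), together with a Markov-plus-batch-amplification bound, yields a sampled point with $\Phi(t(x),C_{i})\le(3^{\ell}+\veps)\Delta_{i}$ with failure probability below $1/(2k)$.

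To assemble the iteration I would process clusters in the order in which the $D^{\ell}$-sampling analysis conditions on them, so that a union bound over the (conditional) per-cluster failures shows that one iteration yields a witness point for every cluster with probability at least $2^{-k}$ (the conditioning costs a constant factor per cluster, which is exactly why line 3 repeats $2^{k}$ times, amplifying overall success above $1-1/e>1/2$). Distinctness is then handled by the ``$k$ closest facilities'' feature of line 8: processing clusters one at a time, for $C_{i}$ pick $s_{i}$ to be the facility among the $k$ nearest to the witness $x_{i}$ that is not already used; fewer than $k$ are forbidden so this is well-defined, and since $d(x_{i},s_{i})$ is controlled by the distance to the $k$-th nearest facility (itself bounded via $F^{*}$), $\Phi(s_{i},C_{i})$ remains within budget after re-tuning the internal accuracy. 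Collecting constants, $|T|=O(\eta k^{2})$ so $|\mathcal{L}|=\binom{|T|}{k}\cdot 2^{k}=(k/\veps)^{O(k\ell^{2})}$, the $\ell^{2}$ in the exponent coming from the $3^{\ell^{2}}$-type factors in $\beta,\gamma,\eta$, and the running time is $n$ (for the $k$-means++ call and the $D^{\ell}$-sampling / nearest-facility computations) times $|\mathcal{L}|$. The special case $C\subseteq L$ follows because then every client is a facility, $t(x)=x$, and the routing step of Lemma~\ref{lemma:exp_intro} is free, improving $3^{\ell}$ to $2^{\ell}$ throughout; Theorem~\ref{theorem:list_3_approx} is then immediate.

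The main obstacle, as the authors themselves flag, is the hybrid near/far analysis: a cluster need not be cleanly close to or far from $F$, so the two mechanisms (adding $F$ to $M$ versus $D^{\ell}$-sampling) must be interleaved, and one must show that a near-uniform sample of only $C_{i}^{\mathrm{far}}$ still certifies a good center for all of $C_{i}$. Making this quantitative while keeping the final factor at exactly $3^{\ell}+\veps$ forces the internal accuracy down to $\Theta(\veps/3^{\ell})$, which is what inflates $\beta,\gamma,\eta$ by the $3^{\ell^{2}}$-type factors; and proving the approximate-uniformity sampling lemma in a general metric space --- where the Euclidean second-moment identities used by Goyal \emph{et al.} are unavailable --- is the technical heart of the argument.
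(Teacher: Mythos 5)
Your overall skeleton matches the paper's: use the $\alpha$-approximate set $F$ both as a proxy for the ``near'' mass (line (5)) and as the basis for $D^{\ell}$-sampling the ``far'' mass, apply a Lemma~\ref{lemma:exp_intro}-type routing bound, repeat $2^{k}$ times, and fix distinctness through the $k$-nearest-facility sets. But the step you single out as the new sampling lemma is false as stated, and it is exactly where the paper does something different. Conditioned on a $D^{\ell}$-sample landing in $C_i^{f}$, the distribution is \emph{not} within a $(1\pm\veps)$ factor of uniform: the far points only satisfy a \emph{lower} bound $d^{\ell}(x,c(x))\ge R^{\ell}$, and their $D^{\ell}$-weights can differ by arbitrary factors, so ``nearly equal weights'' fails. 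The paper never needs two-sided near-uniformity; it proves the one-sided bound $\textbf{Pr}[x=p]\ge \veps/(\alpha\beta\gamma k|C_i|)$ for every $p\in C_i^{f}$ (Lemma~\ref{lemma:d2_sample}) and combines it with a Markov argument showing that an $\veps/3^{\ell+2}$-fraction of a suitable multiset is ``good,'' which is enough for existence after $\eta$ draws. Relatedly, you apply the routing lemma only to $C_i^{f}$ and explicitly leave open how a good center for $C_i^{f}$ certifies all of $C_i$; the paper resolves precisely this by averaging over the hybrid multiset $M_i = C_i^{f}\cup\{c(x): x\in C_i^{n}\}$ and proving Lemma~\ref{lemma:radius}, i.e.\ $\Phi(F,C_i^{n})\le O(\veps^{\ell+1}/3^{\ell^2})\cdot\Delta(C_i)$. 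That lemma hinges on a two-level decomposition you do not have: clusters are first split into low/high cost \emph{with respect to $F$} (threshold $\tfrac{\veps}{\alpha\gamma k}\Phi(F,C)$), and only high-cost clusters are split by the radius $R^{\ell}=\Phi(F,C_i)/(\beta|C_i|)$. Your single split with a threshold tied to $\Delta_i/|C_i|$ supports neither the bound on $\Phi(F,C_i^{n})$ nor the per-point sampling lower bound, which needs $\Phi(F,C_i)\ge\tfrac{\veps}{\alpha\gamma k}\Phi(F,C)$. The low-cost/``near-heavy'' clusters are likewise not handled by the single closest point $\phi_i\in F$; the paper's Lemma~\ref{lemma:close_cluster} averages over the projection multiset $\{c(x):x\in C_i\}$ and only asserts that \emph{some} point of $F$ works.

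Two smaller discrepancies: distinctness needs no ``re-tuning of the internal accuracy'' --- Property-I is stated for any facility $t(s_i)$ with $d(s_i,t(s_i))\le d(s_i,f_i^{*})$, and the {\tt FindFacilities} argument shows that if $f_i^{*}\notin T(s_i)$ then \emph{every} member of $T(s_i)$ satisfies this, so an unused one can be taken at no extra cost. And the case $C\subseteq L$ is not immediate from ``$t(x)=x$'': the witnesses $s_1,\dotsc,s_k$ may coincide, so one is forced back to the $k$-nearest-facility sets, and the paper must reprove the expectation lemmas with the proxies $u_i(\cdot)$ (facilities at least as close to $s_i$ as the nearest point of $C_i$, with $f_i^{*}$ replaced by $p_i=\argmin_{y\in C_i}d(y,s_i)$ in the hard-assignment step) to retain the $2^{\ell}$ factor.
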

The details of the analysis is given in Appendix~\ref{section:list_k_service}.

\subsection{A Matching Lower Bound on approximation}
We gave sampling based algorithms and showed an approximation guarantee of $(3^{\ell} + \veps)$ (and $(2^\ell + \veps)$ for the special case $C \subseteq L$).
In this subsection, we show that our analysis of the approximation factor is tight.
More specifically, we will show that our algorithm does not provide better than ($3^{\ell}-\delta'$) approximation guarantee for arbitrarily small $\delta' > 0$ (and $2^{\ell} - \delta'$ for the case $C \subseteq L$). 
To show this, we create a {\em bad instance} for the problem in the following manner. We create  the instance using an undirected weighted graph where $C \cup L$ is the vertex set of the graph and the shortest weighted path between two vertices defines the distance metric. The set $C$ is partitioned into the sub-sets $C_{1}, C_{2}, \dotsc, C_{k}$, and $L$ is partitioned into the sub-sets $L_{1}, L_{2}, \dotsc, L_{k}$.
The sub-graphs over $C_{1} \cup L_{1}, C_{2} \cup L_{2},\dotsc$, and  $C_{k} \cup L_{k}$ are all identical to each other. 
Let us describe the sub-graph over vertex set $C_{i} \cup L_{i}$ in general. 
In this sub-graph, all the clients are connected to a common facility location $f_{i}^{*}$ with an edge of unit weight. 
Also, every client is connected to a distinct set of $k$ facility locations with an edge of weight $(1-\delta)$. 
We denote this set by $T(x)$ for a client $x \in C_{i}$.
Figure~\ref{fig:undirected_graph} shows the complete description of this sub-graph. 
Lastly, all pairs of sub-graphs $C_{i} \cup L_{i}$ and $C_{j} \cup L_{j}$ are connected with an edge $(f_{i}^{*},f_{j}^{*})$ of weight $\Delta \gg |C|$. 
This completes the construction of the bad instance.

Let us define a target clustering on the instance. Consider the unconstrained $k$-service problem. It is easy to see that $\C = \{ C_{1}, C_{2}, \dotsc,C_{k} \}$ is an optimal clustering for this instance.
The optimal cost of a cluster $C_{i}$ is $\Phi(f_{i}^{*},C_{i}) = |C_{i}|$, and the optimal cost of the entire instance is $OPT = \sum_{i} |C_{i}| = |C|$.

\begin{figure}[h]
    \centering
    \includegraphics[scale=0.8]{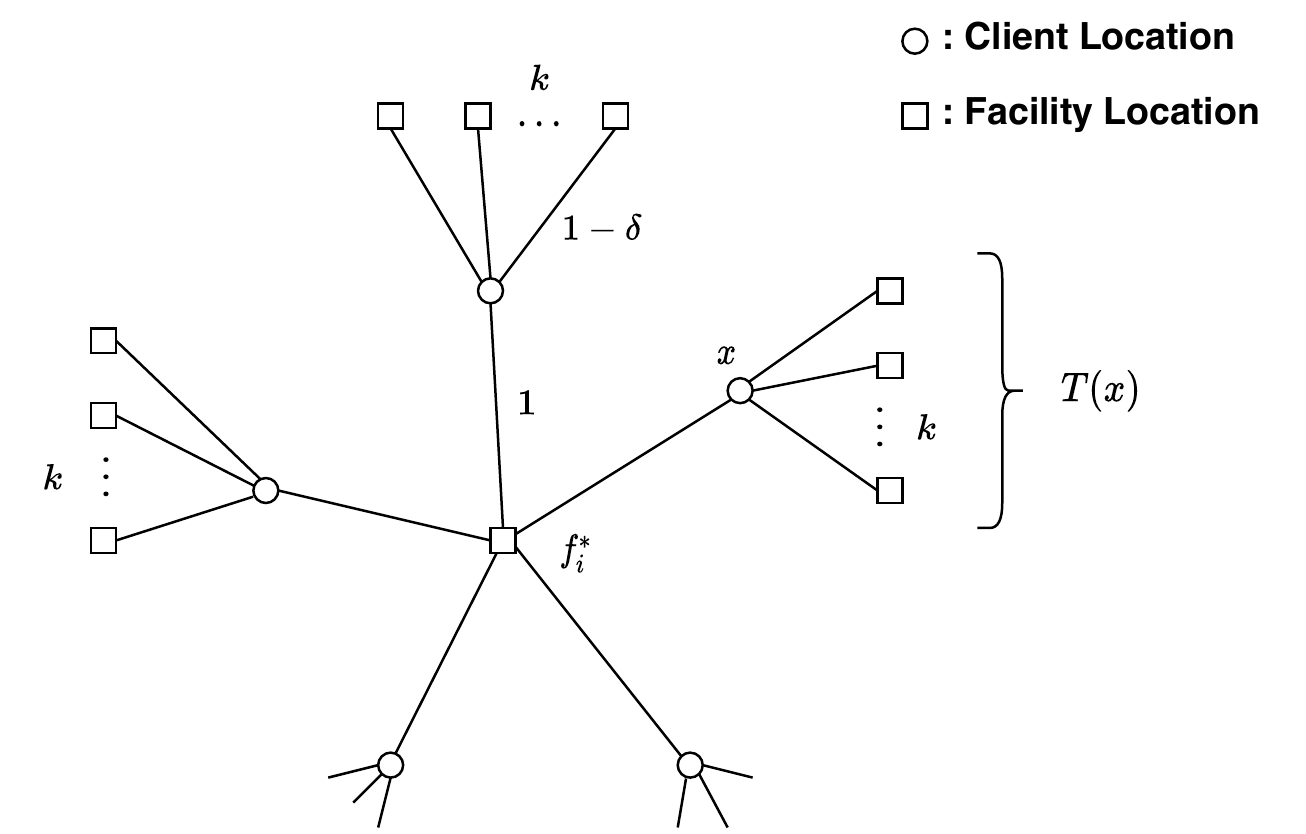}
    \vspace*{1mm}
    \caption{An undirected weighted sub-graph on $C_{i} \cup L_{i}$.}
    \label{fig:undirected_graph}
\end{figure}

Now, we will show that any list $\mathcal{L}$ produced by the algorithm \LK ~does not contain any center-set that can provide better than $(3^{\ell} - \delta')$-approximation for $\C$. 
To show this, let us examine every center-set in the list $\mathcal{L}$ produced by \LK.
Note that the set $T$ obtained in line (10) of the algorithm does not contain any optimal facility location $f_{i}^{*}$ because $f_{i}^{*}$ does not belong to $T(x)$. 
Therefore, no center set in the list contains any of the optimal facility locations $\{f_1^*, ..., f_k^*\}$. 
Let us evaluate the clustering cost corresponding to every center set in the list. 
Let $F = \{f_{1},f_{2},\dotsc,f_{k}\}$ be a center-set in the list. We have two possibilities for the facilities in $F$. The first possibility is that, there are at least two facilities in $F$, that belongs to the same sub-graph $C_{i} \cup L_{i}$. In this case, the cost of the target clustering is  $\Psi(F,\C) > \Delta \gg OPT$.  
So in this case, $F$ gives an unbounded clustering cost. 
Let us consider the second possibility that all facilities in $F$ belong to different sub-graphs. 
Without loss of generality, we can assume that $f_{i} \in L_{i}$. 
Since $f_{i}$ can not be the optimal facility location, we can further assume that $f_{i} \in T(x)$ for some $x \in C_{i}$. The cost of a cluster in this case is $\Phi(f_{i},C_{i}) = (3-\delta)^{\ell}(|C_{i}|-1) + (1-\delta)^{\ell} > (3-\delta)^{\ell}(|C_{i}|-1)$ . Hence, the overall cost of the instance is $\Psi(F,\C) > (3-\delta)^{\ell} \cdot ( |C| - k) \geq (3-\delta)^{\ell} \cdot |C| - 3^{\ell} \, k \geq (3^{\ell}-\delta') \cdot |C|$, for $\delta' = 3^{\ell-1} \cdot \ell \,\delta + \fdfrac{3^{\ell}k}{ |C|}$. Therefore, we can say that list does not contain any center set that can provide  better than ($3^{\ell} - \delta'$) approximation guarantee for $\C$.

\begin{theorem}
For any $0 <\delta' \leq 1$, there are instances of the $k$-service problem for which the algorithm \LK$(C,L,k,d,\ell,\veps)$ does not provide better than $(3^{\ell}-\delta')$ approximation guarantee.
\end{theorem}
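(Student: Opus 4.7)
The plan is to follow the bad-instance construction already sketched in the text and turn it into a clean proof. I will build a single undirected weighted graph whose shortest-path metric is the metric $(\X,d)$, split the vertex set into $k$ identical ``gadgets,'' and then show that Algorithm \LK\ structurally cannot pick any optimal center, which forces every center-set on the list to pay roughly $3^{\ell}$ on each client.

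First I would describe the construction precisely. For each $i\in[k]$, the gadget on $C_i\cup L_i$ contains a distinguished facility $f_i^{*}$ joined to every client $x\in C_i$ by an edge of unit weight; additionally, for each client $x\in C_i$ a \emph{distinct} $k$-subset $T(x)\subseteq L_i\setminus\{f_i^{*}\}$ is attached to $x$ by edges of weight $(1-\delta)$, and the sets $T(x)$ are chosen disjoint across $x$ (pad $L_i$ with enough points to make this possible). The gadgets are glued together by edges $(f_i^{*},f_j^{*})$ of weight $\Delta \gg |C|$. Shortest paths then yield: $d(x,f_i^{*})=1$, $d(x,t)=1-\delta$ for $t\in T(x)$, $d(x,t')=3-\delta$ for $t'\in T(y)$ with $y\in C_i\setminus\{x\}$ (the path $x\to y\to t'$ of length $1+1+(1-\delta)$), and any path leaving the gadget costs at least $\Delta$. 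Taking the target clustering $\C=\{C_1,\dots,C_k\}$, the cost of the $k$-center-set $\{f_1^{*},\dots,f_k^{*}\}$ is $|C|$, so $\Psi^{*}(\C)\le|C|$.

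Next I would argue that no $f_i^{*}$ ever enters the list. Inspect line~(8) of \LK: for each sampled $x$, only the $k$ points of $L$ nearest to $x$ are added to $T$. By the construction, for any $x\in C_i$ the $k$ closest locations in $L$ to $x$ are exactly the $k$ members of $T(x)$, all at distance $(1-\delta)<1$, while $f_i^{*}$ sits at distance exactly $1$ and every facility in $T(y)$ for $y\neq x$ sits at distance $3-\delta$. Hence the set $T$ constructed by the algorithm is contained in $\bigcup_{i}\bigcup_{x\in C_i} T(x)\subseteq L\setminus\{f_1^{*},\dots,f_k^{*}\}$, and therefore so is every $k$-subset $S$ it enumerates in line~(9). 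I would also sample the same observation against $F$: by a separate check one sees that the approximation algorithm invoked in line~(1) operates on $(C,C,k,d,\ell)$, so $F\subseteq C$ and hence $F$ contributes no $f_i^{*}$ either.

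Finally I would do the case analysis for any $S=\{g_1,\dots,g_k\}\in\mathcal{L}$. If two $g_a,g_b$ lie in the same gadget $L_i$, then some other gadget $L_j$ contains no center of $S$, and serving a client in $C_j$ costs at least $\Delta$, giving $\Psi(S,\C)\ge\Delta\gg|C|\ge\Psi^{*}(\C)$, which is worse than any fixed ratio. Otherwise $|S\cap L_i|=1$ for every $i$; write $g_i\in L_i$, and note $g_i\in T(x_i)$ for some $x_i\in C_i$. Then under the cluster-to-center matching $C_i\mapsto g_i$, each cluster contributes $(1-\delta)^{\ell}$ for $x_i$ and $(3-\delta)^{\ell}$ for each of the remaining $|C_i|-1$ points, yielding
\[
\Psi(S,\C)\ \ge\ \sum_{i=1}^{k}\bigl[(3-\delta)^{\ell}(|C_i|-1)+(1-\delta)^{\ell}\bigr]\ \ge\ (3-\delta)^{\ell}\bigl(|C|-k\bigr).
\]
A routine expansion of $(3-\delta)^{\ell}$ and the choice $\delta'=3^{\ell-1}\ell\,\delta+\tfrac{3^{\ell}k}{|C|}$ (making $|C|$ arbitrarily large relative to $k$, and $\delta$ arbitrarily small) gives $\Psi(S,\C)\ge (3^{\ell}-\delta')\,|C|\ge (3^{\ell}-\delta')\,\Psi^{*}(\C)$. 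Taking the minimum over $S\in\mathcal{L}$ on the left finishes the proof.

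The only delicate point is ensuring that the ``three-hop'' distance $3-\delta$ is actually realized by the shortest-path metric, i.e., that no shorter route $x\to t'$ exists via $f_i^{*}$ or via another gadget; the choice $\Delta\gg|C|$ kills inter-gadget shortcuts, and the only intra-gadget alternative is the path $x\to f_i^{*}\to y\to t'$ of length $1+1+1+(1-\delta)=4-\delta>3-\delta$. This metric-consistency check is the one place I expect to have to be careful; once it is in place, the rest of the argument is a direct cost computation.
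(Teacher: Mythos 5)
Your proposal follows the paper's own bad-instance construction and case analysis essentially verbatim, and the core argument is correct: you build the same gadget graph, observe that line~(8) of \LK\ can only pull members of $\bigcup_x T(x)$ into $T$ because each $T(x)$ is strictly closer to $x$ than $f_i^*$ is, add the useful explicit check that $F\subseteq C$ (which the paper leaves implicit), and then run the same two-case cost computation with the same choice of $\delta'$.

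One small slip in your ``safety check'' paragraph: the path $x\to f_i^*\to y\to t'$ has three edges of weights $1$, $1$, $(1-\delta)$, so its length is $3-\delta$, not $4-\delta$. In fact this is the \emph{same} path that realizes the distance $3-\delta$ you used in your second paragraph (there is no direct $x\to y$ edge in the gadget, so ``$x\to y\to t'$'' must already pass through $f_i^*$); you have not exhibited a distinct alternative route, you have re-listed the realizing path with a miscount. The conclusion you want---that $d(x,t')=3-\delta$ is the true shortest-path distance---still holds, because the only competing routes go through another gadget (cost $\ge\Delta$) or backtrack through a leaf, so the error is harmless, but as written the check does not verify what you claim it verifies.
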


\noindent Now, let us examine the same bad instance when we have the flexibility to open a facility at a client location. 
In this case, we have a third possibility that $F = \{ f_{1}, f_{2}, \dotsc, f_{k} \}$  such that $f_{i}$ is some client location in $C_{i}$. The cost of a cluster in this case is $\Phi(f_{i},C_{i}) = 2^{\ell} \cdot (|C_{i}|-1)$  and the overall cost the instance is $\Psi(F,\C) = 2^{\ell}  \cdot |C| - 2^{\ell} \cdot k = (2^{\ell} - \delta' ) \cdot |C|$, for $\delta' = 2^{\ell} \cdot k/|C|$. 
So for the special case $C \subseteq L$, we obtain the following theorem.

\begin{theorem}
For any $0 <\delta' \leq 1$, there are instances of the $k$-service problem (with $C \subseteq L$), for which the algorithm~\LK$(C,L,k,d,\ell,\veps)$ does not provide better than $(2^{\ell}-\delta')$ approximation guarantee.
\end{theorem}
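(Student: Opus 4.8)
The plan is to reuse, essentially verbatim, the ``bad instance'' built above for the general $(C,L)$ setting, but now take the full vertex set $C\cup L$ so that every client vertex is also a feasible facility location, i.e.\ $C\subseteq L$ holds by construction. The target clustering is again $\C=\{C_1,\dots,C_k\}$, with optimum $OPT=\Phi(F^*,C)=|C|$ witnessed by the centers $F^*=\{f_1^*,\dots,f_k^*\}$ that are unavailable to the algorithm. What changes relative to the earlier theorems is only the case analysis of which $k$-center-sets can appear in the list $\mathcal{L}$ returned by \LK, since opening a facility at a client location is now an option; so the bulk of the work is re-examining lines (4)--(10) on this instance.

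First I would re-establish that $\mathcal{L}$ still contains no optimal center $f_i^*$. Every set that is ever pushed into $\mathcal{L}$ is a size-$k$ subset of some $T$ built in lines (7)--(8): for each $x$ placed in $M$ — either a $D^\ell$-sampled client or a member of the $\alpha$-approximate set $F$, which lies in $C$ because it solves the instance $(C,C,k,d,\ell)$ — we add the $k$ locations of $L$ nearest to $x$. For a client $x\in C_i$ those $k$ nearest locations are $x$ itself (at distance $0$) together with $k-1$ of the $k$ facilities of $T(x)$ (at distance $1-\delta$); the vertex $f_i^*$ is at distance $1>1-\delta$ and hence is never among them. So, exactly as in the general case, no $S\in\mathcal{L}$ contains any optimal center, and the fact that client locations now land in $T$ does not affect this conclusion.

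Next I would bound $\Psi(S,\C)$ for an arbitrary $S=\{f_1,\dots,f_k\}\in\mathcal{L}$. If two of the $f_i$ lie in the same sub-graph $C_i\cup L_i$, then some cluster $C_j$ has no center within distance $\ll\Delta$, so $\Psi(S,\C)>\Delta\gg OPT$. Otherwise one center lies in each sub-graph, say $f_i\in C_i\cup L_i$, and it is either a client of $C_i$ or one of the weight-$(1-\delta)$ facilities in $T(y)$ for some $y\in C_i$. In the first case $\Phi(f_i,C_i)=2^\ell(|C_i|-1)$, since any two distinct clients of $C_i$ are at distance $2$ (shortest path through $f_i^*$); in the second case $\Phi(f_i,C_i)=(3-\delta)^\ell(|C_i|-1)+(1-\delta)^\ell>2^\ell(|C_i|-1)$ for $\delta<1$. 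Hence in every bounded case $\Psi(S,\C)\ge\sum_i 2^\ell(|C_i|-1)=2^\ell(|C|-k)$.

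Combining the two cases, every $S\in\mathcal{L}$ satisfies $\Psi(S,\C)\ge 2^\ell(|C|-k)=\bigl(2^\ell-2^\ell k/|C|\bigr)\cdot OPT$, so choosing the instance with $|C|\ge 2^\ell k/\delta'$ forces the approximation factor to be at least $2^\ell-\delta'$, as claimed. I expect the only delicate point to be in the third paragraph: one must confirm that enlarging $T$ with $x$ itself — now unavoidable because $C\subseteq L$ — does not inadvertently admit $f_i^*$ (or any center letting the algorithm reach cost $|C|$) into the list, and that the within-sub-graph shortest-path distances are genuinely $0,\ 1-\delta,\ 1,\ 2,\ 3-\delta$ rather than something smaller via an unanticipated route through $T(\cdot)$ vertices or the inter-cluster edges; the remaining steps are just the arithmetic already sketched.
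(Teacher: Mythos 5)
Your proposal is correct and follows essentially the same route as the paper: it reuses the same bad instance, observes that the lists still exclude the optimal centers $f_i^*$, and adds the third case of a center opened at a client location of $C_i$, which costs $2^\ell(|C_i|-1)$ per cluster and yields the $(2^\ell-\delta')$ bound after the same arithmetic. The extra care you take in re-verifying the contents of $T$ when $C\subseteq L$ is a sound (and slightly more explicit) version of what the paper leaves implicit.
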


\subsection{Streaming Algorithms}
\label{section:streaming}
In this subsection, we discuss how to obtain constant-pass streaming algorithm using the ideas of Goyal \emph{et al.}~\cite{gjk19}.
Our offline algorithm has two main components, namely: the list $k$-service algorithm and partition algorithm. The list $k$-service procedure is common to all constrained versions of the problem. 
However, the partition algorithm differs for different constrained versions.
First, let us convert \LK($C,L,k,d,\ell$) algorithm to a streaming algorithm.
\begin{framed}
\begin{enumerate}
    \item In the first pass, we run a streaming $\alpha$-approximation algorithm for the instance $(C,C,k,d,\ell)$. For this, we can use the streaming algorithm of Braverman {\it{et al.}}~\cite{bmo11}. The algorithm gives a constant-approximation with the space complexity of $O(k \log n)$.
    \item In the second pass, we perform the $D^{\ell}$-sampling step using the \emph{reservoir sampling} technique~\cite{vitter85}.
    \item In the third pass, we find the $k$-closest facility locations for every point in $M$. 
\end{enumerate}
\end{framed}

\noindent This gives us the following result.
\begin{theorem}
There is a 3-pass streaming algorithm for the list $k$-service problem, with the running time of $O(n \cdot f(k,\veps))$ and space complexity of $f(k,\veps) \cdot \log n$, where $f(k,\veps) = (k/\veps)^{O(k \ell^{2})}$.
\end{theorem}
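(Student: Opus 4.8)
The plan is to establish the three-pass streaming algorithm for the list $k$-service problem by faithfully implementing the offline algorithm \LK~in the streaming model, pass by pass, and verifying that the space and running time fit the claimed bounds. The key observation is that \LK~touches the stream $C$ in only three fundamentally sequential ways: (i) it computes an $\alpha$-approximate unconstrained solution $F$ on $(C,C,k,d,\ell)$; (ii) it performs $D^\ell$-sampling with respect to $F$; and (iii) it computes, for each point in the sampled multiset $M$, its $k$ nearest facility locations in $L$. Each of these is handled in one pass.

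First I would run, in pass one, the streaming constant-factor approximation algorithm of Braverman \emph{et al.}~\cite{bmo11} on the instance $(C,C,k,d,\ell)$; this outputs the center set $F$ using $O(k\log n)$ space. The value $\alpha$ here is the constant from that algorithm, which only affects $\eta$ (and hence $f(k,\veps)$) by a constant factor. Second, in pass two, with $F$ in hand one can compute each client's sampling weight $\Phi(F,\{x\}) = \min_{f\in F} d^\ell(f,x)$ on the fly, and draw the multiset $M$ of $\eta k$ points under the $D^\ell$-distribution using weighted reservoir sampling~\cite{vitter85}, maintaining only the $\eta k$ current samples and the running total weight — space $O(\eta k \cdot \log n) = f(k,\veps)\cdot\log n$ (each stored sample is a point, described in $O(\log n)$ bits). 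Third, after setting $M \gets M \cup F$, in pass three I stream over $L$ and, for each of the $O(\eta k)$ points $x \in M$, maintain the $k$ closest locations seen so far; this yields $T$, still within $f(k,\veps)\cdot\log n$ space. The post-processing of line~(9)–(10) — enumerating all size-$k$ subsets of $T$ and repeating the whole thing $2^k$ times — happens on the stored set $T$ (of size at most $|M|\cdot k = O(\eta k^2)$) and needs no further stream access; it contributes $\binom{|T|}{k} \le (k/\veps)^{O(k\ell^2)}$ to the list size and running time but not to the pass count. Since correctness of the list (the $(3^\ell+\veps)$-approximation guarantee and list size $(k/\veps)^{O(k\ell^2)}$) is inherited verbatim from Theorem~\ref{theorem:list_k_service_intro}, the streaming version produces exactly the same distribution over lists.

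The main obstacle I anticipate is not conceptual but bookkeeping: one must be careful that weighted reservoir sampling in pass two genuinely reproduces $D^\ell$-sampling, in particular that it handles the corner case where $F$ is empty (degenerating to uniform sampling) and that the $2^k$ independent repetitions in line~(3) can be carried out \emph{in parallel within the same pass} rather than requiring $2^k$ passes — this works because each repetition is an independent weighted reservoir sampler, and $2^k \cdot \eta k = f(k,\veps)$ samples still fit in $f(k,\veps)\cdot\log n$ space. I would also note that all three passes can in principle be overlapped only partially (pass two needs $F$ from pass one, pass three needs $M$ from pass two), so three is the right pass count. Assembling these observations, the running time is $O(n)$ per pass times the number of samples maintained, i.e. $O(n\cdot f(k,\veps))$, and the space is $f(k,\veps)\cdot\log n$, which is exactly the claimed bound.
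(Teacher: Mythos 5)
Your proposal is correct and follows essentially the same route as the paper: pass one runs the streaming constant-factor algorithm of Braverman \emph{et al.}, pass two implements $D^{\ell}$-sampling via (weighted) reservoir sampling, and pass three finds the $k$ closest facility locations for every point in $M$, with correctness and the list-size bound inherited from the offline analysis. Your additional remarks about running the $2^{k}$ repetitions in parallel within the same pass and doing the subset enumeration as stream-free post-processing are exactly the bookkeeping the paper leaves implicit.
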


\noindent Now, let us discuss the partition algorithms in streaming setting. For the $l$-diversity and chromatic $k$-service problems, it is known that there is no deterministic log-space streaming algorithm~\cite{gjk19}.
For the remaining constrained problems, there are streaming partition algorithms that are discussed in~\cite{gjk19} (for the Euclidean setting) and Section~\ref{section: partition_algm} of the Appendix (metric setting). 
Note that all of the streaming partitioning algorithms do not give an optimal partitioning but only a partitioning that is close to the optimal.
Each algorithm makes at most $3$-pass over the data-set and takes logarithmic space complexity. 
The partition algorithm, together with the list $k$-service algorithm, gives the following main results.
\begin{theorem}
For the following constrained $k$-service problems there is a $6$-pass streaming algorithm that gives a $(3^{\ell} + \veps)$-approximation guarantee.
\begin{enumerate}
    \item $r$-gather $k$-service problem
    \item $r$-capacity $k$-service problem
    \item Fault-tolerant $k$-service problem
    \item Semi-supervised $k$-service problem
    \item Uncertain $k$-service problem (assigned case)
\end{enumerate}
The algorithm has the space complexity of $O(f(k,\veps,\ell) \cdot \log n)$ and the running time of $O(f(k,\veps,\ell) \cdot n^{O(1)})$, where $f(k,\veps,\ell) = (k/\veps)^{O(k \ell^{2})}$. Further, the algorithm gives $(2^{\ell}+\veps)$-approximation guarantee when $C \subseteq L$.
\end{theorem}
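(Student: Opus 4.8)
The plan is to combine the $3$-pass streaming algorithm for the list $k$-service problem (stated just above) with the streaming partition algorithms in a pass-efficient way, mirroring the offline reduction of Theorem~\ref{theorem: list_alpha_approx}. Fix a constrained instance $\mathcal{I} = (C,L,k,d,\ell,\mathbb{C})$ from one of the five listed families, and let $c$ be a constant (depending only on $\ell$) to be chosen; we run the list algorithm with accuracy parameter $\veps' \coloneqq \veps/c$. In its first three passes that algorithm produces, with probability at least $1/2$, a list $\mathcal{L}$ of size $(k/\veps)^{O(k\ell^{2})}$ containing a $k$-center-set $S$ with $\Psi(S,\C^{*}) \leq (3^{\ell}+\veps')\cdot \Psi^{*}(\C^{*})$, where $\C^{*}$ is an optimal feasible clustering for $\mathcal{I}$; when $C \subseteq L$ the guarantee is $(2^{\ell}+\veps')$. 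We store this entire list, which costs $|\mathcal{L}|\cdot O(\log n) = (k/\veps)^{O(k\ell^{2})}\cdot\log n$ space.

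In the next (at most) three passes we run, \emph{in parallel over the same stream}, one copy of the streaming partition algorithm $A_{\mathbb{C}}$ for every $k$-center-set in $\mathcal{L}$. Each copy uses $O(\log n)$ space and $\leq 3$ passes, so executing all $|\mathcal{L}|$ of them simultaneously keeps the space at $(k/\veps)^{O(k\ell^{2})}\cdot\log n$ and the pass count at $\leq 3$; within these same passes each copy also maintains a running estimate of the cost of the (near-)optimal assignment it computes. At the end we output the feasible clustering of minimum estimated cost over all list elements. The total number of passes is $3+3 = 6$, and the running time is dominated by $|\mathcal{L}|$ invocations of a polynomial-time partition routine, i.e. $(k/\veps)^{O(k\ell^{2})}\cdot n^{O(1)}$.

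For correctness, condition on the (probability $\geq 1/2$) event that $S \in \mathcal{L}$ as above. The partition algorithm run on $S$ returns a clustering $\C \in \mathbb{C}$; since the streaming partition algorithm is only near-optimal rather than exact, we get $\Psi(S,\C) \leq (1+\veps')\cdot \Psi(S,\C^{*})$ (the streaming partition algorithms of~\cite{gjk19} and of Section~\ref{section: partition_algm} come with such a $(1+\veps')$-type guarantee while still outputting a feasible clustering). Chaining the two inequalities gives $\Psi(S,\C) \leq (1+\veps')(3^{\ell}+\veps')\cdot\Psi^{*}(\C^{*})$, which, after choosing $c$ large enough that $(1+\veps')(3^{\ell}+\veps') \leq 3^{\ell}+\veps$ for all $0<\veps\leq 1$, is at most $(3^{\ell}+\veps)\cdot\Psi^{*}(\C^{*})$; the $C\subseteq L$ case is identical with $3^{\ell}$ replaced by $2^{\ell}$. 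Since the algorithm outputs the minimum-cost feasible clustering it found and $\C$ is one candidate, the output is a feasible $(3^{\ell}+\veps)$-approximation (resp. $(2^{\ell}+\veps)$-approximation). Finally, the five families listed are precisely those admitting log-space streaming partition algorithms; $\ell$-diversity and chromatic $k$-service are excluded because no deterministic log-space streaming partition algorithm exists for them~\cite{gjk19}.

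The step I expect to need the most care is the bookkeeping in the second paragraph: verifying that the per-center-set partition computations and their cost estimates can genuinely be carried out \emph{within three shared passes} (not three passes each), that the cost estimates are accurate enough not to disturb the $\veps$-accounting, and that running $|\mathcal{L}|$ sketches in parallel really multiplies the space only by $|\mathcal{L}|$. Everything else is a routine rescaling of $\veps$ and an appeal to the already-established offline reduction.
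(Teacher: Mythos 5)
Your proposal is correct and follows essentially the same route as the paper: three passes for the streaming list $k$-service algorithm, then (at most) three shared passes running the streaming partition algorithm in parallel for every center-set in the list, taking the minimum-cost feasible clustering, with a rescaling of $\veps$ to absorb the $(1+\veps')$ loss from the near-optimal streaming partitioning. The paper states this combination without spelling out the bookkeeping, so your added detail on parallel execution and the $\veps$-accounting is consistent with, and slightly more explicit than, the paper's own argument.
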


\begin{theorem}
For the outlier $k$-service problem there is a 5-pass streaming algorithm that gives a $(3^{\ell} + \veps)$-approximation guarantee. The algorithm has space complexity of $O(f(k,m,\veps,\ell) \cdot \log n)$ and running time of $f(k,m,\veps,\ell) \cdot n^{O(1)}$, where $f(k,m,\veps,\ell) = ((k+m)/\veps)^{O(k \cdot \ell^{2})}$. Further, the algorithm gives $(2^{\ell}+\veps)$-approximation guarantee when $C \subseteq L$.
\end{theorem}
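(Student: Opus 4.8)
The plan is to mirror the offline composition of Theorem~\ref{theorem: list_alpha_approx} in the streaming model, using the two building blocks we already have: the 3-pass streaming list $k$-service algorithm (guaranteeing the approximation bounds $(3^\ell+\veps)$, and $(2^\ell+\veps)$ when $C\subseteq L$), and a streaming partition algorithm for the outlier $k$-service problem. First I would recall that for the outlier problem one must also guess, or discover, the set $Z$ of $m$ outliers. The key observation is that in the list version we can simply \emph{augment the parameter}: run the list $k$-service algorithm as if we were looking for $k+O(m)$-ish structure — more precisely, we run \LK with $D^\ell$-sampling so that the sampled multiset $M$ is large enough that, with good probability, it hits every cluster \emph{and} provides enough candidate centers that the outlier set can be absorbed. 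This is why the size bound in the statement is $((k+m)/\veps)^{O(k\ell^2)}$ rather than $(k/\veps)^{O(k\ell^2)}$: the list is over $(k)$-center-sets built from $T$, but $T$ now has size proportional to $|M|\cdot k$ with $|M| = \Theta(\eta k)$ where $\eta$ grows with $m$.

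Concretely, the steps in order: (i) First pass — run the streaming constant-factor unconstrained $k$-service approximation of Braverman \emph{et al.}~\cite{bmo11} on $(C,C,k,d,\ell)$ to get $F$ in space $O(k\log n)$; here we may need to inflate $k$ to handle the outliers, but $m$ outliers only change this by a constant-in-$k$-and-$m$ factor. (ii) Second pass — $D^\ell$-sample the multiset $M$ w.r.t.\ $F$ using reservoir sampling~\cite{vitter85}, with $|M|$ now scaled by the outlier parameter. (iii) Third pass — for each point of $M$ compute its $k$ nearest facility locations in $L$, forming $T$; enumerate all $k$-subsets of $T$ to build $\mathcal L$. (iv) Fourth and fifth passes — feed each candidate center-set in $\mathcal L$ to the streaming partition algorithm for outlier $k$-service (from~\cite{gjk19} in the Euclidean case, and Section~\ref{section: partition_algm} in the metric case), which in two more passes selects the $m$ points of largest service cost as $Z$ and clusters the rest; keep the clustering of minimum cost. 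The partition algorithm for outlier $k$-service is in fact a single additional pass of reservoir-style selection plus one pass to evaluate cost, so the total is $3 + 2 = 5$ passes — matching the statement (versus $6$ for the problems in the previous theorem, whose partition algorithms need an extra pass). Correctness then follows exactly as in the proof of Theorem~\ref{theorem: list_alpha_approx}: with probability $\ge 1/2$ the list contains a center-set $F$ with $\Psi(F,\C^*)\le (3^\ell+\veps)\Psi^*(\C^*)$ for the optimal outlier clustering $\C^*$, and applying the partition algorithm to $F$ yields a feasible outlier clustering of cost at most $\Psi(F,\C^*)$, hence at most $(3^\ell+\veps)$ times optimal; the $C\subseteq L$ refinement follows from the $(2^\ell+\veps)$ bound of the list algorithm in that regime.

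For the resource bounds: the list has size $|\mathcal L| = ((k+m)/\veps)^{O(k\ell^2)}$ because $|T| = O(|M|\cdot k)$ and $|M| = \eta k$ with $\eta$ polynomial in $k,m,1/\veps$, so the number of $k$-subsets is $\binom{|T|}{k} = ((k+m)/\veps)^{O(k\ell^2)}$; storing one candidate center-set at a time costs $O(k\log n)$, and the partition algorithm's state is $O((k+m)\log n)$, giving overall space $O(f(k,m,\veps,\ell)\cdot\log n)$ with $f(k,m,\veps,\ell) = ((k+m)/\veps)^{O(k\ell^2)}$. Running time is dominated by processing each of the $|\mathcal L|$ candidates through a polynomial-time-per-element partition check, i.e.\ $f(k,m,\veps,\ell)\cdot n^{O(1)}$.

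The main obstacle I expect is making the outlier selection interact cleanly with the sampling analysis in a \emph{space-bounded} way. In the offline setting the partition algorithm has random access and can, for a fixed center-set, compute every client's service cost and discard the top $m$; in streaming we must do this in $O(\log n)$ words, which forces us to argue that a reservoir-type or threshold-based selection of the $m$ highest-cost points is either exact or loses only a $(1+\veps)$ factor — and crucially that this approximate outlier removal still composes with the "there exists a good center-set in the list" guarantee without blowing up the approximation constant. A secondary subtlety is verifying that inflating the sampling parameter from $k$ to depend on $m$ genuinely suffices for the $D^\ell$-sampling lemmas (the hybrid near/far analysis sketched in Section~\ref{section:list_k_service}) to still hit every one of the $k$ real clusters after $m$ points have been removed — i.e.\ that the removed outliers do not destroy the "far points are almost uniformly sampled" property for the surviving cluster $C_i \setminus Z$. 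Both points reduce to bookkeeping already present in~\cite{gjk19}, but they are where the care is needed.
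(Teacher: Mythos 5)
Your overall architecture (3-pass streaming list algorithm plus a 2-pass streaming outlier partition, composed as in Theorem~\ref{theorem: list_alpha_approx}, giving $3+2=5$ passes) matches the paper, and your pass count, list-size accounting, and the trivial ``keep the $m$ farthest points'' partition step are fine (in fact the outlier partition is exact in streaming, so your worry about a $(1+\veps)$ loss there is unnecessary). However, there is a genuine gap in how you make the list algorithm work once outliers are present. Your proposed mechanism is to ``augment the parameter'' by inflating the $D^\ell$-sampling budget $\eta$ with $m$, with only a hedge (``we may need to inflate $k$'') about the first pass. That does not work. The entire analysis of \LK{} is calibrated against $\Phi(F,C)$ and $OPT(C,C)$: the low-cost/high-cost split uses the threshold $\frac{\veps}{\alpha\gamma k}\Phi(F,C)$, and the low-cost case (Lemma~\ref{lemma:close_cluster}) carries an additive error of order $\frac{\veps}{2^{\ell+1}k}\cdot OPT(C,C)$. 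If $F$ is a $k$-center (even constant-factor) solution on all of $C$, then $m$ far-away outliers make $\Phi(F,C)$ and $OPT(C,C)$ arbitrarily larger than the outlier optimum $\sum_i\Delta(C_i)$, so these additive terms are not bounded by $(3^\ell+\veps)\Psi^*(\C)$ no matter how many points you $D^\ell$-sample; increasing $\eta$ only boosts hitting probabilities, it does not shrink the additive error or fix the sampling distribution being dominated by outliers.

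The missing idea, which is exactly what the paper does in Section~\ref{section:outlier_list_k_service}, is to change step (1) (and hence the first streaming pass) to compute a constant-factor solution with $k+m$ centers on $(C,C,k+m,d,\ell)$, so that $\Phi(F,C)\leq\alpha\cdot\overline{OPT}(C,C)$ where $\overline{OPT}(C,C)$ is the unconstrained $(k+m)$-center optimum, and then to prove the lemma $\overline{OPT}(C,C)\leq 2^{\ell}\cdot\sum_{i=1}^{k}\Delta(C_i)$ (each outlier can be opened as its own zero-cost center, and the $k$ real clusters lose at most a $2^{\ell}$ factor by restricting centers to clients). With this substitution, Property-III holds verbatim with $\overline{OPT}(C,C)$ in place of $OPT(C,C)$, the rest of the algorithm is unchanged, and the $((k+m)/\veps)^{O(k\ell^2)}$ list size comes from $|F|=k+m$ (and the correspondingly larger $T$), not from blowing up $\eta$ as a function of $m$. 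Your write-up never supplies this modification or the bounding lemma, and explicitly flags the relevant step as an unresolved obstacle, so the approximation guarantee in the statement is not actually established by the proposal as written.
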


\section{Conclusion and Open Problems}\label{section:conclusion}

In this paper, we worked within the unified framework of Ding and Xu~\cite{Ding_and_Xu_15} to obtain simple sampling based algorithms for a range of constrained $k$-median/means problems in general metric spaces. Surprisingly, even working within this high-level framework, we obtained better (or matched) approximation guarantees of known results that were designed specifically for the constrained problem. 
On one hand, this shows the versatility of the unified approach along with the sampling method. 
On the other hand, it encourages us to try to design algorithms with better approximation guarantees for these constrained problems. 
Our matching approximation lower bound for the sampling algorithm suggests that further improvement may not be possible through sampling based ideas.
On the lower bound side, it may be useful to obtain results similar to that for the unconstrained setting where approximation lower bounds of $(1+2/e)$ and $(1+8/e)$ are known for $k$-median and $k$-means respectively~\cite{vincent_hardness_2019}. 
Another direction is to find other constrained problems that can fit into the unified framework and can benefit from the results in this work.

\section*{Acknowledgements} The authors would like to thank Anup Bhattacharya for useful discussions.


\addcontentsline{toc}{section}{References}
\bibliographystyle{alpha}
\bibliography{references}

\appendix
\section{Appendix}
To keep the high-level discussion concise, we gave a general introduction in the main paper highlighting our main results.
We will use the space in the appendix to give the detailed paper. 
This is done section wise starting with a detailed discussion on the constrained problems followed by technical sections starting with preliminaries and then a section for each of the technical contributions discussed in the main paper.

\section{Discussion on Constrained Problems}\label{sec:constrained_problems}
In this section, we will look at each of the constrained problems in Table~\ref{table:1} in more detail.

\subsection{$r$-Gather $k$-Service Problem}\label{subsection:B1}
As you can see from Table~\ref{table:1}, the problem is defined with a lower bound constraint on the cluster sizes, i.e., each cluster $C_{i}$ must contain at least $r_{i}$ clients.
A problem similar to the above can be defined where the lower bound constraint is on the facilities instead of clusters.
That is, a facility $i \in L$ must serve at least $r_{i}$ clients.
Note that above two problems are not the same.
However, these problem definitions become equivalent in the \emph{uniform} setting where lower bound is the same (i.e., $r_1=r_2=...=r$). 
The second problem arises in the context of the facility location (where there is no bound on facilities but there is a facility opening cost) called the \emph{lower bound facility location} (LBLF) problem where the lower bound constraint is on the facilities.
The first problem in the uniform setting is called the $r$-gather problem and we will refer to the version where $r_i$'s may not be the same as the non-uniform $r$-gather problem.

The lower bound facility location problem was first introduced by Guha~\emph{et al.}~\cite{rgather:facility_location_2000_guha_FOCS} and Karger~\emph{et al.}~\cite{rgather:facility_location_karget_minkoff_FOCS} to solve various \emph{network design} problems. Both of these works gave a bi-criteria approximation algorithm for the problem, where the lower bound constraint is violated by a constant factor and the solution cost is at most constant times the optimal. Svitkina~\cite{rgather:facility_location_2010_zoya} gave the first constant-approximation algorithm for the \emph{uniform} LBLF problem. The approximation guarantee of the algorithm was 448, which was later improved to 82.6 by Ahmadian and Swamy~\cite{rgather:facility_location_2011_Ahmadian}.
Recently, Shi Li~\cite{rgather:facility_location_2019_ShiLi} gave a 4000-approximation algorithm for the \emph{non-uniform} lower bound facility location problem. 
     
For the (uniform) $r$-gather $k$-\emph{median} problem, the algorithms of~\cite{rgather:facility_location_2010_zoya} and~\cite{rgather:facility_location_2011_Ahmadian} can be adapted to obtain an $O(1)$-approximation guarantee~\cite{rgather:Misc_2016_Ahmadian}.
Ding~\cite{rgather:k_all_2018_Ding} gave a \emph{FPT} $(3 \lambda +2)$-approximation and $(18 \lambda+16)$-approximation algorithm for the uniform $r$-gather $k$-median and $k$-means problem under the assumption $C = L$, respectively. 
Here, $\lambda$ denotes the approximation guarantee of any \emph{unconstrained} $k$-median or $k$-means algorithm. 
We can use the FPT algorithm of Addad~\emph{et al.}~\cite{vincent_hardness_2019} for the unconstrained problems that has $\lambda = 1+2/e$ and $\lambda = 1+8/e$ for the $k$-median and $k$-means problem respectively. 
Moreover, these bounds are tight conditioned on some recent complexity theory conjectures. 
So, the algorithm of Ding~\cite{rgather:k_all_2018_Ding} gives a $7.2$ and $86.9$-approximation for the $r$-gather $k$-median and $k$-means problems respectively. 
Even though this algorithm is FPT in $k$, its advantage over the previous algorithms is that it considers both the lower and upper bounds on the size of the clusters. 
Under the assumption that $C \subseteq L$ ($C = L$ is a special case of $C \subseteq L$), we design a $2$ and $4$ approximation algorithm for the \emph{non-uniform} $r$-gather problem that has an FPT running time. 
This is an improvement over the result of Ding~\cite{rgather:k_all_2018_Ding}. 
Moreover, our algorithm also considers the lower and upper bounds on the size of clusters.

The $r$-gather $k$-\emph{center} problem is widely used in privacy-preserving data publication. The problem was first introduced by Aggarwal~\emph{et al.} \cite{rgather:k_center_2010_Aggarwal} and is based on the famous $k$-\emph{anonymity} principle of Sweeney~\cite{rgather:k_anonymity_2002_Sweeney}. A $2$-approximation algorithm is known for the uniform $r$-gather $k$-center problem in offline setting (Section 2.4 of ~\cite{rgather:k_center_2010_Aggarwal}). This is the best approximation possible for the problem since the $k$-center problem (for $r = 1$) does not admit better than ($2-\veps$)-approximation for any $\veps > 0$, assuming $\mathsf{P} \neq \mathsf{NP}$. Moreover, a $6$-approximation is known for the uniform $r$-gather $k$-center problem in the streaming setting~\cite{rgather:k_center_streaming}.

\subsection{$r$-Capacity $k$-Service Problem}    
As you can see from Table~\ref{table:1}, the problem is defined by an upper bound constraint on the size of the clusters. 
This constraint is useful in scenarios where a facility can serve only a certain number of clients due to limited resources. 
A problem similar to the above can be defined where capacities are imposed on individual facilities instead of clusters. 
Note that the two problems are different.
However, in case of \emph{uniform} capacities (i.e., $r_1=r_2=...=r$), both problem definitions are equivalent.
The first problem in the uniform setting is called the $r$-capacity problem and we will refer to the version where $r_i$'s may not be the same as the {\em non-uniform} $r$-capacity problem. The second problem is called the {\em capacitated $k$-median/means} problem.
The second problem arises in the context of {\em facility location} where there is no bound on the number of open facilities but there is facility opening cost.
The facility location problem with upper bounds in capacities is known as the \emph{capacitated facility location} (CFL) problem.

The capacitated facility location problem has been studied extensively \cite{capacitated:facility_location_2001_pal,capacitated:facility_location_2003_Mahadian_Pal,capacitated:facility_location_2005_Zhang,capacitated:facility_location_2012_naveen,capacitated:facility_location_uniform_1999_chudak,capacitated:facility_location_uniform_2000_Madhukar,capacitated:facility_location_uniform_2010_Naveen}. 
The best known approximation guarantee for the problem is $5$ in the non-uniform setting~\cite{capacitated:facility_location_2012_naveen} and $3$ in the uniform setting~\cite{capacitated:facility_location_uniform_2010_Naveen}. 
All these approximation algorithms are based on the local-search technique. 
The LP based constant approximation algorithms are also known for the problem~\cite{capacitated:facility_location_LP_based_2004_uniform_levi,capacitated:facility_location_LP_based_2014_An_Mohit}. 
    
For the capacitated $k$-median/$k$-means problem, no constant-factor approximation is known even in the uniform setting. 
However, various bi-criteria approximation algorithms are known for the problem~\cite{capacitated:kmedian_2017_Li_uniform,capacitated:kmedian_2017_Li,capacitated:kmedian_2017_Byrka_uniform,capacitated:kmedian_2017_Demerci_Li}, which either violate the capacity or {\em cardinality} constraint (the constraint on the number of open facilities) by a constant factor. 
The problem has been also studied from the perspective of fixed parameter tractability with $k$ as the parameter~\cite{capacitated:FPT_2018_Xu,capacitated:FPT_2018_Byrka,capacitated:FPT_2019_vincent}. The algorithm of Addad~\emph{et al.}~\cite{capacitated:FPT_2019_vincent} is based on the coreset technique, and gives a $3$ and $9$-approximation for the $k$-median and $k$-means objectives respectively, in \emph{FPT} time. 
Our algorithm gives the same bounds as that of Addad {\em et al.}~\cite{capacitated:FPT_2019_vincent} but for the non-uniform $r$-capacity problem. 
Note that the uniform version is a special case of the non-uniform version and the uniform $r$-capacity problem is equivalent to the uniform capacitated $k$-median/means problem.  
This means that our algorithm is also $3$ and $9$ approximation algorithm for the uniform capacitated $k$-median and $k$-means problems respectively.
So, in the uniform setting, we match the result of Addad {\em et al.}~\cite{capacitated:FPT_2019_vincent}.
Moreover, our algorithm can be converted to a streaming algorithm. 
    
The capacitated $k$-\emph{center} problem has also been well studied~\cite{capacitated:kcenter_1992_Barillan,capacitated:kcenter_2000_khuller,capacitated:kcenter_2012_khuller,capacitated:kcenter_2015_An}. 
The best approximation guarantee for the problem is $9$~\cite{capacitated:kcenter_2015_An} in the non-uniform setting and $6$ in the uniform setting~\cite{capacitated:kcenter_2000_khuller}. Some interesting open problem for the capacitated problems are discussed in the survey of An and Svensson~\cite{Survey_2017_An_Ola_Svensson}.

\subsection{$l$-Diversity $k$-Service problem}
The problem is motivated by the $l$-diversity principle, a popular method used for the privacy preservation of public databases. 
Consider a medical database, in which the data entries are composed of \emph{sensitive} attributes like `name of the disease,' and the \emph{non-sensitive} attributes like `age,' `gender,' `zipcode' etc. The goal is to keep this information anonymous while keeping it meaningful at the same time so that it can be used for research purposes. 
To tackle this problem, Sweeney~\cite{rgather:k_anonymity_2002_Sweeney} proposed the $k$-\emph{anonymity} principle. The principle is very popular, and various privacy-preserving algorithms are based on it~\cite{rgather:k_anonymity_2002_Sweeney,rgather:k_anonymity_2005_Bayardo_Aggarwal,rgather:k_anonymity_2004_LeFevre,rgather:k_anonymity_2005_Aggarwal,rgather:k_anonymity_2004_Meyerson}. Aggarwal \emph{et al.}~\cite{rgather:k_center_2010_Aggarwal} proposed a clustering-based anonymity principle, based on ideas similar to the $k$-\emph{anonymity} principle. 
In this model, the data is grouped into clusters such that each cluster contains at least $r$ entries. Instead of publishing the original data, the cluster centers, cluster sizes, and cluster radius are published to the public. 
In this way, privacy is preserved while keeping data meaningful at the same time. 
This method is popularly known as \emph{r-gather} clustering. 
We have already discussed algorithms for $r$-gather clustering in Subsection~\ref{subsection:B1}. 
The $k$-\emph{anonymity} principle is susceptible to \emph{linking} attacks~\cite{L_diversity:linking_attacks,L_diversity:algorithm_2006_Machanavajjhala} when a cluster contains many similar data-entries. 
To counter this, Machanavajjhala~\emph{et al.}~\cite{L_diversity:algorithm_2006_Machanavajjhala}  introduced the $l$-\emph{diversity} principle. 
According to this principle, a group of data-entries must not contain more than $1/l$ fraction of data entries with the same sensitive attribute. 
We can represent the data items with same sensitive attribute with the same color.
This gives rise to the $l$-\emph{diversity} clustering problem.  
There are many approximation and heuristic algorithms for problems based on the $l$-diversity principle~\cite{L_diversity:algorithm_2006_Machanavajjhala,L_diversity:algorithm_2010_Xiao,L_diversity:algorithm_2007_Ghinita,L_diversity:algorithm_2011_Zhou,L_diversity:algorithm_2013_Dondi}. 
However, these algorithms are not based on the $l$-diversity clustering formulation. 
Unlike the $r$-gather clustering problem, the $l$-diversity clustering problem has not been studied well. 
Li~\emph{et al.}~\cite{L_diversity:2010_kcenter_Li_Jian} gave the first constant approximation algorithm for the $l$-diversity clustering problem corresponding to the $k$-center objective (without any constraint on the number of centers). 
However, they considered a stronger version of the $l$-diversity problem where \emph{all} points must be distinct in a cluster. 
Ding and Xu~\cite{Ding_and_Xu_15}; and Bhattacharya~\emph{et al.}~\cite{bjk18} studied the problem in the Euclidean space and designed a PTAS for fixed value of $k$. 
No constant-approximation is known for the $l$-diversity $k$-means/$k$-median problem in general metric spaces. 
In this paper, we study the problem in metric spaces and design an FPT algorithm that gives $3$-approximation corresponding for the $k$-median objective and $9$-approximation corresponding for the $k$-means objective.

\subsection{Chromatic $k$-Service Problem}
 The problem was formulated by Ding and Xu~\cite{chromatic:kmedian_2012_ding_and_xu} and it has certain applications in cell biology~\cite{chromatic:k_cones_ICALP_2011_ding_and_xu}.
 Ding and Xu~\cite{Ding_and_Xu_15,chromatic:kmedian_2012_ding_and_xu} gave a PTAS for the chromatic $k$-median and $k$-means problems in the Euclidean space (i.e., $C \subseteq L = \mathbb{R}^{d}$) and Bhattacharya {\em et al.}~\cite{bjk18} improved the running time of the algorithm. In this work, we study the problem in general metric spaces. 
 We give an FPT time $(3+\veps)$-approximation algorithm for the chromatic $k$-median problem and $(9+\veps)$-approximation algorithm for the chromatic $k$-means problem.

\dishant{Dishant: Besides this, I couldn't find any concrete application of the Chromatic Clustering problem.
~\\
Also, there is another problem called the ``bi-chromatic $2$-center" problem~\cite{chromatic:kcenter_2015_bicromatic_Arkin}, which was mentioned in the Ding and Xu paper~\cite{Ding_and_Xu_15}. However, this problem is completely different from the chromatic clustering problem, and the authors of~\cite{chromatic:kcenter_2015_bicromatic_Arkin} vaguely argued that it is related to the Chromatic Clustering problem.}

\subsection{Fault-Tolerant $k$-Service Problem}
In certain settings, some facilities (say \emph{servers} in case of a distributed network) may fail after some time. 
In that case, a client must be assigned to some other facility location/server. 
To provide backup against failures, we study the problem in fault-tolerance setting where we consider the cost of a client with respect to multiple facility locations. 
The formal definition of the problem is stated in Table~\ref{table:1}.
    
Jain and Vazirani~\cite{fault:facility_location_2000_jain_Vazirani} gave the first approximation algorithm for the fault-tolerant \emph{facility location} problem. Recall that in facility location the number of open facilities is not bounded but there is a cost of opening a facility.
The algorithm had a $O(\log (l_{m}))$-approximation guarantee, where $l_{m} \coloneqq \max_{x \in C} \{ l_{x} \}$ is the maximum requirement of a client. 
Subsequently, better approximation algorithms were developed that gave constant-approximation guarantees for the problem~\cite{fault:facility_location_2003_guha,fault:facility_location_2003_uniform_Jain,fault:kmedian_2008_uniform_chaitanya,fault:facility_location_2010_byrka}. 
Byrka~\emph{et al.}~\cite{fault:facility_location_2010_byrka} gave a $1.7245$-approximation guarantee that is currently the best-known approximation guarantee for the problem. The fault-tolerant problem is also studied in the \emph{uniform} setting where each client has the same requirement $l$, i.e., $l_{x} = l$ for every $x \in C_{i}$. Swamy and Shmoys~\cite{fault:kmedian_2008_uniform_chaitanya} gave a $1.52$ approximation guarantee for the \emph{uniform} fault-tolerant facility location problem.

For the fault-tolerant \emph{$k$-median} problem, the first approximation algorithm was given by Anthony~\emph{et al.}~\cite{fault:kmedian_2008_non_uniform_anthony}. The algorithm had a $O(\log n)$-approximation guarantee.
Recently, Hajiaghayi~\emph{et al.}~\cite{fault:kmedian_2014_non_uniform_haji_li_SODA} gave an improved $93$-approximation algorithm for the problem.
For the \emph{uniform} fault-tolerant $k$-median problem, a better approximation guarantee of $4$ is known due to Swamy and Shmoys~\cite{fault:kmedian_2008_uniform_chaitanya}. 
In this work, we give an algorithm with an improved approximation guarantee of $3$ for the general case, by allowing an FPT running time. 
This is a significant improvement over the $93$-approximation of  Hajiaghayi~\emph{at.al.}~\cite{fault:kmedian_2014_non_uniform_haji_li_SODA}.
    
For the fault-tolerant \emph{$k$-means} problem, no approximation algorithm is known yet. 
We give the first approximation algorithm for the problem, which gives a $(9+\veps)$-approximation guarantee in FPT time.
    
For the fault-tolerant $k$-center problem, Chaudhuri~\emph{et al.}~\cite{fault:kcenter_1998_naveen} and Khuller~\emph{et al.} \cite{fault:kcenter_2000_khuller} gave a $3$-approximation algorithm. 
However, they considered the objective function where the cost of a client is taken as its distance to the $l_{x}^{th}$ closest facility location, whereas in this work, we consider the cost of a client as the sum of its distances to the $l_{x}$ closest facility locations.

\subsection{Semi-Supervised $k$-Service Problem}
The clustering is typically studied in an unsupervised setting, where the class labels are not known for any data-point. However, in various practical scenarios, we have some background knowledge about the class labels or some kind of supervision in terms of \emph{must-link} and \emph{cannot-link} constraints \cite{semi:2000_ICML_Wagstaff,semi:2001_ICML_Wagstaff,semi:2004_KDD_basu}. Clustering based on this prior knowledge is known as the \emph{semi-supervised} clustering. There are several ways in which we can use the extra information for better clustering of data (see Section 2 of~\cite{semi:Survey_2_Grira}). One such way is to add an extra penalty term to the cost function, for violating the known constraints~\cite{semi:Cost_function_1999_genetic_Demiriz,semi:Cost_function_2006_evolutionary_chakrabarti,semi:cost_function_2007_chi,semi:cost_function_2006_gao}. We define such a cost function for the semi-supervised clustering problem.
Suppose we are given the target clustering $\mathcal{C^{T}} = \{C_{1}', C_{2}', \dotsc, C_{k}'\}$. The goal of the semi-supervised clustering is to output a clustering $\C$ that minimizes the following objective function:
\[
\overline{\Psi}(F,\C) \coloneqq \alpha \cdot \Psi(F,\C) + (1-\alpha) \cdot Dist(\mathcal{C^{T}}, \C), \quad \textrm{for some constant $\alpha \geq 0$}
\]
Here, $Dist(\mathcal{C^{T}}, \C)$ denotes the {\em set-difference} distance.
A similar cost function is considered in the context of \emph{evolutionary clustering}~\cite{semi:Cost_function_2006_evolutionary_chakrabarti}. 
In the evolutionary clustering problem, the task is to output a sequence of clustering for a timestamped data. 
At every time stamp $t$, we have to output a clustering that does not deviate largely from the clustering at the previous time step $(t-1)$. 
In our objective function, the clustering at time step $(t-1)$ acts as the target clustering $\mathcal{C^{T}}$ for the clustering at time step $t$.

Ding and Xu~\cite{Ding_and_Xu_15} gave an FPT algorithm that gives a $(1+\veps)$-approximation for the semi-supervised $k$-median/$k$-means problem in Euclidean space. 
Further, Bhattacharya~\emph{et al.}~\cite{bjk18} improved the running time of the algorithm. 
No approximation algorithm is known for the problem in general metric spaces. 
In this work, we give the first constant-approximation for the problem in general metric spaces. 
Particularly, we give a $(3+\veps)$-approximation and $(9+\veps)$-approximation algorithm for the semi-supervised $k$-median and $k$-means problem, respectively, in FPT time.

\dishant{Dishant: I am not very convinced by the objective function defined by Ding and Xu~\cite{Ding_and_Xu_15}. First, they are assuming that the complete target clustering $\mathcal{C^{T}}$ is given to us. However, we never know the complete class-labels in the standard semi-supervised setting (we only have a partial background knowledge). However, evolutionary clustering is one such type of clustering problem, where complete class labels are known. 
If we consider the evolutionary clustering paper~\cite{semi:Cost_function_2006_evolutionary_chakrabarti}, the $Dist(\mathcal{C^{T}},\C)$ does not signify the set-difference distance. Instead, it is the sum of the Euclidean distances between the centroids of the two clusters. 
Ding and Xu~\cite{Ding_and_Xu_15}, made a reference to the BBG paper for the definition of $Dist(\mathcal{C^{T}},\C)$, while giving the formal definition of the semi-supervised clustering problem. Rather, they should have given a reference to some semi-supervised clustering paper.}

\subsection{Uncertain $k$-service Problem (Probabilistic Clustering)}
Data from  sources such as sensor network data, forecasting data, and demographic data has many uncertainties due to errors and noise in the measured values~\cite{uncertain:survey_2009_Charu}. 
Due to this imprecision, the data is modeled probabilistically rather than deterministically. 
Such imprecise data needs to be clustered for various data-mining purposes and various heuristic algorithms are known for the clustering of uncertain datasets~\cite{uncertain:Heuristic_2006_Chau,uncertain:Heuristic_2006_Chui,uncertain:Heuristic_2005_Kriegal,uncertain:Heuristic_2008_Charu}. 
The problem has also been studied from a theoretical standpoint.
The first theoretical study was done by Cormode and McGregor~\cite{uncertain:kcenter_2008_Cormode}. 
The authors designed approximation algorithms for the \emph{uncertain} $k$-center, $k$-means, and $k$-median problem. 
They defined the input instance in the following manner:
A client $j$ in $C$ is represented by a random variable $X_{j}$ such that $j$ is present at the location $x \in \X$ with probability $t^{j}_{x}$, i.e., $P[X_{j} = x] = t^{j}_{x}$. 
Certainly, we have $\sum_{x \in \X} t^{j}_{x} \leq 1$ for every client $j \in C$.
Also, note that the probability could be less than one since it is possible that a client might not exist at all.
The cost function in defined in two ways: \emph{unassigned} and \emph{assigned}, as follows.
\begin{enumerate}
    \item \textit{Unassigned Cost:} In this case, we output a center-set $F$ that minimizes the following cost function:
    \begin{align*}
    \textrm{$k$-median:} \quad & \sum_{(x_{1},x_{2},\dotsc,x_{n}) \in \X^{n}} \left( \prod_{j = 1}^{n}\textbf{Pr}[X_{j} = x_{j}] \cdot \sum_{j = 1}^{n} d(x_{j},F) \right)\\
    \textrm{$k$-means:} \quad & \sum_{(x_{1},x_{2},\dotsc,x_{n}) \in \X^{n}} \left( \prod_{j = 1}^{n}\textbf{Pr}[X_{j} = x_{j}] \cdot \sum_{j = 1}^{n} d^{2}(x_{j},F) \right)\\
    \textrm{$k$-center:} \quad & \sum_{(x_{1},x_{2},\dotsc,x_{n}) \in \X^{n}} \left( \prod_{j = 1}^{n}\textbf{Pr}[X_{j} = x_{j}] \cdot \max_{j} \left\{ d(x_{j},F)  \right\} \right)
    \end{align*}
    Here, $n \coloneqq |C|$, and $d(x,F) \coloneqq \min_{f \in F} \left\{ d(x,f) \right\}$ denote the distance of $x$ to the closest facility location.
    
    Let $F^{*}$ be an optimal center-set corresponding to the above objective function. We assign a client $j$ to a facility location in $F$, based on its realized position. Suppose $x_{j}$ be the realised position of the client $j$. Then we assign $j$ to a facility location that is closest to $x_{j}$.

    \item \textit{Assigned Cost:} In this case, we assign a client to a cluster center prior to its realization. Therefore, we assume that for a client $j$, all its realizations are assigned to the same center. The goal is to output a center set $F$ and an assignment $\sigma: C \to F$ that minimizes the following cost function:
    \begin{align*}
    \textrm{$k$-median:} \quad & \sum_{(x_{1},x_{2},\dotsc,x_{n}) \in \X^{n}} \left( \prod_{j = 1}^{n}\textbf{Pr}[X_{j} = x_{j}] \cdot \sum_{j = 1}^{n} d(x_{j},\sigma(j)) \right)\\
    \textrm{$k$-means:} \quad & \sum_{(x_{1},x_{2},\dotsc,x_{n}) \in \X^{n}} \left( \prod_{j = 1}^{n}\textbf{Pr}[X_{j} = x_{j}] \cdot \sum_{j = 1}^{n} d^{2}(x_{j},\sigma(j)) \right)\\
    \textrm{$k$-center:} \quad & \sum_{(x_{1},x_{2},\dotsc,x_{n}) \in \X^{n}} \left( \prod_{j = 1}^{n}\textbf{Pr}[X_{j} = x_{j}] \cdot \max_{j} \left\{ d(x_{j},\sigma(j))  \right\} \right)
    \end{align*}
\end{enumerate}
Both of these clustering criteria are useful~\cite{uncertain:kcenter_2008_Cormode,uncertain:kcenter_2018_Alipour}.
For the probabilistic metric $k$-center problem, Cormode and McGregor~\cite{uncertain:kcenter_2008_Cormode} gave bi-criteria approximation algorithms, corresponding to the unassigned objective function. Guha and Munagala~\cite{uncertain:kcenter_2009_Guha} gave the first constant approximation algorithm for the probabilistic metric $k$-center problem (for both the assigned and unassigned cases). Recently, Alipour and Jafari ~\cite{uncertain:kcenter_2018_Alipour} improved the approximation guarantee to 10, for the (assigned) probabilistic $k$-center problem in the general metric spaces. For the Euclidean space (where $C \subseteq L = \mathbb{R}^{d}$) the authors gave an FPT algorithm that has an approximation guarantee of $3+\veps$, for any $\veps>0$.

The above results were for the probabilistic $k$-center problem. Let us discuss the probabilistic $k$-median and $k$-means problem. 
The \emph{unassigned} case of these problems is quite simple. In this case, both problems can be simply reduced to their weighted unconstrained counterparts by linearity of expectation (see Section 5 of ~\cite{uncertain:kcenter_2008_Cormode}). Thus, in the Euclidean space, both problems have $(1+\veps)$-approximation algorithm ~\cite{Feldman07_corset,Chen09_coreset,kumar02,jks14}. In metric space, we get a $(2.675+\veps)$-approximation for the probabilistic $k$-median problem~\cite{byrka15} and $(9+\veps)$-approximation for the probabilistic $k$-means problem~\cite{Svensson17}. Therefore, in this work, we study these problems with respect to their \emph{assigned} objectives only.

First, let us discuss the \emph{assigned} case for the probabilistic $k$-means/$k$-median problems, in the Euclidean space (where $C \subseteq L = \mathbb{R}^{d}$). Cormode and McGregor~\cite{uncertain:kcenter_2008_Cormode} gave an FPT $(1+\veps)$ and $(3+\veps)$-approximation algorithm corresponding to the assigned $k$-means and $k$-median objectives respectively. 
Recently, Ding and Xu~\cite{Ding_and_Xu_15} improved the approximation guarantee for the \emph{assigned} $k$-median problem to $(1+\veps)$, and Bhattacharya~\emph{et al.}~\cite{bjk18} further improved the running time of the algorithm.

Now, let us discuss the (assigned) probabilistic $k$-means/$k$-median problem in the general \emph{metric} spaces. 
Lammersen and Schmidt~\cite{uncertain:streaming_2012_Melanie} gave the first coreset construction for the assigned version of the probabilistic $k$-median problem. Cormode and McGregor~\cite{uncertain:kcenter_2008_Cormode} reduced these problems to their weighted unconstrained counterparts~\cite{uncertain:kcenter_2008_Cormode}, with a certain loss in the approximation factor. In particular, they gave a $(2\alpha+1)$-approximation algorithm for the probabilistic  $k$-median problem and a $(8\alpha+2)$-approximation algorithm for the probabilistic  $k$-means problem \footnote{Cormode and McGregor~\cite{uncertain:kcenter_2008_Cormode} did not state the $(8\alpha+2)$-approximation for the probabilistic $k$-means problem explicitly. However, this result can be obtained using the same technique used to obtain the $(2\alpha+1)$-approximation for the probabilistic $k$-median problem. 
The $(2\alpha+1)$-approximation algorithm for the probabilistic $k$-median problem is stated in Theorem 10 of ~\cite{uncertain:kcenter_2008_Cormode}. In this theorem, if we replace the triangle-inequality with \emph{approximate} triangle-inequality for the $k$-means objective, we would obtain an $(8\alpha+2)$-approximation guarantee.}. Here, $\alpha$ is the approximation guarantee of any unconstrained $k$-median/$k$-means algorithm. The current best approximation guarantee for the unconstrained $k$-median problem is $(2.675+\veps)$~\cite{byrka15}, and the unconstrained $k$-means problem is $(9+\veps)$~\cite{Svensson17}. Substituting these $\alpha$ values, we obtain $(6.35+\veps)$-approximation for the probabilistic $k$-median problem, and $(74+\veps)$-approximation for the probabilistic $k$-means problem. All the above-stated approximation guarantees assume $C \subseteq L = \X$ for the discrete metric spaces. In this work, we improve these approximation guarantees by taking the advantage of FPT running time. We give a $(2+\veps)$ and $(4+\veps)$ for the probabilistic $k$-median and $k$-means problem ($C \subseteq L$), respectively, in FPT time. Moreover, for the general case, where $C$ and $L$ are arbitrary sets, our algorithm gives $(3+\veps)$-approximation and $(9+\veps)$-approximation for the probabilistic $k$-median and $k$-means problem respectively, in FPT time.

\subsection{Outlier $k$-service Problem}
The oultier problem is central to the clustering domain since removing a few outliers (noisy data points) from the data may greatly improves the cost and quality of the clustering.
The notion of the outlier problem was first introduced by Charikar~\emph{et al.}~\cite{outlier:kcenter_2001_Charikar}. 
The authors gave a $3$-approximation algorithm for both the outlier facility location problem and the outlier $k$-center problem.
For the outlier $k$-median problem they gave a bi-criteria approximation algorithm that gives a $4(1+1/\veps)$-approximation guarantee while violating the number of outliers by a factor of $(1+\veps)$. Chen~\cite{outlier:kmedian_2008_Chen}
gave the first constant-approximation algorithm for the outlier $k$-\emph{median} problem. 
Recently, Ravishankar~\emph{et al.}~\cite{outlier:kmeans_2018_Ravishankar} improved the approximation guarantee for the outlier $k$-median to $(7.081+\veps)$. Moreover, they gave a $53.002$-approximation algorithm for the oulier $k$-\emph{means} problem. Friggstad~\emph{et al.}~\cite{outlier:kmeans_2018_Friggstad_bicriteria} gave a bi-criteria algorithm for the outlier $k$-means problem that gives a $(25+\veps)$-approximation guarantee and opens $(1+\veps)$ facilities. 
 
All the above-stated algorithms are polynomial time algorithms.
Now, let us discuss FPT algorithms for the problem. Feng~\emph{et al.}~\cite{outlier:FPT_2019_ISAAC_Feng} gave a $(6+\veps)$-approximation for the outlier $k$-means problem with $C \subseteq L$, with an FPT time of $O \left( n \cdot \beta^{k} \left( \frac{k+m}{\veps} \right) ^{k} \right) $, for some constant $\beta>0$. 
In this work, we improve on this result by giving a $(4+\veps)$-approximation algorithm for the oulier $k$-means problem with $C \subseteq L$ and a $(2+\veps)$-approximation algorithm for the outlier $k$-median problem with $C \subseteq L = \X$ with a running time of $O \left( n \cdot \left( \frac{k+m}{\veps} \right) ^{O(k)} \right)$. 
For the general case where $C$ and $L$ are arbitrary sets, our algorithm gives a $(9+\veps)$-approximation guarantee for the outlier $k$-means problem and a $(3+\veps)$-approximation guarantee for the outlier $k$-median problem with the same running time.
 
 For the Euclidean version of the problem (i.e., $C \subseteq L = \mathbb{R}^{d}$), Feldman and Schulmany~\cite{outlier:FPT_2012_SODA_Feldman} gave a PTAS for the outlier $k$-median problem with an FPT time of \\
 $O\left( nd \cdot (m+k)^{O(m+k)} + (\veps^{-1}  k \log n)^{O(1)} \right)$. 
 Recently, Feng~\emph{et al.}~\cite{outlier:FPT_2019_ISAAC_Feng} improved this running time to $O\left( nd \cdot \left( \frac{m+k}{\veps} \right) ^{\left(k/\veps \right)^{O(1)}} \right)$.

\section{Preliminaries}
We give a few notations and identities that we will use often in our discussions. 
We define the unconstrained $k$-service cost of a set $S$ with respect to a center set $F$ as:
$$\Phi(F,S) \coloneqq \sum_{x \in S} \min_{f \in F} d^{\ell}(f,x).$$ 
For a singleton set $\{f\}$, we denote $\Phi(\{f\},S)$ by $\Phi(f,S)$. 
We denote the optimal (unconstrained) $k$-service cost of a instance by $OPT(L,C)$.
Let us now look at some of the identities that we will use in our analysis. 
Following is the binomial approximation technique that we will use to simplify terms with large exponents.
\begin{fact}[Binomial Approximation] For $\veps \cdot n \leq 1/2$, we have $(1+\veps)^{n} \leq (1+2\veps n)$
\label{fact:bin_approx}
\end{fact}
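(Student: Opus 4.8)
The plan is to prove this as a routine calculus/induction exercise; there is no real obstacle, so I will just lay out the cleanest route and flag the one point that needs care. I will give the argument via comparison with the exponential, which works for all real $n \ge 0$, and then remark on the discrete induction alternative.

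First I would invoke the standard bound $1 + x \le e^{x}$, valid for every real $x$. Since $\veps \ge 0$ we have $1 + \veps > 0$, so raising both sides to the power $n$ preserves the inequality: $(1+\veps)^{n} \le e^{\veps n}$.

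Next I would prove the one-variable inequality $e^{t} \le 1 + 2t$ for all $t \in [0, 1/2]$, and apply it with $t = \veps n$, which lies in $[0,1/2]$ by hypothesis; chaining with the previous step gives $(1+\veps)^{n} \le e^{\veps n} \le 1 + 2\veps n$, as desired. To see $e^{t} \le 1 + 2t$ on $[0,1/2]$, set $g(t) \coloneqq 1 + 2t - e^{t}$; then $g(0) = 0$ and $g'(t) = 2 - e^{t} \ge 0$ on $[0, \ln 2] \supseteq [0, 1/2]$, so $g$ is nondecreasing there and hence $g(t) \ge g(0) = 0$.

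Alternatively, when $n$ is a positive integer one can avoid the exponential entirely by induction on $n$: the base case $n = 1$ is $1 + \veps \le 1 + 2\veps$, and for the inductive step, multiplying $(1+\veps)^{n} \le 1 + 2\veps n$ by $1 + \veps \ge 0$ yields $(1+\veps)^{n+1} \le 1 + 2\veps(n+1) + \veps(2\veps n - 1) \le 1 + 2\veps(n+1)$, where the last inequality uses $2\veps n \le 2\veps(n+1) \le 1$. The only subtlety worth noting — in either proof — is that the statement implicitly assumes $\veps \ge 0$: for negative $\veps$ (e.g.\ $\veps = -0.4$, $n = 1$) the claimed inequality fails. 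This costs us nothing, since in all our applications $0 < \veps \le 1$.
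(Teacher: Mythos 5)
Your proof is correct. The paper states this fact without proof (it is treated as a standard estimate, unlike Fact~\ref{fact:trade_off} which gets a short argument), so there is nothing to compare against: both your route via $1+x \le e^{x}$ and $e^{t} \le 1+2t$ on $[0,1/2]$, and your integer-$n$ induction, are valid and complete. Your remark that the inequality implicitly requires $\veps \ge 0$ (it fails, e.g., at $\veps = -0.4$, $n=1$) is accurate and harmless here, since every use of the fact in the paper has $0 < \veps \le 1$.
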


\noindent We use the next fact to carry out the trade-off between two values $a$ and $b$. 
We will choose the value of $\delta$ according to our requirement.
\begin{fact}\label{fact:trade_off}
For any $\delta > 0$, we have $(a+b)^{\ell} \leq (1+\delta)^{\ell} \cdot b^{\ell} + \left(1+\frac{1}{\delta}\right) ^{\ell} \cdot a^{\ell}$.
\end{fact}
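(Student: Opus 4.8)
\textbf{Proof plan for Fact~\ref{fact:trade_off}.} The statement to prove is that for any $\delta > 0$ and any nonnegative reals $a, b$ (and real $\ell \ge 0$), $(a+b)^{\ell} \leq (1+\delta)^{\ell} b^{\ell} + \left(1 + \tfrac{1}{\delta}\right)^{\ell} a^{\ell}$. The plan is to reduce the $\ell$-th power statement to the linear ($\ell = 1$) statement and then exploit monotonicity of $t \mapsto t^{\ell}$ on the nonnegative reals. So the first step is to prove the base inequality $a + b \le (1+\delta) b + \left(1+\tfrac1\delta\right) a$, which is immediate: it rearranges to $0 \le \delta b + \tfrac1\delta a$, true since $a,b,\delta \ge 0$. (If one wants the tighter split, one notes $a+b = \left(1+\tfrac1\delta\right)a + (1+\delta)b - \tfrac1\delta a - \delta b \le \left(1+\tfrac1\delta\right)a + (1+\delta)b$, same thing.)

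The second step is to raise both sides to the power $\ell$. Since both sides are nonnegative and $t\mapsto t^\ell$ is nondecreasing on $[0,\infty)$ for $\ell \ge 0$, we get
\[
(a+b)^{\ell} \;\le\; \bigl((1+\delta) b + (1+\tfrac1\delta) a\bigr)^{\ell}.
\]
The third step is to bound the right-hand side by $(1+\delta)^\ell b^\ell + (1+\tfrac1\delta)^\ell a^\ell$. For $\ell \le 1$ this is the subadditivity inequality $(x+y)^\ell \le x^\ell + y^\ell$ applied with $x = (1+\delta)b$, $y = (1+\tfrac1\delta)a$. However — and this is the one subtlety — for $\ell > 1$ the inequality $(x+y)^\ell \le x^\ell + y^\ell$ is \emph{false}. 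The paper only ever uses this fact with $\ell \in \{1, 2\}$ (the $k$-median and $k$-means cases), so the honest route is to restrict attention to $\ell \in \{1,2\}$, or more generally to observe that the intended reading is: one first applies Fact~\ref{fact:trade_off} \emph{inside a sum over points}, where each term is already an $\ell$-th power of a triangle-inequality bound. I expect the main obstacle is exactly this: for $\ell = 2$ one must instead invoke the elementary two-term inequality $(x+y)^2 \le (1+s)x^2 + (1+\tfrac1s)y^2$ (Cauchy–Schwarz / AM–GM with parameter $s$) and then \emph{optimize the free parameter} $s$ together with $\delta$; more cleanly, one just proves the $\ell=2$ case directly: expand $(a+b)^2 = a^2 + 2ab + b^2$ and check $2ab \le \delta b^2 \cdot \tfrac{1}{\delta} + \ldots$, i.e. $(a+b)^2 \le (1+\delta)^2 b^2 + (1+\tfrac1\delta)^2 a^2$ because $(1+\delta)^2 b^2 + (1+\tfrac1\delta)^2 a^2 - (a+b)^2 = (2\delta + \delta^2) b^2 + (\tfrac2\delta + \tfrac1{\delta^2}) a^2 - 2ab \ge 2\sqrt{(2\delta)(\tfrac2\delta)}\,ab - 2ab = 2ab - 2ab = 0$ by AM–GM (dropping the positive $\delta^2 b^2$ and $\tfrac1{\delta^2}a^2$ terms).

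So the cleanest writeup is: (i) handle $\ell \le 1$ by the linear inequality plus subadditivity of $t^\ell$; (ii) handle $\ell \ge 1$ by noting $(x+y)^\ell \le 2^{\ell-1}(x^\ell + y^\ell)$ is too lossy, and instead verifying directly via AM–GM that the quantity $(1+\delta)^\ell b^\ell + (1+\tfrac1\delta)^\ell a^\ell - (a+b)^\ell$ is nonnegative — which for integer $\ell$ follows by expanding $(a+b)^\ell$ with the binomial theorem and matching the cross terms $\binom{\ell}{j} a^j b^{\ell-j}$ against the contributions $\binom{\ell}{j}\delta^{?}$ coming from expanding $(1+\delta)^\ell b^\ell$ and $(1+\tfrac1\delta)^\ell a^\ell$, using $\delta^{\ell - 2j} + (\text{symmetric}) \ge 2$ pairwise via AM–GM. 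I would present the argument for the two cases $\ell = 1, 2$ explicitly since those are the only ones the paper needs, and remark that the general $\ell$ case is analogous. The main work — and the only place any thought is required — is case (ii); everything else is one line of rearrangement.
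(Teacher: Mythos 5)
Your route (linearize, raise to the power $\ell$, then try to split the $\ell$-th power of the sum) runs into exactly the obstacle you identified, and your fix does not close it: the direct AM--GM verification covers $\ell=2$ (note the small slip there: $2\sqrt{(2\delta)(2/\delta)}\,ab = 4ab$, not $2ab$, though the inequality still goes the right way), $\ell=1$ is trivial, and the general case is left as a binomial-expansion sketch that is not worked out and, more importantly, says nothing about non-integer $\ell$. That is a genuine gap relative to the statement as the paper needs it: the Fact is invoked inside Lemmas~\ref{lemma:close_cluster}, \ref{lemma:radius}, \ref{lemma:far_cluster}, \ref{lemma:close_cluster1} and \ref{lemma:far_cluster2}, all of which are proved for arbitrary real $\ell \ge 0$ (the main theorems are stated with approximation factor $3^{\ell}+\veps$ for general $\ell$), so restricting to $\ell\in\{1,2\}$ does not suffice.

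The missing idea is much simpler than any expansion: do a case analysis on which of the two terms dominates. If $a \le \delta b$ then $a+b \le (1+\delta)b$, hence $(a+b)^{\ell} \le (1+\delta)^{\ell} b^{\ell}$; otherwise $a > \delta b$, i.e.\ $b < a/\delta$, hence $a+b < \left(1+\tfrac{1}{\delta}\right)a$ and $(a+b)^{\ell} \le \left(1+\tfrac{1}{\delta}\right)^{\ell} a^{\ell}$. In either case a single term of the right-hand side already bounds $(a+b)^{\ell}$, and since the other term is nonnegative the claimed inequality follows for every real $\ell \ge 0$ with no subadditivity, no binomial theorem, and no separate treatment of integer versus non-integer exponents. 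This is the paper's proof; your attempt is salvageable only for the two special exponents, and even there it is considerably more work than the one-line dichotomy.
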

\begin{proof}
There are two possibilities: $a \leq \delta \cdot b$ or $a > \delta \cdot b$.
For the first case, we have $(a+b)^{\ell} \leq (1+\delta)^{\ell} \cdot b^{\ell}$. For the second case, we have $(a+b)^{\ell} \leq \left( 1+\frac{1}{\delta} \right) ^{\ell} \cdot a^{\ell}$. Hence we get the required result. 
\end{proof}

\noindent Since we are working with metric spaces, triangle inequality becomes a powerful tool for analysis. 
We need to generalize the triangle inequality since we are dealing with a general cost function $d^{\ell}$. 
The following inequality is the generalization of the triangle inequality and simply follows from the \emph{power-mean} inequality.
\begin{fact}[Approximate triangle inequality] For a set of points $\{a,b,c\} \in \X$, $d^{\ell}(a,b) \leq 2^{\ell-1} \cdot\left( (d^{\ell}(a,c) + d^{\ell}(c,b) \right)$. Similarly for a set of four points $\{ a,b,c,d \} \in \X$ we have $d^{\ell}(a,b) \leq 3^{\ell-1} \cdot\left( (d^{\ell}(a,c) + d^{\ell}(c,d) + d^{\ell}(d,b) \right)$
\label{fact:approx_tri_ineq}
\end{fact}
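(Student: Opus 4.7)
The plan is to combine the usual triangle inequality in $(\X, d)$ with the convexity of $t \mapsto t^\ell$ on $[0,\infty)$, which holds in the regime $\ell \geq 1$ of interest (i.e., $k$-median and $k$-means). First I would apply the standard triangle inequality: $d(a,b) \leq d(a,c) + d(c,b)$. Since $t \mapsto t^\ell$ is monotone nondecreasing on $[0,\infty)$, I can raise both sides to the $\ell$-th power to get $d^\ell(a,b) \leq \bigl(d(a,c) + d(c,b)\bigr)^\ell$.

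Next, I would invoke the power-mean inequality, or equivalently Jensen's inequality applied to the convex function $t \mapsto t^\ell$: for nonnegative reals $x_1,x_2$ and $\ell \geq 1$,
$$\left(\frac{x_1+x_2}{2}\right)^{\!\ell} \leq \frac{x_1^\ell + x_2^\ell}{2},$$
which rearranges to $(x_1+x_2)^\ell \leq 2^{\ell-1}(x_1^\ell + x_2^\ell)$. Plugging in $x_1 = d(a,c)$ and $x_2 = d(c,b)$ and combining with the previous display yields the three-point claim $d^\ell(a,b) \leq 2^{\ell-1}\bigl(d^\ell(a,c) + d^\ell(c,b)\bigr)$.

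For the four-point version, I would simply apply the triangle inequality twice to obtain $d(a,b) \leq d(a,c) + d(c,d) + d(d,b)$, raise to the $\ell$-th power, and then apply the three-term version of the same power-mean inequality, namely $(x_1+x_2+x_3)^\ell \leq 3^{\ell-1}(x_1^\ell + x_2^\ell + x_3^\ell)$. Substituting $x_1=d(a,c)$, $x_2=d(c,d)$, $x_3=d(d,b)$ gives the stated bound.

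There is no substantive obstacle; the entire argument is essentially two lines. The only caveat worth flagging is that the convexity step requires $\ell \geq 1$, which covers all the applications in this paper; in the complementary range $\ell \in [0,1)$ the function $t \mapsto t^\ell$ is subadditive and one has the strictly stronger bound $(a+b)^\ell \leq a^\ell + b^\ell$, so the stated inequality still holds trivially.
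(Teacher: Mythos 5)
Your proof is correct and matches the paper's intended argument: the paper itself states that Fact~\ref{fact:approx_tri_ineq} ``simply follows from the power-mean inequality,'' which is exactly the step you carry out (triangle inequality followed by $(\sum_i x_i)^\ell \leq m^{\ell-1}\sum_i x_i^\ell$ for $m$ nonnegative terms). Your additional remark about subadditivity handling the $\ell \in [0,1)$ regime is a harmless bonus, not a deviation.
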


\noindent Let $OPT(C,C)$ denote the optimal cost of a \emph{unconstrained} $k$-service instance when facilities are only allowed to open at client locations. 
The following fact easily follows from the \emph{power-mean} inequality (for the detailed proof see Theorem 2.1 of~\cite{streaming_2000_FOCS}).
\begin{fact}\label{fact:optimal}
$OPT(C,C) \leq 2^{\ell} \cdot OPT(L,C)$
\end{fact}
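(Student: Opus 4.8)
The statement to prove is Fact~\ref{fact:optimal}: $OPT(C,C) \leq 2^{\ell} \cdot OPT(L,C)$.

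\medskip

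The plan is to exhibit an explicit feasible solution for the restricted instance $(C,C,k,d,\ell)$ — one whose facilities lie only at client locations — built from an optimal solution of the unrestricted instance $(L,C,k,d,\ell)$, and then bound its cost. First I would fix an optimal $k$-center-set $F^{*} = \{f_1^{*},\dots,f_k^{*}\} \subseteq L$ for the unrestricted instance, so that $OPT(L,C) = \Phi(F^{*},C) = \sum_{x\in C}\min_{i}d^{\ell}(x,f_i^{*})$. For each $i$, let $C_i \subseteq C$ be the set of clients assigned to $f_i^{*}$ (breaking ties arbitrarily), so the $C_i$ partition $C$ and $OPT(L,C)=\sum_i\sum_{x\in C_i}d^{\ell}(x,f_i^{*})$. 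The natural candidate for a client-located center replacing $f_i^{*}$ is the point $c_i \in C_i$ closest to $f_i^{*}$, i.e. $c_i \coloneqq \argmin_{x\in C_i} d^{\ell}(x,f_i^{*})$ (if some $C_i$ is empty, we can place $c_i$ at an arbitrary client, since an empty cluster contributes nothing). Then $\{c_1,\dots,c_k\}$ is a valid $k$-center-set for the instance $(C,C,k,d,\ell)$, so $OPT(C,C) \leq \sum_i\sum_{x\in C_i}d^{\ell}(x,c_i)$.

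\medskip

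The core of the argument is a per-client bound. For $x \in C_i$, the approximate triangle inequality (Fact~\ref{fact:approx_tri_ineq}) gives $d^{\ell}(x,c_i) \leq 2^{\ell-1}\bigl(d^{\ell}(x,f_i^{*}) + d^{\ell}(f_i^{*},c_i)\bigr)$. By the choice of $c_i$ as the client in $C_i$ nearest to $f_i^{*}$, we have $d^{\ell}(f_i^{*},c_i) \leq d^{\ell}(f_i^{*},x)$ for every $x\in C_i$, hence $d^{\ell}(x,c_i) \leq 2^{\ell-1}\cdot 2\, d^{\ell}(x,f_i^{*}) = 2^{\ell}\, d^{\ell}(x,f_i^{*})$. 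Summing over all $x \in C_i$ and all $i$ yields
\[
OPT(C,C) \;\leq\; \sum_{i=1}^{k}\sum_{x\in C_i} d^{\ell}(x,c_i) \;\leq\; 2^{\ell}\sum_{i=1}^{k}\sum_{x\in C_i} d^{\ell}(x,f_i^{*}) \;=\; 2^{\ell}\cdot OPT(L,C),
\]
which is the claimed inequality.

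\medskip

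I do not expect a serious obstacle here; the only point requiring a little care is the handling of empty clusters and ties in the assignment, and making sure the bound $d^{\ell}(f_i^{*},c_i)\le d^{\ell}(f_i^{*},x)$ is applied only to $x$ in the same cluster $C_i$ (which is exactly where $c_i$ was chosen to be the nearest client). Everything else is a direct application of the approximate triangle inequality from Fact~\ref{fact:approx_tri_ineq}. The excerpt already notes that this also follows from the power-mean inequality and refers to Theorem~2.1 of~\cite{streaming_2000_FOCS}, so an alternative would be to cite that directly, but the self-contained argument above is short enough to include in full.
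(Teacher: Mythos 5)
Your proof is correct; the paper itself gives no proof of this fact, merely noting that it follows from the power-mean inequality and pointing to Theorem~2.1 of~\cite{streaming_2000_FOCS}, and your self-contained argument --- replacing each optimal facility $f_i^{*}$ by the client of its cluster nearest to it and applying Fact~\ref{fact:approx_tri_ineq} together with $d^{\ell}(f_i^{*},c_i)\leq d^{\ell}(f_i^{*},x)$ for $x\in C_i$ --- is exactly the standard argument behind that citation. The only technicality, that the chosen centers form a legitimate $k$-center-set, is automatic for nonempty clusters since the $C_i$ are disjoint and $c_i\in C_i$, and empty clusters contribute nothing, as you already note.
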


\noindent The following lemma demonstrates the effectiveness of uniform sampling in obtaining constant factor approximation.
This lemma (or a similar version) has been used in multiple other works in analysing sampling based algorithms.
\begin{lemma}~\label{lemma:exp}
Let $S \subseteq C$ be any subset of clients and let $f^{*}$ be any center in $L$. 
If we uniformly sample a point $x$ in $S$ and open a facility at the closest location in $L$, then the following identity holds:
\[
\mathbb{E}[\Phi(t(x),S)] \leq 3^{\ell} \cdot \Phi(f^{*},S),
\]
where $t(x)$ is the closest facility location from $x$.
\end{lemma}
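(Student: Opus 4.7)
The plan is to bound $\Phi(t(x),S)$ pointwise using the approximate triangle inequality (Fact~\ref{fact:approx_tri_ineq}) with $f^{*}$ as a pivot, then average over $x$. The idea is to route from $t(x)$ to an arbitrary $y\in S$ along the path $t(x)\to x\to f^{*}\to y$, which is a path of length three, so the four-point form of the approximate triangle inequality applies and contributes exactly the factor $3^{\ell-1}$ that we need.

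Concretely, I would fix $x,y\in S$ and write
\[
d^{\ell}(t(x),y) \;\leq\; 3^{\ell-1}\bigl(d^{\ell}(t(x),x)+d^{\ell}(x,f^{*})+d^{\ell}(f^{*},y)\bigr).
\]
Here the key structural step is to eliminate the $d^{\ell}(t(x),x)$ term using the definition of $t(x)$: since $t(x)$ is the closest facility in $L$ to $x$ and $f^{*}\in L$, we have $d(t(x),x)\leq d(x,f^{*})$, and therefore
\[
d^{\ell}(t(x),y) \;\leq\; 3^{\ell-1}\bigl(2\,d^{\ell}(x,f^{*})+d^{\ell}(f^{*},y)\bigr).
\]
Summing this inequality over $y\in S$ and recognizing $\sum_{y\in S}d^{\ell}(f^{*},y)=\Phi(f^{*},S)$, I would obtain
\[
\Phi(t(x),S) \;\leq\; 3^{\ell-1}\bigl(2\,|S|\cdot d^{\ell}(x,f^{*})+\Phi(f^{*},S)\bigr).
\]

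Finally, taking expectation over a uniformly random $x\in S$, we have $\mathbb{E}[d^{\ell}(x,f^{*})]=\Phi(f^{*},S)/|S|$, so the two contributions combine as
\[
\mathbb{E}[\Phi(t(x),S)] \;\leq\; 3^{\ell-1}\bigl(2\,\Phi(f^{*},S)+\Phi(f^{*},S)\bigr) \;=\; 3^{\ell}\cdot\Phi(f^{*},S),
\]
which is exactly the desired bound. There is no real obstacle here; the only thing to be careful about is choosing the four-point approximate triangle inequality (giving $3^{\ell-1}$) rather than the three-point version (which would give $2^{\ell-1}$ and a weaker final constant), and to use the minimality property of $t(x)$ to convert the $d^{\ell}(t(x),x)$ term into another copy of $d^{\ell}(x,f^{*})$ so that the $2\,|S|$ coefficient collapses cleanly after averaging.
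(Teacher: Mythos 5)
Your proof is correct and follows essentially the same route as the paper: the four-point approximate triangle inequality along the path $t(x)\to x\to f^{*}\to y$, replacement of $d^{\ell}(t(x),x)$ by $d^{\ell}(x,f^{*})$ via minimality of $t(x)$, then sum and average. The only difference is presentational (you sum over $y$ for fixed $x$ before averaging, the paper does both sums at once).
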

\begin{proof}
The proof follows from the following sequence of inequalities.
\begin{align*}
    \mathbb{E}[\Phi(t(x),S)] &= \frac{1}{|S|} \left( \sum_{x \in S} \Phi(t(x),S) \right)\\
    &= \frac{1}{|S|} \left( \sum_{x \in S} \sum_{x' \in S} d^{\ell}(t(x),x') \right) \\
    &\leq \frac{3^{\ell-1}}{|S|} \left( \sum_{x \in S}  \sum_{x' \in S}  \left(d^{\ell}(f^{*},x') + d^{\ell}(x,f^{*}) + d^{\ell}(t(x),x) \right) \right), \quad (\textrm{Using Fact~\ref{fact:approx_tri_ineq}}) \\
    &\leq \frac{3^{\ell-1}}{|S|} \left( \sum_{x \in S}  \sum_{x' \in S}  \left(d^{\ell}(f^{*},x') + d^{\ell}(x,f^{*}) + d^{\ell}(f^{*},x) \right) \right), \hspace{6.5mm} (\textrm{Using defn. of $t(x)$})\\
    &= \frac{3^{\ell-1}}{|S|} \left( \sum_{x \in S} \Phi(f^{*},S) +  \sum_{x' \in S} \Phi(f^{*},S) + \sum_{x' \in S} \Phi(f^{*},S) \right) \\
    &= \frac{3^{\ell-1}}{|S|} \bigg( 3 |S| \cdot \Phi(f^{*},S) \bigg) \\
    &= 3^{\ell} \cdot \Phi(f^{*},S)
\end{align*}
This completes the proof of the lemma.
\end{proof}
\noindent 
In the next lemma, we show that the flexibility to open a facility at any client location gives a better approximation guarantee.
\begin{lemma}~\label{lemma:exp2}
Let $S \subseteq C$ be any subset of clients and let $f^{*}$ be be any center in $L$. If we uniformly sample a point $x$ in $S$ and open a facility at $x$, then the following identity holds:
\[
\mathbb{E}[\Phi(x,S)] \leq 2^{\ell} \cdot \Phi(f^{*},S).
\]
\end{lemma}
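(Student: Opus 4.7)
The plan is to mirror the proof of Lemma~\ref{lemma:exp} almost verbatim, but exploit the fact that we no longer need the intermediate center $t(x)$. Since the facility is opened directly at the sampled point $x$, only three points are involved in the relevant distance bound, namely $x$, $x'$, and $f^*$, so the three-point version of Fact~\ref{fact:approx_tri_ineq} suffices and contributes a factor of $2^{\ell-1}$ rather than $3^{\ell-1}$.

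Concretely, I would first expand the expectation as
\[
\mathbb{E}[\Phi(x,S)] \;=\; \frac{1}{|S|}\sum_{x\in S}\Phi(x,S) \;=\; \frac{1}{|S|}\sum_{x\in S}\sum_{x'\in S} d^{\ell}(x,x').
\]
Next I would apply the three-point approximate triangle inequality through the reference center $f^{*}$ to each summand, obtaining
\[
d^{\ell}(x,x') \;\leq\; 2^{\ell-1}\bigl(d^{\ell}(x,f^{*}) + d^{\ell}(f^{*},x')\bigr).
\]
Summing over $x,x' \in S$, each of the two terms produces $|S|\cdot \Phi(f^{*},S)$, so the double sum is at most $2^{\ell-1}\cdot 2|S|\cdot \Phi(f^{*},S)$. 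Dividing by $|S|$ yields the claimed bound $2^{\ell}\cdot \Phi(f^{*},S)$.

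There is no real obstacle here; the only difference from Lemma~\ref{lemma:exp} is that the chain $x' \leftrightarrow f^{*} \leftrightarrow x$ has length two instead of three (we never route through $t(x)$), which is exactly what saves one factor and gives $2^{\ell}$ in place of $3^{\ell}$. This also makes precise the intuitive statement in the paper that allowing facilities to be opened at client locations improves the sampling guarantee, and is the reason the main theorem yields a $(2^{\ell}+\veps)$ bound in the special case $C\subseteq L$.
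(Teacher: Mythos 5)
Your proof is correct and is essentially identical to the paper's: both expand the expectation as a double sum over $S$, apply the three-point form of Fact~\ref{fact:approx_tri_ineq} through $f^{*}$ to get the factor $2^{\ell-1}$, and then collect the two copies of $\Phi(f^{*},S)$ to obtain $2^{\ell}\cdot\Phi(f^{*},S)$. No differences worth noting.
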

\begin{proof} The proof follows from the following inequalities.
\begin{align*}
    \mathbb{E}[\Phi(x,S)] &= \frac{1}{|S|} \left( \sum_{x \in S} \Phi(x,S) \right)\\
    &= \frac{1}{|S|} \left( \sum_{x \in S} \sum_{x' \in S} d^{\ell}(x,x') \right) \\
    &\leq \frac{2^{\ell-1}}{|S|} \left( \sum_{x \in S}  \sum_{x' \in S}  \left(d^{\ell}(f^{*},x') + d^{\ell}(x,f^{*}) \right) \right), \quad (\textrm{Using Fact~\ref{fact:approx_tri_ineq}}) \\
    &= \frac{2^{\ell-1}}{|S|} \bigg( 2 |S| \cdot \Phi(f^{*},S) \bigg) \\
    &= 2^{\ell} \cdot \Phi(f^{*},S)
\end{align*}
This completes the proof of the lemma.
\end{proof}

\section{Algorithm for List \texorpdfstring{$k$}{}-Service}
\label{section:list_k_service}
In this section, we design and analyze the list $k$-service algorithm. 
As described earlier, an FPT algorithm for the list $k$-service problem gives an FPT algorithm for a constrained version of the $k$-service problem that has an efficient or FPT-time partition algorithm. 
Our sampling based algorithm is similar to the algorithm of Goyal \emph{et al.}~\cite{gjk19} that was specifically designed for the Euclidean setting.
However, the analysis differs at various steps since we study the problem in general metric spaces, whereas Goyal \emph{et al.}~\cite{gjk19} studied the problem in the Euclidean space where $C \subseteq L = \mathbb{R}^{d}$. 
Following is our algorithm for the list $k$-service problem:

\begin{Algorithm}[h]
\begin{framed}
\LK ~($C, L, k, d, \ell, \veps$) \vspace{1mm}\\
\hspace*{0.3in} {\bf Inputs}: $k$-service instance $(C,L,k,d,\ell)$ and accuracy $\veps$ \\
\hspace*{0.3in} {\bf Output}: A list $\mathcal{L}$, each element in $\mathcal{L}$ being a $k$-center set \vspace*{2mm}\\
\hspace*{0.3in} {\bf Constants}: $\beta = 4^{\ell-1} \cdot \left( \myfrac[2pt]{\ell^{\ell} \cdot 3^{\ell^{2}+4\ell+3}}{\veps^{ \, \ell+1}}+1 \right)$; $\gamma = \myfrac[2pt]{
\ell^{\ell} \cdot 3^{\ell^{2} + 5\ell+1}}{\veps^{\, \ell}}$; $\mathbf{\eta} = \myfrac[2pt]{\alpha \, \beta \, \gamma \, k \cdot 3^{\ell+2}}{\veps^{2}}$\\[4pt]
\hspace*{0.1in} (1) \ \ \ Run any $\alpha$-approximation algorithm for the \emph{unconstrained} $k$-service  \\
\hspace*{0.1in} \ \ \ \ \ \ \ \ instance $(C,C,k,d,\ell)$ and let $F$ be the obtained center-set. \\
\hspace*{0.6in} ({\it $k$-means++ ~\cite{kmeanspp} is one such algorithm.})\\
\hspace*{0.1in} (2) \ \ \ $\mathcal{L} \gets \emptyset$\\
\hspace*{0.1in} (3) \ \ \ Repeat $2^k$ times:\\
\hspace*{0.1in} (4)\hspace*{0.3in}  \ \ \ Sample a multi-set $M$ of $\eta k$ points from $C$ using $D^{\ell}$-sampling w.r.t. \\
\hspace*{0.1in} \hspace*{0.5in} \ \ \ center set $F$\\
\hspace*{0.1in} (5)\hspace*{0.3in}  \ \ \ $M \gets M \cup F$ \\
\hspace*{0.1in} (6)\hspace*{0.3in}  \ \ \ $T \gets \emptyset$ \\
\hspace*{0.1in} (7)\hspace*{0.3in} \ \ \ For every point $x$ in $M$:\\
\hspace*{0.1in} (8)\hspace*{0.9in} 
$T \gets T \cup \{k \text{ points in $L$ that are closest to $x$}\}$\\
\hspace*{0.1in} (9)\hspace*{0.3in} \ \ \ For all subsets $S$ of $T$ of size $k$:\\
\hspace*{0.1in} (10)\hspace*{0.84in} $\mathcal{L} \gets \mathcal{L} \cup \{ S\}$\\
\hspace*{0.1in} (11) \ \  return($\mathcal{L}$)
\end{framed}
\vspace*{-4mm}
\caption{Algorithm for the list $k$-service problem}
\label{algorithm:kmeans}
\end{Algorithm}

\noindent We start with the main intuition in the next subsection before going into the details of the proof.

\subsection{Algorithm Description and Intuition}
In the first step, we obtain a center-set $F \subseteq C$, which is an $\alpha$-approximation for the \emph{unconstrained} $k$-service instance $(C,C,k,d,\ell)$.
That is,
$$\Phi(F,C) \leq \alpha \cdot OPT(C,C).$$ 
One such algorithm is the $k$-means++ algorithm~\cite{kmeanspp} that gives an $O(4^{\ell} \cdot \log k)$-approximation guarantee and a running time $O(nk)$.
Now, let us see how set $F$ can help us. 
Let us focus on any cluster $C_{i}$ of a target clustering $\C = \{C_{1}, \dotsc, C_{k}\}$. 
Our main objective would be to uniformly sample a point from $C_{i}$, so that we could achieve a constant approximation for $C_{i}$  using Lemma~\ref{lemma:exp}. 
We will do a case analysis based on the distance of points in $C_i$ from the nearest point in $F$.
Consider the following two possibilities:
The first possibility is that the points in $C_{i}$ are close to $F$. 
If this is the case, we can uniformly sample a point from $F$ instead of $C_{i}$. 
Note that we cannot uniformly sample from $C_i$ even if we wanted to since $C_{i}$ is not known to us. 
This would incur some extra cost. 
However, the cost is small and can be bounded easily. 
To cover this first possibility, the algorithm adds the entire set $F$ to the set of sampled points $M$ (see line (5) of the algorithm).
The second possibility is that the points in $C_{i}$ are far-away from $F$. 
In this case, we can $D^{\ell}$-sample the points from $C$.
Since the points in $C_{i}$ are far away, the sampled set would contain a good portion of points from $C_{i}$ and the points will be almost uniformly distributed. 
We will show that almost uniform sampling is sufficient to apply lemma~\ref{lemma:exp} on $C_{i}$. 
However, we would have to sample a large number of points to boost the success probability. 
This requirement is taken care of by line (4) of the algorithm.
Note that we may need to use a hybrid approach for analysis since the real case may be a combination of the first and second possibility.

To apply lemma~\ref{lemma:exp}, we need to fulfill one more condition, i.e., we need the closest facility location from a sampled point. This requirement is handled by lines (7) and (8) of the algorithm. However, note that the algorithm picks $k$-closest facility locations instead of just one facility location. We will show that this step is crucial to obtain a \emph{hard-assignment} solution for the problem. At last, the algorithm adds all the potential center sets to a list $\mathcal{L}$ (see line (9) and (10) of the algorithm). The algorithm repeats this procedure $2^{k}$ times to boost the success probability (see line (3) of the algorithm).
We will show the following main result.

\begin{theorem}\label{theorem:list_k_service}
Let $0 < \veps \leq 1$.
Let $(C,L,k,d,\ell)$ be any $k$-service instance and let $\mathcal{C} = \{C_{1}, C_{2},\dotsc,C_{k}\}$ be any arbitrary clustering of the client set.
The algorithm ~ \LK($C,L,k,d,\ell,\veps$), with probability at least $1/2$ outputs a list $\mathcal{L}$ of size $(k /\veps)^{O(k \, \ell^{\,2})}$, such that there is a $k$ center set $S \in \mathcal{L}$ in the list such that
\[
    \Psi (S,\C) \leq (3^{\ell} + \veps) \cdot \Psi^{*}(\C).
\]
Moreover, the running time of the algorithm is $O \left( n \cdot (k /\veps)^{O(k \, \ell^{\,2})} \right)$.
\end{theorem}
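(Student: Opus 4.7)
\medskip

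\noindent\textbf{Proof proposal.} Fix the target clustering $\mathcal{C}=\{C_1,\ldots,C_k\}$ together with any optimal center-set $F^{*}=\{f_1^{*},\ldots,f_k^{*}\}$ realising $\Psi^{*}(\mathcal{C})$. The plan is to show that in a single pass through the body of the repeat loop (lines~(4)--(10)), with probability at least $2^{-k}$ the candidate pool $T$ contains $k$ distinct facility locations $g_1,\ldots,g_k$ such that $\sum_i \Phi(g_i,C_i)\le (3^{\ell}+\veps)\,\Psi^{*}(\mathcal{C})$. Amplifying over the $2^k$ iterations boosts the success probability to a constant; enumerating all $k$-subsets of $T$ in line~(9) guarantees that the winning combination appears in $\mathcal{L}$.

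For a single cluster $C_i$ the analysis splits according to a threshold on $\Phi(F,C_i)$. Call $C_i$ \emph{cheap} if $\Phi(F,C_i)\le \beta\cdot\Phi(f_i^{*},C_i)$ and \emph{expensive} otherwise. If $C_i$ is cheap, then the point $f\in F$ closest (on average) to $C_i$ is, by Fact~\ref{fact:approx_tri_ineq} together with Lemma~\ref{lemma:exp}, a witness whose nearest facility in $L$ already costs at most $3^{\ell}(1+O(\veps))\Phi(f_i^{*},C_i)$; crucially $f$ is \emph{deterministically} in $M$ because line~(5) adds all of $F$. If $C_i$ is expensive, then
\[
\frac{\Phi(F,C_i)}{\Phi(F,C)} \;\ge\; \frac{\beta\,\Phi(f_i^{*},C_i)}{\alpha\cdot \mathrm{OPT}(C,C)},
\]
so a single $D^{\ell}$-sample falls in $C_i$ with probability at least a calibrated fraction, and $\eta k$ samples hit $C_i$ at least $\Omega(\gamma)$ times in expectation with high probability. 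Within $C_i$ the conditional distribution is proportional to $\min_{f\in F}d^{\ell}(f,\cdot)$, and in the expensive regime the distance from $C_i$ to $F$ is so much larger than $\mathrm{diam}(C_i)$ that this distribution is $(1\pm\veps)$-close to uniform on $C_i$. An almost-uniform sample combined with Lemma~\ref{lemma:exp} produces, in expectation, a witness whose nearest facility costs $3^{\ell}(1+O(\veps))\Phi(f_i^{*},C_i)$; the large multiplicity $\gamma$ together with Markov lets us convert ``in expectation'' to ``with probability $\ge 1/2$ at least one such witness appears''.

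To pass from per-cluster witnesses to a single $k$-tuple one has to address two couplings. First, the success events across the $k$ expensive clusters are independent (they depend on disjoint samples), so a union bound gives success probability at least $2^{-k}$ for all clusters simultaneously, matching the outer repetition count. Second, the $k$ witnesses may share nearest neighbours in $L$, which would prevent a hard assignment; this is exactly the role of taking the $k$ closest facility locations in line~(8), since even if the top $i-1$ slots for witness $i$ are already used by previous clusters, the $i$-th closest facility is still within distance $\sum_{j<i} d(\cdot,f_j^{*})+\mathrm{diam}$ by a telescoping triangle-inequality argument, which is absorbed by the slack in $\beta,\gamma$. Enumerating $k$-subsets in line~(9) picks this hard-assignable tuple up automatically, and Theorem~\ref{theorem: list_alpha_approx} then turns it into the claimed bound.

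Finally, $|T|\le k(\eta k+k)$ and $\binom{|T|}{k}\cdot 2^{k}=(k/\veps)^{O(k\ell^{2})}$, which gives both the stated list size and, with line~(4)'s $O(n)$ sampling and line~(8)'s nearest-neighbour scan, the stated running time. The $C\subseteq L$ refinement is obtained by replacing Lemma~\ref{lemma:exp} with Lemma~\ref{lemma:exp2} throughout, saving one application of the approximate triangle inequality and yielding $2^{\ell}+\veps$ instead of $3^{\ell}+\veps$. The main technical obstacle will be the expensive case of the per-cluster argument: quantifying how ``far from $F$'' translates into ``almost uniform on $C_i$'' with only a multiplicative $(1+\veps)$ loss (rather than a $\mathrm{poly}(1/\veps)$ loss that would destroy the $3^{\ell}+\veps$ bound). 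This is precisely where the constants $\beta,\gamma,\eta$ in the algorithm are tuned, and where the bulk of the calculation will lie.
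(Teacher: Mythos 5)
Your high-level skeleton matches the paper: line (5) handles clusters whose points are close to $F$, $D^{\ell}$-sampling handles far clusters, taking $k$ closest facilities in line (8) is for hard assignment, and $2^k$ repetitions boost the per-iteration success probability $2^{-k}$ to a constant. However, there are three places where your argument, as written, would not go through, and they are exactly where the paper's technical work lives.

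\medskip
\noindent\textbf{The cheap/expensive threshold.}
You split on $\Phi(F,C_i) \le \beta\cdot\Phi(f_i^{*},C_i)$, i.e., a comparison with the optimal per-cluster cost $\Delta(C_i)$. This is the wrong axis: $\Delta(C_i)$ is unknown and can be arbitrarily small (even zero). In the expensive case your lower bound
\[
\frac{\Phi(F,C_i)}{\Phi(F,C)} \;\ge\; \frac{\beta\,\Phi(f_i^{*},C_i)}{\alpha\cdot \mathrm{OPT}(C,C)}
\]
can therefore be arbitrarily tiny, and the fixed budget $\eta k$ of $D^{\ell}$-samples gives you nothing. The paper instead splits on $\Phi(F,C_i)$ versus $\frac{\veps}{\alpha\gamma k}\Phi(F,C)$: this guarantees each ``high-cost'' cluster takes a constant (in $n$) fraction of the total $D^{\ell}$-mass, and the resulting additive slack $\frac{\veps}{2^{\ell+1}k}\mathrm{OPT}(C,C)$ in the low-cost case is then absorbed into the final $(3^{\ell}+\veps)\Psi^{*}(\mathcal{C})$ bound.

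\medskip
\noindent\textbf{The ``almost uniform'' claim.}
You assert that expensiveness implies $d(C_i,F)\gg\mathrm{diam}(C_i)$, making the conditional $D^{\ell}$-distribution on $C_i$ within $(1\pm\veps)$ of uniform. This is false: $\Phi(F,C_i)$ controls only the \emph{average} distance to $F$; a high-cost cluster may still contain many points sitting essentially on top of $F$, and those points have near-zero $D^{\ell}$-probability. This is precisely why the paper further decomposes each high-cost cluster into a near part $C_i^n$ and a far part $C_i^f$ with the $R$-ball threshold $R^{\ell}=\tfrac{1}{\beta}\tfrac{\Phi(F,C_i)}{|C_i|}$: the paper then forms the hybrid sample space $M_i = C_i^f \cup \{c(x):x\in C_i^n\}$ and proves the uniform-sampling expectation bound (Lemma~\ref{lemma:far_cluster}) over this hybrid, with Lemma~\ref{lemma:radius} controlling the damage from using $F$ as a proxy for $C_i^n$. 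A ``good'' point is then either in $F$ (deterministically in $M$) or in $C_i^f$ (each such point has $D^{\ell}$-probability $\ge\tau$ by Lemma~\ref{lemma:d2_sample}); there is no uniformity claim and no $(1\pm\veps)$ loss. Dropping this near/far split is the central gap in your proposal.

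\medskip
\noindent\textbf{Hard assignment.}
Your telescoping-triangle-inequality argument, in which the $i$-th closest facility pays distance $\sum_{j<i}d(\cdot,f_j^{*})+\mathrm{diam}$, introduces unbounded extra loss that the constants $\beta,\gamma$ cannot absorb without destroying the $3^{\ell}+\veps$ factor. The paper's argument is lossless and simpler: $T(s_i)$ is the set of $k$ facilities closest to $s_i$; if $f_i^{*}\in T(s_i)$ take it, otherwise \emph{every} facility in $T(s_i)$ is strictly closer to $s_i$ than $f_i^{*}$, and since at most $k-1$ facilities are already used, at least one unused one remains, and it satisfies $d(s_i,f)\le d(s_i,f_i^{*})$ exactly as required by Property-I. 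No additional distance term appears at all.

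\medskip
Fixing all three is essentially reconstructing the paper's proof; as it stands yours would fail at the probability lower bound in the expensive case and at the hard-assignment step.
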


\subsection{Analysis}
Let $\C = \{C_{1}, C_{2}, \dotsc, C_{k}\}$ be the (unknown) target clustering and $F^{*} = \{f_{1}^{*}, f_{2}^{*}, \dotsc, f_{k}^{*}\}$ be the corresponding optimal center set.
Let $\Delta(C_{i})$ denote the cost of a cluster $C_{i}$ with respect to $F$, i.e., $\Delta(C_{i}) = \Phi(f_{i}^{*},C_{i})$.  Let us classify the clusters into two categories: $W$ and $H$.
\[
W \coloneqq \{C_{i} \mid \Phi(F,C_{i}) \leq \frac{\veps}{\alpha \, \gamma \, k} \cdot \Phi(F,C
), \textrm{ for } 1 \leq i \leq k\}
\]
\[
H \coloneqq \{C_{i} \mid \Phi(F,C_{i}) > \frac{\veps}{ \alpha \, \gamma \, k} \cdot \Phi(F,C), \textrm{ for } 1 \leq i \leq k\}
\]
In other words, $W$ contains the \emph{low-cost} clusters and $H$ contains the \emph{high-cost} clusters with respect to $F$. Now, let us look at the set $M$ obtained by lines (4) and (5) of the algorithm. The set $M$ contains some $D^{\ell}$-sampled points from $C$ and the center set $F$. We show that $M$ has the following property.
\begin{quote}{\textbf{Property}-\boldmath{$\rm I$}:}\label{property:1}
For any cluster $C_{i} \in \{C_{1}, C_{2}, \dotsc, C_{k}\}$,
with probability at least $1/2$, there is a point $s_{i}$ in $M$ such that such that the following holds:
\[
    \Phi(t(s_{i}),C_{i}) \leq 
\begin{cases}
    \Big( 3^{\ell} + \fdfrac{\veps}{2} \Big) \cdot \Delta(C_{i}) + \fdfrac{ \veps}{2^{\ell+1} \, k} \cdot OPT(C,C),& \textrm{if } C_{i} \in W \\[7pt]
    \Big( 3^{\ell}+\fdfrac{\veps}{2}
    \Big) \cdot \Delta(C_{i}) ,& \textrm{if } C_{i} \in H
\end{cases}
\]
where $t(s_{i})$ denotes any facility location that is closer to $s_i$ than $f_{i}^{*}$, i.e., $d(s_{i},t(s_{i})) \leq d(s_{i},f_{i}^{*})$.
\end{quote}

\noindent First, let us see how this property gives the desired result. By Fact~\ref{fact:optimal}, we have  $OPT(C,C) \leq 2^{\ell} \cdot OPT(L,C)$. Moreover, the optimal cost $OPT(L,C)$ of the unconstrained $k$-service instance is always less than the constrained $k$-service cost $\sum_{i=1}^{k} \Delta(C_{i})$. Therefore, Property-I implies that $T_{s} \coloneqq \{t(s_{1}), t(s_{2}), \dotsc, \allowbreak t(s_{k})\}$ is a $\left( 3^{\ell} + \veps \right)$-approximation for $\C$, with probability at least $1/2^{k}$.
Now, note that, the facility locations that are closest to $s_{i}$ satisfy the definition of $t(s_{i})$. Moreover, the algorithm adds one such facility location to set $T$ (see line (8) of the algorithm).
Thus there is a center-set $T_{s}$ in the list that gives $(3^{\ell} + \veps)$-approximation for $\C$.  To boost the success probability to $1/2$, the algorithm repeats the procedure $2^{k}$ times (see line (3) of the algorithm). Based on these arguments, it looks like we got the desired result.
However, there is one issue that we need to take care of.
Remember, we are looking for a hard assignment for the problem, and the set $T_{s}$ could be a soft center-set, since the closest facility locations might be same for $s_{i}$'s. 
In other words, $t(s_{i})$ could be same as $t(s_{j})$ for some $i \neq j$. 
At the end of this section we will show that there is indeed a hard center-set in the list $\mathcal{L}$, that gives the required approximation for the problem.
For now let us prove Property-I for $M$ and the target clusters. 
First consider the case of low-cost clusters as follows.\\

\noindent \textbf{Case 1: \boldmath{$\Phi(F,C_{i}) \leq \fdfrac{\veps}{\alpha \, \gamma \, k} \cdot \Phi(F,C)$}}
~\\~\\
For a point $x \in \X$, let $c(x)$ denote the closest location in $F$. 
Based on this definition, consider a multi-set $M_{i} \coloneqq \{ c(x) \mid x \in C_{i} \}$. 
Since $C_{i}$ has a low cost with respect to $F$, the points in $C_{i}$ are close to from points from $F$. 
Consider uniformly sampling a point from $M_{i}$. 
In the next lemma, We show that a uniformly sampled point from $M_i$ is a good enough center for $C_{i}$.
\begin{lemma}~\label{lemma:close_cluster}
Let $p$ be a point sampled uniformly at random from $M_i$. Then the following bound holds:
\[
    \mathbb{E}[\Phi(t(p),C_{i})] \leq \left( 3^{\ell} + \frac{\veps}{2} \right) \cdot \Delta(C_{i}) + \frac{ \veps}{2^{\ell+1} \, k} \cdot OPT(C,C).
\]
\end{lemma}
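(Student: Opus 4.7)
The plan is a two-sided ``snap/unsnap'' argument. I first observe that drawing $p$ uniformly from the multi-set $M_i = \{c(x):x\in C_i\}$ is equivalent to drawing $x$ uniformly from $C_i$ and setting $p=c(x)$, so
\[
\mathbb{E}[\Phi(t(p),C_i)] \;=\; \frac{1}{|C_i|}\sum_{x\in C_i}\sum_{y\in C_i} d^{\,\ell}(t(c(x)),y).
\]
The high-level idea is: snap each $y\in C_i$ to its $F$-representative $c(y)\in M_i$, apply Lemma~\ref{lemma:exp} on the snapped multi-set $M_i$ whose elements already live in $F$, and then unsnap $\Phi(f_i^{*},M_i)$ back to $\Delta(C_i)=\Phi(f_i^{*},C_i)$. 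Each snap step produces an additive error controlled by $\Phi(F,C_i)$, which is tiny by the $C_i\in W$ hypothesis.

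To execute this, Fact~\ref{fact:trade_off} applied to the triangle inequality $d(t(c(x)),y)\le d(t(c(x)),c(y)) + d(c(y),y)$ with a parameter $\delta>0$ gives, after summing over $y$ and averaging over $x$,
\[
\mathbb{E}[\Phi(t(p),C_i)] \;\le\; (1+\delta)^{\ell}\,\mathbb{E}[\Phi(t(p),M_i)] \;+\; (1+1/\delta)^{\ell}\,\Phi(F,C_i).
\]
Since $t(p)$ is by definition the facility in $L$ nearest to the uniformly sampled element $p\in M_i$, Lemma~\ref{lemma:exp} applied to $M_i$ with reference center $f_i^{*}$ yields $\mathbb{E}[\Phi(t(p),M_i)] \le 3^{\ell}\,\Phi(f_i^{*},M_i)$. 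A second application of Fact~\ref{fact:trade_off} on $d(f_i^{*},c(x))\le d(f_i^{*},x) + d(x,c(x))$ gives
\[
\Phi(f_i^{*},M_i) \;\le\; (1+\delta)^{\ell}\,\Delta(C_i) \;+\; (1+1/\delta)^{\ell}\,\Phi(F,C_i).
\]
Chaining the three bounds produces
\[
\mathbb{E}[\Phi(t(p),C_i)] \;\le\; 3^{\ell}(1+\delta)^{2\ell}\,\Delta(C_i) \;+\; K(\delta,\ell)\,\Phi(F,C_i),
\]
where $K(\delta,\ell) = 3^{\ell}(1+\delta)^{\ell}(1+1/\delta)^{\ell} + (1+1/\delta)^{\ell} = O(\delta^{-\ell})$.

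The remaining work is bookkeeping. Choosing $\delta = \Theta\bigl(\veps/(\ell\cdot 3^{\ell})\bigr)$ and invoking Fact~\ref{fact:bin_approx} shrinks the leading coefficient to $3^{\ell}+\veps/2$. For the additive term I use the hypothesis $C_i\in W$ chained with the $\alpha$-approximation guarantee of line~(1) to conclude $\Phi(F,C_i) \le \myfrac{\veps}{\alpha\gamma k}\,\Phi(F,C) \le \myfrac{\veps}{\gamma k}\,OPT(C,C)$. The algorithm's constant $\gamma = \ell^{\ell}\cdot 3^{\ell^{2}+5\ell+1}/\veps^{\,\ell}$ is calibrated precisely so that $K(\delta,\ell)\cdot \myfrac{\veps}{\gamma k} \le \myfrac{\veps}{2^{\ell+1}k}$, which closes the proof. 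The main obstacle is not any deep step but the careful propagation of these $O(\veps^{-\ell})$ constants: I must verify that the prescribed $\gamma$ actually absorbs the blow-up from both trade-offs simultaneously, while the choice of $\delta$ keeps the leading coefficient below $3^{\ell}+\veps/2$.
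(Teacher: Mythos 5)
Your proof is correct, but it takes a genuinely different route from the paper's. The paper collapses the whole chain
\(d(x,t(c(x')))\le d(x,x')+2d(x',c(x'))+d(x',f_i^*)\)
by a sequence of triangle inequalities and the definition of \(t\), and then performs a \emph{single} application of Fact~\ref{fact:trade_off} to split off the error term \(2^\ell d^\ell(x',c(x'))\), leaving the main term \((d(x,x')+d(x',f_i^*))^\ell\) which it bounds by \(3^\ell\Delta(C_i)\) with a Lemma~\ref{lemma:exp}-style three-term power-mean argument. You instead factor the argument into two chained trade-offs (snap \(C_i\to M_i\), apply Lemma~\ref{lemma:exp} literally to the multi-set \(M_i\) with reference center \(f_i^*\), then unsnap \(\Phi(f_i^*,M_i)\to\Delta(C_i)\)). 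Your version is conceptually cleaner in that Lemma~\ref{lemma:exp} is invoked exactly as stated rather than being re-proved in situ, at the cost of producing a leading coefficient \(3^\ell(1+\delta)^{2\ell}\) and an error coefficient \(K(\delta,\ell)=3^\ell(1+\delta)^\ell(1+1/\delta)^\ell+(1+1/\delta)^\ell\) which is a (bounded) constant factor larger than the paper's \(2^\ell(1+\delta)^\ell\). Both fit comfortably within the algorithm's prescribed \(\gamma=\ell^\ell 3^{\ell^2+5\ell+1}/\veps^\ell\) -- e.g., with \(\delta=\veps/(8\ell 3^\ell)\) one has \(K(\delta,\ell)\le 2\cdot 96^\ell\ell^\ell 3^{\ell^2}/\veps^\ell\) and the required inequality \(2^{\ell+1}K(\delta,\ell)\le\gamma\) reduces to \(4\cdot 192^\ell\le 3\cdot 243^\ell\), which holds for all \(\ell\ge 1\). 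Two small points you should make explicit: (i) Lemma~\ref{lemma:exp} is stated for subsets \(S\subseteq C\), so you should note that \(M_i\) is a multi-set supported on \(F\subseteq C\) (this holds because line~(1) solves the instance \((C,C,k,d,\ell)\)), and that the proof of Lemma~\ref{lemma:exp} is insensitive to multiplicities; (ii) in the context of Property-I, \(t(p)\) is \emph{any} facility with \(d(p,t(p))\le d(p,f_i^*)\), not necessarily the nearest one -- this is fine since Lemma~\ref{lemma:exp}'s proof only uses that one inequality, but your phrase ``the facility nearest to \(p\)'' should be weakened accordingly.
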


\begin{proof}
The proof follows from the following sequence of inequalities.
\begin{align*}
    \mathbb{E}[\Phi(t(p),C_{i})] &= \frac{1}{|C_{i}|} \cdot \left( \sum_{p \in M_{i}} \Phi(t(p),C_{i}) \right) \\
    &= \frac{1}{|C_{i}|} \cdot \left(  \sum_{p \in M_{i}} \sum_{x \in C_{i}} d^{\ell}(x,t(p))  \right) \\
    &= \frac{1}{|C_{i}|} \cdot \left( \sum_{x' \in C_{i}} \sum_{x \in C_{i}} d^{\ell}(x,t(c(x'))) \right) \\
    &\leq \frac{1}{|C_{i}|} \cdot \left( \sum_{x' \in C_{i}} \sum_{x \in C_{i}} \left( d(x,x') + d(x',c(x')) + d(c(x'),t(c(x'))) \right)^{\ell} \right), \\
    &\qquad \qquad \qquad \qquad \qquad \qquad \qquad \qquad \textrm{by triangle inequality} \\
    &\leq \frac{1}{|C_{i}|} \cdot \left( \sum_{x' \in C_{i}} \sum_{x \in C_{i}} \left( d(x,x') + d(x',c(x')) + d(c(x'),f^{*}_{i}) \right)^{\ell} \right), \\
    &\qquad \qquad \qquad \qquad \qquad \qquad \qquad \qquad \textrm{by the defn. of $t(x)$} \\
    &\leq \frac{1}{|C_{i}|} \cdot \left( \sum_{x' \in C_{i}} \sum_{x \in C_{i}} \left( d(x,x') + d(x',c(x')) + d(x',c(x')) + d(x',f_{i}^{*}) \right)^{\ell} \right), \\
    &\qquad \qquad \qquad \qquad \qquad \qquad \qquad \qquad \textrm{by triangle inequality}
\end{align*}
Let us use Fact~\ref{fact:trade_off}, by setting $a = 2 \cdot d(x',c(x'))$ and $b = d(x,x') + d(x',f_{i}^{*})$. We get the following expression:
\begin{align*}
    \mathbb{E}[\Phi(t(p),C_{i})] &\leq \frac{1}{|C_{i}|} \cdot \left( \sum_{x' \in C_{i}} \sum_{x \in C_{i}} \left( \left(1+\frac{1}{\delta} \right)^{\ell} \cdot (d(x,x') + d(x',f_{i}^{*}))^{\ell} + (1+\delta)^{\ell} \cdot 2^{\ell} \cdot d^{\ell}(x',c(x')) \right) \right), \\
    &\hspace{100mm} \textrm{for any $\delta > 0 $} \\
    &=  \left(1+\frac{1}{\delta} \right)^{\ell} \cdot \frac{1}{|C_{i}|} \cdot \left(  \sum_{x' \in C_{i}} \sum_{x \in C_{i}} (d(x,x') + d(x',f_{i}^{*}))^{\ell} \right) + \\
    &\qquad \qquad \qquad \qquad \qquad (1+\delta)^{\ell} \cdot \frac{1}{|C_{i}|} \cdot \left( \sum_{x' \in C_{i}} |C_{i}| \cdot  2^{\ell} \cdot d^{\ell}(x',c(x')) \right)
\end{align*}
By lemma~\ref{lemma:exp}, we have $\mathbb{E}[\Phi(t(x),C_{i})] \leq \frac{1}{|C_{i}|} \cdot \Big( \sum_{x' \in C_{i}} \sum_{x \in C_{i}} (d(x,x') + d(x',f_{i}^{*}) )^{\ell} \Big) \leq 3^{\ell} \cdot \Delta(C_{i})$. Thus, we get:

\begin{align*}
    \mathbb{E}[\Phi(t(p),C_{i})] &\leq \left(1+\frac{1}{\delta} \right)^{\ell} \cdot 3^{\ell} \cdot \Delta(C_{i}) + (1+\delta)^{\ell} \cdot \frac{1}{|C_{i}|} \cdot \left( \sum_{x' \in C_{i}} |C_{i}| \cdot  2^{\ell} \cdot d^{\ell}(x',c(x')) \right) \\
    &= \left(1+\frac{1}{\delta} \right)^{\ell} \cdot 3^{\ell} \cdot \Delta(C_{i}) + 2^{\ell} (1+\delta)^{\ell} \cdot \Phi(F,C_{i}) \\
    &= \left(1+\frac{\veps}{\ell \cdot  3^{\ell+2}} \right)^{\ell} \cdot 3^{\ell} \cdot \Delta(C_{i}) + 2^{\ell}(1+\delta)^{\ell} \cdot \Phi(F,C_{i}), \hspace{9mm} \textrm{by substituting $\delta = \frac{\ell \cdot 3^{\ell+2}}{\veps}$ }\\
    &\leq \left(1+2 \ell \cdot \frac{\veps}{\ell \cdot 3^{\ell+2}} \right) \cdot 3^{\ell} \cdot \Delta(C_{i}) + 2^{\ell} (1+\delta)^{\ell} \cdot \Phi(F,C_{i}), \hspace{4mm} \textrm{by Fact~\ref{fact:bin_approx}} \\
    &\leq \left( 3^{\ell} + \frac{\veps}{2} \right) \cdot \Delta(C_{i}) + 2^{\ell} (1+\delta)^{\ell} \cdot \Phi(F,C_{i}), \\
    &\leq \left( 3^{\ell} + \frac{\veps}{2} \right) \cdot \Delta(C_{i}) + 2^{\ell} (2\delta)^{\ell} \cdot \Phi(F,C_{i}), \hspace{11mm} \textrm{$\because 1 \leq \delta$, for $\veps \leq 1$}\\
    &\leq \left( 3^{\ell} + \frac{\veps}{2} \right) \cdot \Delta(C_{i}) + \frac{4^{\ell} \cdot \delta^{\ell} \cdot \veps}{\alpha \, \gamma \, k} \cdot \Phi(F,C), \hspace{9mm} \textrm{$\because \Phi(F,C_{i}) \leq \frac{\veps}{\alpha \, \gamma \, k} \cdot \Phi(F,C)$}\\
    &\leq \left( 3^{\ell} + \frac{\veps}{2} \right) \cdot \Delta(C_{i}) + \frac{\veps}{2^{\ell+1} \, \alpha \, k} \cdot \Phi(F,C), \hspace{10mm} \textrm{$\because \gamma = \frac{\ell^{\ell} \cdot 3^{\ell^{2} + 5\ell + 1}}{\veps^{\ell}}$}\\
    &\leq \left( 3^{\ell} + \frac{\veps}{2} \right) \cdot \Delta(C_{i}) + \frac{\veps}{2^{\ell+1} \, k} \cdot OPT(C,C), \hspace{7mm} \textrm{$\because \Phi(F,C) \leq \alpha \cdot OPT(C,C)$} \\
\end{align*}
This completes the proof of the lemma.
\end{proof}

\noindent Since the above lemma estimates the average cost corresponding to a sampled point, there has to be a point $p$ in $M_{i}$ such that $\Phi(t(p),C_{i}) \leq \left( 3^{\ell} + \fdfrac{\veps}{2} \right) \cdot \Delta(C_{i}) + \fdfrac{\veps}{2^{\ell+1}\,k} \cdot OPT(C,C)$. 
Since $M_{i}$ is only composed of the points from $F$ and we keep the entire set $F$ in $M$ (see line (5) of the algorithm), therefore Property-I is satisfied for every cluster $C_{i} \in W$. 
Let us now prove Property I for the high cost clusters.
~\\ \\
\textbf{Case 2: \boldmath{$\Phi(F,C_{i}) > \fdfrac{\veps}{\alpha \, \gamma \, k} \cdot \Phi(F,C)$}}. ~\\~\\
Since the cost of the cluster is high, some points of $C_{i}$ are far away from the center set $F$. 
We partition $C_{i}$ into two sets: $C^{n}_{i}$ and $C_{i}^{f}$, as follows.
\[
C_{i}^{n} \coloneqq \{ x \mid d^{\ell}(c(x),x) \leq R^{\ell}, \text{ for } x \in C_{i} \}, \quad \textrm{where } R^{\ell} = \frac{1}{\beta} \cdot \frac{\Phi(F,C_{i})}{|C_{i}|}
\]
\[
C_{i}^{f} \coloneqq \{ x \mid d^{\ell}(c(x),x) > R^{\ell}, \text{ for } x \in C_{i} \}, \quad \textrm{where } R^{\ell} = \frac{1}{\beta} \cdot \frac{\Phi(F,C_{i})}{|C_{i}|}
\]
In other words, $C_{i}^{n}$ represents the set of points that are \emph{near} to the center set $F$ and $C_{i}^{f}$ represents the set of points that are \emph{far} from the center set $F$. Recall that our prime objective is to obtain a uniform sample from $C_{i}$, so that we can apply lemma~\ref{lemma:exp}. 
To achieve that we consider sampling from $C_{i}^{n}$ and $C_{i}^{f}$ separately. The idea is as follow. 
To sample a point from $C_{i}^{f}$ we use the $D^{\ell}$-sampling technique and show that it gives an almost uniform sample from $C_{i}^{f}$. 
For $C_{i}^{n}$, we will use $F$ as its proxy, and sample a point from $F$ instead. However, doing so would incur an extra cost. We will show that the extra cost is proportional to $\Phi(F,C_{i}^{n})$, which can be bounded easily. To bound the extra cost we will use the following lemma.

\begin{lemma}\label{lemma:radius}
For $R^{\ell} = \myfrac[1pt]{1}{\beta} \cdot \myfrac[2pt]{\Phi(F,C_{i})}{|C_{i}|}$, we have $\Phi(F,C_{i}^{n}) \leq \myfrac[2pt]{\veps^{\ell+1}}{\ell^{\, \ell} \cdot 3^{\, \ell^{2}+5\ell+2}} \cdot \Delta(C_{i})$.
\end{lemma}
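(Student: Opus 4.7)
The plan is straightforward. The defining inequality for $C_i^n$ immediately reduces the statement to a bound on $\Phi(F,C_i)$: every $x\in C_i^n$ satisfies $d^{\ell}(c(x),x)\le R^{\ell}$, so
$$\Phi(F,C_i^n) \;\le\; |C_i^n|\cdot R^{\ell} \;\le\; |C_i|\cdot R^{\ell} \;=\; \frac{\Phi(F,C_i)}{\beta}.$$
Hence it suffices to argue $\Phi(F,C_i) \le K\cdot \Delta(C_i)$ for some $K$ with $K/\beta \le \veps^{\ell+1}/(\ell^{\ell}\cdot 3^{\ell^2+5\ell+2})$.

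To compare $\Phi(F,C_i)$ with $\Delta(C_i)=\Phi(f_i^*,C_i)$, I would let $\tilde f\in F$ be the point of $F$ closest to $f_i^*$. Since $c(x)$ minimises over $F$, $d(c(x),x)\le d(\tilde f,x)\le d(\tilde f,f_i^*)+d(f_i^*,x)$. Applying Fact~\ref{fact:trade_off} with $\delta=\veps/(\ell\cdot 3^{\ell+2})$ and summing over $x\in C_i$ gives
$$\Phi(F,C_i) \;\le\; (1+\delta)^{\ell}\,\Delta(C_i) \;+\; (1+1/\delta)^{\ell}\cdot |C_i|\cdot d^{\ell}(\tilde f,f_i^*).$$
Fact~\ref{fact:bin_approx} then yields $(1+\delta)^{\ell}\le 1+\veps/2$, while $(1+1/\delta)^{\ell}$ has order $\ell^{\ell}\cdot 3^{\ell(\ell+2)}/\veps^{\ell}$.

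The leftover term $|C_i|\,d^{\ell}(\tilde f,f_i^*)$ I would handle by averaging the approximate triangle inequality $d^{\ell}(\tilde f,f_i^*)\le 2^{\ell-1}\bigl(d^{\ell}(\tilde f,x)+d^{\ell}(x,f_i^*)\bigr)$ (Fact~\ref{fact:approx_tri_ineq}) over $x\in C_i$, using $d(\tilde f,x)\ge d(c(x),x)$ (hence $\sum_x d^{\ell}(\tilde f,x)$ is controlled by $\Phi(F,C_i)$ plus a triangle-inequality error against $\Delta(C_i)$). The result is a self-referential inequality of the shape $\Phi(F,C_i)\le A\,\Delta(C_i)+B\,\Phi(F,C_i)$ with $A,B$ both of order $2^{\ell}(1+1/\delta)^{\ell}$. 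Whenever $\beta\ge 2B$, this solves to $\Phi(F,C_i)\le 2A\,\Delta(C_i)$, and combined with the initial reduction we obtain $\Phi(F,C_i^n)\le 2A\,\Delta(C_i)/\beta$.

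The main obstacle is purely the bookkeeping of constants, and the baroque definition $\beta=4^{\ell-1}\cdot\bigl(\ell^{\ell}3^{\ell^2+4\ell+3}/\veps^{\ell+1}+1\bigr)$ has been reverse-engineered for exactly this step: the $4^{\ell-1}$ covers the two $2^{\ell-1}$ losses from the twice-applied approximate triangle inequality; the main term $\ell^{\ell}3^{\ell^2+4\ell+3}/\veps^{\ell+1}$ balances $2A$ so that $2A/\beta$ lands at $\veps^{\ell+1}/(\ell^{\ell}\cdot 3^{\ell^2+5\ell+2})$; and the ``$+1$'' supplies the slack to ensure $\beta\ge 2B$ so the self-referential inequality can be solved. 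Once these constants are tracked, the claimed bound follows.
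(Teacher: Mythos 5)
Your proposal breaks at the very first reduction. Bounding $\Phi(F,C_i^n)\le|C_i|\,R^\ell=\Phi(F,C_i)/\beta$ and then aiming to show $\Phi(F,C_i)\le K\,\Delta(C_i)$ cannot work, because the latter statement is false: $F$ approximates the \emph{unconstrained} problem on all of $C$, whereas $\mathcal{C}$ is an \emph{arbitrary} clustering, so a target cluster $C_i$ may lie far from every center of $F$ yet have tiny (even zero) $\Delta(C_i)$. A concrete instance with $k=2$: put $n-1$ clients split between two locations $a,b\in L$ with $d(a,b)=1$, and a single client at $c\in L$ with $d(a,c)=d(b,c)=D$ where $1\ll D^\ell<n/2$. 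Any good unconstrained $2$-center solution is $F=\{a,b\}$, but for the target cluster $C_2=\{c\}$ we have $\Delta(C_2)=0$ (take $f_2^*=c$) while $\Phi(F,C_2)=D^\ell>0$, so no constant $K$ exists. The lemma still holds in this example precisely because $C_2^n=\emptyset$, and that is the structure your step $|C_i^n|\le|C_i|$ discards: the ratio $|C_i^n|/|C_i|$ shrinks exactly when $\Phi(F,C_i)$ outgrows $\Delta(C_i)$, so it has to be carried along until it cancels.

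The paper's proof keeps $|C_i^n|$ explicit throughout. It lower-bounds $\Delta(C_i)\ge\Phi(f_i^*,C_i^n)$ term-by-term via Fact~\ref{fact:approx_tri_ineq} through $c(x)$ and the minimality $d(c(x),f_i^*)\ge d(c(f_i^*),f_i^*)$, arriving at $\Delta(C_i)\ge 2^{-(\ell-1)}|C_i^n|\,d^\ell(c(f_i^*),f_i^*)-\Phi(F,C_i^n)$; then lower-bounds $d^\ell(c(f_i^*),f_i^*)$ by $\big(\Phi(F,C_i)-2^{\ell-1}\Delta(C_i)\big)/\big(2^{\ell-1}|C_i|\big)$ via a second triangle-inequality pass using $\Phi(F,C_i)\le\Phi(c(f_i^*),C_i)$; and finally, after writing $\Phi(F,C_i)=\beta R^\ell|C_i|$, substitutes $|C_i^n|R^\ell\ge\Phi(F,C_i^n)$ (with $|C_i^n|$, not $|C_i|$) so that the cluster sizes cancel and everything closes to $\Phi(F,C_i^n)\le\frac{2\cdot 4^{\ell-1}}{\beta-4^{\ell-1}}\Delta(C_i)$. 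So the two $2^{\ell-1}$ losses you correctly anticipated are indeed there, but both are spent lower-bounding $\Delta(C_i)$, not in a self-referential estimate on $\Phi(F,C_i)$. Also, $\beta$ cannot be ``used twice'' as your writeup suggests: once it is consumed in $\Phi(F,C_i^n)\le\Phi(F,C_i)/\beta$, the remaining claimed inequality $\Phi(F,C_i)\le A\,\Delta(C_i)+B\,\Phi(F,C_i)$ contains no $\beta$ at all, so requiring $\beta\ge 2B$ accomplishes nothing.
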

\begin{proof}
We have,
\begin{align*}
\Delta(C_{i}) &\geq \Phi(f_{i}^{*},C^{n}_{i})\\
&= \sum_{x \in C^{n}_{i}} d^{\ell}(f_{i}^{*},x) \\
&\geq \sum_{x \in C^{n}_{i}} \left( \frac{d^{\ell}(c(x),f_{i}^{*})}{2^{\ell-1}} - d^{\ell}(x,c(x))  \right), \quad \textrm{by Fact~\ref{fact:approx_tri_ineq}} \\
&=  \sum_{x \in C^{n}_{i}} \left( \frac{d^{\ell}(c(x),f_{i}^{*})}{2^{\ell-1}} \right) - \Phi(F,C_{i}^{n}) \\
&\geq  \sum_{x \in C^{n}_{i}} \left( \frac{d^{\ell}(c(f_{i}^{*}),f_{i}^{*})}{2^{\ell-1}} \right) - \Phi(F,C_{i}^{n})
\end{align*}
Using Fact~\ref{fact:approx_tri_ineq}, we get $\Phi(c(f_{i}^{*}),C_{i}) \leq 2^{\ell-1} \cdot \left( \Delta(C_{i}) + |C_{i}|\cdot d^{\ell}(c(f_{i}^{*}),f_{i}^{*}) \right)$.
Since $\Phi(F,C_{i}) \leq \Phi(c(f_{i}^{*}),C_{i})$, we get $d^{\ell}(c(f_{i}^{*}),f_{i}^{*}) \geq \fdfrac{\Phi(F,C_{i}) - 2^{\ell-1} \cdot \Delta(C_{i})}{2^{\ell-1}|C_{i}|}$. Using this, the previous expression simplifies to:
\begin{align*}
\Delta(C_{i}) &\geq |C^{n}_{i}| \left( \frac{\Phi(F,C_{i}) - 2^{\ell-1} \cdot \Delta(C_{i})}{4^{\ell-1} \cdot |C_{i}|} \right) - \Phi(F,C_{i}^{n}) \\
&= |C_{i}^{n}| \cdot \frac{\beta R^{\ell}}{4^{\ell-1}}   - |C^{n}_{i}| \cdot \frac{\Delta(C_{i})}{2^{\ell-1} \cdot |C_{i}|} - \Phi(F,C_{i}^{n}), \hspace{15mm} \because R^{\ell} = \frac{1}{\beta} \cdot \frac{\Phi(F,C_{i})}{|C_{i}|}\\
&\geq \Phi(F,C_{i}^{n}) \cdot \frac{\beta}{4^{\ell-1}}   - |C^{n}_{i}| \cdot \frac{\Delta(C_{i})}{2^{\ell-1} \cdot |C_{i}|} - \Phi(F,C_{i}^{n}), \hspace{7mm} \because \Phi(F,C_{i}^{n}) \leq |C_{i}^{n}| \cdot R^{\ell}\\
&\geq \frac{(\beta-4^{\ell-1})}{4^{\ell-1}} \Phi(F,C_{i}^{n}) - \Delta(C_{i}), \hspace{36mm} \because |C_{i}^{n}| \leq |C_{i}| \leq 2^{\ell-1} \cdot |C_{i}|
\end{align*}
On rearranging the terms of the expression, we get 
\begin{align*}
    \hspace{5mm}\Phi(F,C_{i}^{n}) &\leq \frac{2 \cdot 4^{\ell-1} }{\beta - 4^{\ell-1}} \cdot \Delta(C_{i})\\
    &\leq \myfrac[2pt]{\veps^{\ell+1}}{\ell^{\ell} \cdot 3^{\ell^{2}+5\ell+2}} \cdot \Delta(C_{i}) \hspace{10mm} \because \beta = 4^{\ell-1} \cdot \left( \myfrac[2pt]{\ell^{\ell} \cdot 3^{\ell^{2}+4\ell+3}}{\veps^{\ell+1}}+1 \right)
\end{align*}
Hence proved.
\end{proof}

Now, let us prove the main result. 
We need to define a few things.
Since we are using $F$ as a proxy for $C_{i}^{n}$, we define a multi-set $M_{i}^{n} \coloneqq \{ c(x) \mid x \in C_{i}^{n} \}$. Let us define another multi-set $M_{i} \coloneqq C_{i}^{f} \cup M_{i}^{n}$. 
In the following lemma we show that there is a point in $M_{i}$ that is a good center for $C_{i}$. The lemma is similar to lemma~\ref{lemma:close_cluster} of the low-cost clusters.

\begin{lemma}~\label{lemma:far_cluster}
Let $p$ be a point sampled uniformly at random from $M_i$. Then the following bound holds:
\[
    \mathbb{E}[\Phi(t(p),C_{i})] \leq \left( 3^{\ell}+\frac{\veps}{4} \right) \cdot \Delta(C_{i})
\]
\end{lemma}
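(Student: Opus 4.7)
The plan is to follow the template of Lemma~\ref{lemma:close_cluster}, but split the expectation according to whether the sampled point $p \in M_i = C_i^f \cup M_i^n$ comes from the ``far'' portion of $C_i$ (where $p=x$ is a genuine uniform sample from $C_i^f$) or from the ``near'' portion (where $p = c(x)$ is the proxy in $F$ for some $x \in C_i^n$). Since $|M_i| = |C_i^f| + |C_i^n| = |C_i|$, uniform sampling from $M_i$ weights the two contributions naturally, and I can write
\[
|C_i|\cdot\mathbb{E}[\Phi(t(p),C_i)] \;=\; \underbrace{\sum_{x \in C_i^f} \sum_{y \in C_i} d^\ell(t(x),y)}_{(\mathrm{I})} \;+\; \underbrace{\sum_{x \in C_i^n} \sum_{y \in C_i} d^\ell(t(c(x)),y)}_{(\mathrm{II})}.
\]

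For $(\mathrm{I})$, I would mimic the proof of Lemma~\ref{lemma:exp} verbatim: apply Fact~\ref{fact:approx_tri_ineq} along the four-point path $t(x)\to x\to f_i^*\to y$ and use $d^\ell(t(x),x)\leq d^\ell(f_i^*,x)$ (definition of $t(x)$) to obtain $d^\ell(t(x),y) \leq 3^{\ell-1}(2 d^\ell(x,f_i^*) + d^\ell(f_i^*,y))$. Summing gives $(\mathrm{I}) \leq 3^{\ell-1}\bigl(2|C_i|\,\Phi(f_i^*,C_i^f) + |C_i^f|\,\Delta(C_i)\bigr)$. For $(\mathrm{II})$, I use the same four-point inequality along $t(c(x))\to c(x)\to f_i^*\to y$, again exploiting $d(t(c(x)),c(x))\leq d(c(x),f_i^*)$; this yields $d^\ell(t(c(x)),y) \leq 3^{\ell-1}(2 d^\ell(c(x),f_i^*) + d^\ell(f_i^*,y))$. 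The ``proxy error'' is then controlled by the triangle inequality $d(c(x),f_i^*)\leq d(c(x),x)+d(x,f_i^*)$ combined with Fact~\ref{fact:trade_off} (with parameter $\delta$ to be chosen), which separates the term into $(1+\delta)^\ell d^\ell(x,f_i^*)$ plus $(1+1/\delta)^\ell d^\ell(c(x),x)$. Summing, $(\mathrm{II}) \leq 3^{\ell-1}\bigl(|C_i^n|\,\Delta(C_i) + 2(1+\delta)^\ell|C_i|\,\Phi(f_i^*,C_i^n) + 2(1+1/\delta)^\ell|C_i|\,\Phi(F,C_i^n)\bigr)$.

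Adding $(\mathrm{I})$ and $(\mathrm{II})$, dividing by $|C_i|$, and bounding $2\Phi(f_i^*,C_i^f)+2(1+\delta)^\ell\Phi(f_i^*,C_i^n) \leq 2(1+\delta)^\ell\Delta(C_i)$ gives
\[
\mathbb{E}[\Phi(t(p),C_i)] \leq 3^{\ell-1}\bigl(1 + 2(1+\delta)^\ell\bigr)\Delta(C_i) + 2\cdot 3^{\ell-1}(1+1/\delta)^\ell \Phi(F,C_i^n).
\]
Choosing $\delta = \veps/(\ell\cdot 3^{\ell+c})$ for a suitable constant $c$ and applying Fact~\ref{fact:bin_approx} (binomial approximation) makes $3^{\ell-1}(1+2(1+\delta)^\ell)\leq 3^\ell + \veps/8$. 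For the residual term, I invoke Lemma~\ref{lemma:radius} to get $\Phi(F,C_i^n) \leq \tfrac{\veps^{\ell+1}}{\ell^\ell\cdot 3^{\ell^2+5\ell+2}}\Delta(C_i)$, so that the factor $(1+1/\delta)^\ell = O((\ell\cdot 3^{\ell+c}/\veps)^\ell)$ is exactly absorbed, leaving a contribution of at most $\veps/8 \cdot \Delta(C_i)$.

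The main obstacle is purely a constants-chasing exercise: the value of $\beta$ appearing in the definition of $R^\ell$ was calibrated precisely so that the polynomial blow-up $(1+1/\delta)^\ell$ from Fact~\ref{fact:trade_off} is killed by the smallness of $\Phi(F,C_i^n)$ guaranteed by Lemma~\ref{lemma:radius}. The only subtlety beyond bookkeeping is recognizing that $t(c(x))$ being the closest facility to $c(x)$ (not to $x$) forces us to route the four-point inequality through $c(x)$ rather than through $x$, which is why the proxy-error term $d(c(x),x)$ appears and must be split off via Fact~\ref{fact:trade_off} before being controlled by Lemma~\ref{lemma:radius}.
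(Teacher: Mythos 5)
Your decomposition of the expectation over $M_i = C_i^f \cup M_i^n$, the invocation of Lemma~\ref{lemma:radius} to absorb the $(1+1/\delta)^\ell$ blow-up, and the trade-off via Fact~\ref{fact:trade_off} are all exactly the ingredients the paper uses; the only cosmetic differences are that you invoke the four-point Fact~\ref{fact:approx_tri_ineq} up front instead of expanding $(d(x,x')+d(x',c(x'))+d(c(x'),f_i^*))^\ell$ and feeding the combined term into the bound inside Lemma~\ref{lemma:exp}, and that your $\delta$ plays the reciprocal role of the paper's (you choose $\delta$ small so $(1+\delta)^\ell$ hits the main term, the paper chooses $\delta$ large so $(1+1/\delta)^\ell$ hits it — both are the same since Fact~\ref{fact:trade_off} is symmetric under $a\leftrightarrow b$, $\delta\leftrightarrow 1/\delta$). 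The constants do check out: with $\delta=\veps/(\ell\cdot 3^{\ell+3})$ the main coefficient is at most $3^\ell+\veps/8$ and the residual $2\cdot 3^{\ell-1}(2/\delta)^\ell\Phi(F,C_i^n)\le \tfrac{2^{\ell+1}}{3^{\ell+3}}\veps\,\Delta(C_i)\le\tfrac{\veps}{8}\Delta(C_i)$, so this is a correct proof taking essentially the same approach as the paper.
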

\vspace*{1mm}
\begin{proof}
\hspace{10mm}
$
\begin{aligned}[t]
    \mathbb{E}[\Phi(t(p),C_{i})] &= \frac{1}{|C_{i}|} \cdot \left( \sum_{p \in M_{i}} \Phi(t(p),C_{i}) \right) \\
    &= \frac{1}{|C_{i}|} \cdot \left( \sum_{x' \in C_{i}^{n}} \Phi(t(c(x')),C_{i}) + \sum_{x' \in C_{i}^{f}} \Phi(t(x'),C_{i}) \right) \\
\end{aligned}
$
~\\
\vspace*{3mm}

\noindent Let us evaluate these two terms separately.  
\begin{enumerate}
    \item The first term:
    \begin{align*}
    \hspace{3mm}\sum_{x' \in C_{i}^{n}} \Phi(t(c(x')),C_{i}) &= \sum_{x' \in C_{i}^{n}} \sum_{x \in C_{i}} d^{\ell}(x,t(c(x'))) \\
    &\leq \sum_{x' \in C_{i}^{n}} \sum_{x \in C_{i}} \left( d(x,x') + d(x',c(x')) + d(c(x'),t(c(x'))) \right)^{\ell}, \\
    & \hspace{70mm}
    \textrm{by triangle-inequality}\\
    &\leq \sum_{x' \in C_{i}^{n}} \sum_{x \in C_{i}} \left( d(x,x') + d(x',c(x')) + d(c(x'),f_{i}^{*}) \right)^{\ell}, \\
    &\hspace{70mm} \textrm{by the defn. of $t(c(x'))$} \\
    &\leq \sum_{x' \in C_{i}^{n}} \sum_{x \in C_{i}} \left( d(x,x') + d(x',c(x')) + d(x',c(x')) + d(x',f_{i}^{*}) \right)^{\ell}, \\
    & \hspace{70mm}
    \textrm{by triangle-inequality}\\
    &\leq \sum_{x' \in C_{i}^{n}} \sum_{x \in C_{i}} \left( \left(1+\frac{1}{\delta} \right)^{\ell} \cdot (d(x,x') + d(x',f_{i}^{*}) )^{\ell} + 2^{\ell}(1+\delta)^{\ell} \cdot d^{\ell}(x',c(x')) \right),\\
    &\hspace{70mm} \textrm{by Fact~\ref{fact:trade_off}}
    \end{align*}
    
    \item The second term:
    \begin{align*}
    \hspace{3mm}\sum_{x' \in C_{i}^{f}} \Phi(t(x'),C_{i}) &= \sum_{x' \in C_{i}^{f}} \sum_{x \in C_{i}} (d(x,x') + d(x',t(x')))^{\ell}, \quad 
    \textrm{by triangle-inequality}\\
    &\leq \sum_{x' \in C_{i}^{f}} \sum_{x \in C_{i}} (d(x,x') + d(x',f_{i}^{*}))^{\ell}, \quad \quad
    \textrm{by the defn. of $t(x')$}
    \end{align*}
\end{enumerate}

\noindent On combining the two terms we get the following expression:
\begin{align*}
    \mathbb{E}[\Phi(t(p),C_{i})]
    &\leq  \left(1+\frac{1}{\delta} \right)^{\ell} \cdot \frac{1}{|C_{i}|} \cdot \left(  \sum_{x' \in C_{i}} \sum_{x \in C_{i}} (d(x,x') + d(x',f_{i}^{*}))^{\ell} \right) \\
    &\hspace{40mm} + (1+\delta)^{\ell} \cdot \frac{1}{|C_{i}|} \cdot \left( \sum_{x' \in C_{i}^{n}} |C_{i}| \cdot  2^{\ell}d^{\ell}(x',c(x')) \right)\\
\end{align*}

\noindent By lemma~\ref{lemma:exp}, we have $\mathbb{E}[\Phi(t(x),C_{i})] \leq \frac{1}{|C_{i}|} \cdot \Big( \sum_{x' \in C_{i}} \sum_{x \in C_{i}} (d(x,x') + d(x',f_{i}^{*}) )^{\ell} \Big) \leq 3^{\ell} \cdot \Delta(C_{i})$. Thus, we get:
\begin{align*}
    \mathbb{E}[\Phi(t(p),C_{i})] &\leq \left(1+\frac{1}{\delta} \right)^{\ell} \cdot 3^{\ell} \cdot \Delta(C_{i}) + (1+\delta)^{\ell} \cdot \frac{1}{|C_{i}|} \cdot \left( \sum_{x' \in C_{i}^{n}} |C_{i}| \cdot  2^{\ell}d^{\ell}(x',c(x')) \right)\\
    &= \left(1+\frac{1}{\delta} \right)^{\ell} \cdot 3^{\ell} \cdot \Delta(C_{i}) + 2^{\ell}(1+\delta)^{\ell} \cdot \Phi(F,C_{i}^{n})\\
    &= \left(1+\frac{\veps}{\ell \cdot  3^{\ell+3}} \right)^{\ell} \cdot 3^{\ell} \cdot \Delta(C_{i}) + 2^{\ell}(1+\delta)^{\ell} \cdot \Phi(F,C_{i}^{n}), \hspace{8mm} \textrm{by substituting $\delta = \frac{\ell \cdot 3^{\ell+3}}{\veps}$}\\
    &\leq \left(1+2 \ell \cdot \frac{\veps}{\ell \cdot 3^{\ell+3}} \right) \cdot 3^{\ell} \cdot \Delta(C_{i}) + 2^{\ell}(1+\delta)^{\ell} \cdot \Phi(F,C_{i}^{n}), \hspace{3mm} \textrm{by Fact~\ref{fact:bin_approx}} \\
    &= \left( 3^{\ell} + \frac{\veps}{8} \right) \cdot \Delta(C_{i}) + 2^{\ell}(1+\delta)^{\ell} \cdot \Phi(F,C_{i}^{n}), \\
    &= \left( 3^{\ell} + \frac{\veps}{8} \right) \cdot \Delta(C_{i}) + 2^{\ell} \cdot (2\delta)^{\ell} \cdot \Phi(F,C_{i}^{n}), \hspace{10mm} \textrm{$\because 1 \leq \delta$, for $\veps \leq 1$} \\
    &\leq \left( 3^{\ell} + \frac{\veps}{8} \right) \cdot \Delta(C_{i}) + \frac{\veps}{8} \cdot \Delta(C_{i}), \hspace{27mm} \textrm{by lemma~\ref{lemma:radius}}\\
    &= \left( 3^{\ell} + \frac{\veps}{4} \right) \cdot \Delta(C_{i})\\
\end{align*}
This completes the proof of the lemma.
\end{proof}

\noindent We obtain the following corollary
using the Markov's inequality.
\begin{corollary}\label{corollary:high_cost}
For any $0 < \veps \leq 1$ and point $p$ sampled uniformly at random from $M_i$, we have:
\[
\emph{\textbf{Pr}}\left[ \Phi(t(p),C_{i}) \leq \left( 3^{\ell}+\frac{\veps}{2} \right) \cdot \Delta(C_{i}) \right] > \frac{\veps}{3^{\ell+2}}.
\]
\end{corollary}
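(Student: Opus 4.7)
The plan is to apply the standard Markov inequality in its ``upper tail'' form to the non-negative random variable $X \coloneqq \Phi(t(p),C_i)$, where $p$ is drawn uniformly from $M_i$. Lemma~\ref{lemma:far_cluster} gives $\mathbb{E}[X] \leq \bigl(3^\ell + \veps/4\bigr) \cdot \Delta(C_i)$, and since the threshold $\bigl(3^\ell + \veps/2\bigr) \cdot \Delta(C_i)$ strictly exceeds this expectation, Markov's inequality yields a non-trivial upper bound on the probability that $X$ lies above the threshold, equivalently a lower bound on the probability that $X$ is below it.

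Concretely, I would write
\[
\Pr\Bigl[X > \bigl(3^\ell + \tfrac{\veps}{2}\bigr)\Delta(C_i)\Bigr] \;\leq\; \frac{\mathbb{E}[X]}{\bigl(3^\ell + \veps/2\bigr)\Delta(C_i)} \;\leq\; \frac{3^\ell + \veps/4}{3^\ell + \veps/2},
\]
so that
\[
\Pr\Bigl[X \leq \bigl(3^\ell + \tfrac{\veps}{2}\bigr)\Delta(C_i)\Bigr] \;\geq\; 1 - \frac{3^\ell + \veps/4}{3^\ell + \veps/2} \;=\; \frac{\veps/4}{3^\ell + \veps/2}.
\]
It then remains to check the clean numerical bound $\frac{\veps/4}{3^\ell + \veps/2} > \frac{\veps}{3^{\ell+2}}$. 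This reduces to $9 \cdot 3^\ell > 4(3^\ell + \veps/2)$, i.e.\ $5 \cdot 3^\ell > 2\veps$, which holds trivially for $\veps \leq 1$ and any $\ell \geq 1$.

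There is no real obstacle here; the only subtlety is making sure we apply Markov to the \emph{upper} tail (which is permissible because $X \geq 0$ and the threshold strictly dominates $\mathbb{E}[X]$) rather than looking for a reverse Markov inequality, and then absorbing the small $\veps/2$ slack in the denominator into the final constant $3^{\ell+2}$ through the elementary inequality above.
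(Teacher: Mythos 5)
Your proof is correct and follows exactly the route the paper intends: apply Markov's inequality to the nonnegative random variable $\Phi(t(p),C_i)$ using the expectation bound of Lemma~\ref{lemma:far_cluster}, then verify the elementary numerical inequality $\frac{\veps/4}{3^{\ell}+\veps/2} > \frac{\veps}{3^{\ell+2}}$. The paper only gestures at ``using Markov's inequality,'' and your write-up supplies precisely the missing arithmetic (including the reduction to $5\cdot 3^{\ell} > 2\veps$, which indeed holds for all $\ell \geq 0$, not just $\ell \geq 1$).
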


\noindent Let us call a point $p$, a \emph{good} point if $t(p)$ gives $\Big( 3^{\ell}+\frac{\veps}{2} \Big)$-approximation for $C_{i}$.

\begin{corollary}
There are at least $\left \lceil \fdfrac{\veps \cdot |C_{i}|}{3^{\ell+2}} \right \rceil$ good points in $M_{i}$.
\end{corollary}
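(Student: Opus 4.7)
The proof is a direct counting argument built on top of Corollary~\ref{corollary:high_cost}. My plan is to translate the probability statement in the corollary into a lower bound on the count of good points, using the fact that the sampling in the corollary is uniform over $M_i$.

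First, I would let $g$ denote the number of good points in $M_i$, i.e., the number of $p \in M_i$ (counted with multiplicity, since $M_i$ is a multi-set) such that $\Phi(t(p), C_i) \le (3^\ell + \veps/2) \cdot \Delta(C_i)$. Because $p$ in Corollary~\ref{corollary:high_cost} is drawn uniformly at random from $M_i$, the probability there is literally the fraction $g/|M_i|$. So the corollary gives
\[
\frac{g}{|M_i|} \;>\; \frac{\veps}{3^{\ell+2}}.
\]

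Next I would determine $|M_i|$. By construction, $M_i = C_i^f \cup M_i^n$ with $M_i^n = \{\, c(x) : x \in C_i^n \,\}$ taken as a multi-set, so $|M_i^n| = |C_i^n|$ and therefore $|M_i| = |C_i^f| + |C_i^n| = |C_i|$ (since $\{C_i^n, C_i^f\}$ is a partition of $C_i$). Substituting gives $g > \veps |C_i|/3^{\ell+2}$.

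Finally, since $g$ is a nonnegative integer, a strict inequality $g > \veps|C_i|/3^{\ell+2}$ implies $g \ge \lceil \veps |C_i|/3^{\ell+2}\rceil$, which is exactly the claimed bound. There is no real obstacle here — the only subtlety is remembering that $M_i$ is a multi-set of the same total cardinality as $C_i$, so that uniform sampling probabilities convert cleanly into counts.
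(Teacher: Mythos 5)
Your proof is correct and is essentially the argument the paper implicitly relies on (the paper states this corollary without a written proof as an immediate consequence of Corollary~\ref{corollary:high_cost}). You correctly translate the uniform-sampling probability into a count over the multi-set $M_i$, correctly observe that $|M_i| = |C_i^f| + |M_i^n| = |C_i|$ since $\{C_i^n, C_i^f\}$ partitions $C_i$, and correctly convert the strict inequality $g > \veps|C_i|/3^{\ell+2}$ into $g \geq \lceil \veps|C_i|/3^{\ell+2}\rceil$ using integrality of $g$.
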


\noindent Now our goal is to obtain one such good point from $M_{i}$.  If $F$ contains any good point then we are done, since the algorithm adds the entire set $F$ to $M$. As a result, Property I is satisfied for the cluster $C_{i}$.
On the other hand, if $F$ does not contain any good point then there is no good point in $M_{i}^{n}$ as well. It simply means that all good points are present in $C_{i}^{f}$. To sample these good points we use the $D^{\ell}$-sampling technique. Let $G \subseteq C_{i}^{f}$ denote the set of good points. Then we have $|G| \geq \left \lceil \fdfrac{\veps \cdot |C_{i}|}{3^{\ell+2}} \right \rceil$. Using this fact, we will prove the following lemma.

\begin{lemma}\label{lemma:d2_sample}
For any point $p \in C_{i}^{f}$ and any $D^{\ell}$-sampled point $x \in C$, we have:
$\emph{\textbf{Pr}}[x = p] \geq \fdfrac{\veps}{ \alpha \, \beta \, \gamma \, k \, |C_{i}|} = \tau$ and  
$\emph{\textbf{Pr}}\left[ \Phi(t(x),C_{i}) \leq \Big( 3^{\ell}+\frac{\veps}{2} \Big) \cdot \Delta(C_{i} ) \right] \geq \fdfrac{\veps^{2}}{ \alpha \, \beta \, \gamma \, k \cdot 3^{\ell+2}}$
\end{lemma}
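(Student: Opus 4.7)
The plan is to prove the two bounds in sequence, with the first bound serving as the main tool for the second.

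For the first bound, I would start from the definition of $D^{\ell}$-sampling: the probability of sampling a client $p \in C$ equals $d^{\ell}(c(p),p) / \Phi(F,C)$. Since $p \in C_i^f$, the definition of $C_i^f$ gives $d^{\ell}(c(p),p) > R^{\ell} = \frac{1}{\beta}\cdot\frac{\Phi(F,C_i)}{|C_i|}$. Because this is Case~2, the cluster $C_i$ belongs to $H$, so $\Phi(F,C_i) > \frac{\veps}{\alpha\,\gamma\,k}\cdot\Phi(F,C)$, which rearranges to $\Phi(F,C) < \frac{\alpha\,\gamma\,k}{\veps}\cdot\Phi(F,C_i)$. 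Plugging these two estimates into the sampling formula yields
\[
\Pr[x=p] \;=\; \frac{d^{\ell}(c(p),p)}{\Phi(F,C)} \;>\; \frac{R^{\ell}}{\Phi(F,C)} \;=\; \frac{\Phi(F,C_i)}{\beta\,|C_i|\,\Phi(F,C)} \;>\; \frac{\veps}{\alpha\,\beta\,\gamma\,k\,|C_i|} \;=\; \tau,
\]
which is exactly the first claim.

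For the second bound, I would use the preceding corollary, which guarantees a set $G$ of \emph{good} points inside $C_i^f$ with $|G| \geq \big\lceil \veps\,|C_i|/3^{\ell+2}\big\rceil$; here ``good'' means $\Phi(t(p),C_i) \leq (3^{\ell}+\veps/2)\cdot\Delta(C_i)$. Since $G \subseteq C_i^f$, the first bound applies pointwise to every $p \in G$, so by a union of disjoint sampling events
\[
\Pr\!\big[x \in G\big] \;=\; \sum_{p \in G}\Pr[x=p] \;\geq\; |G|\cdot\tau \;\geq\; \frac{\veps\,|C_i|}{3^{\ell+2}}\cdot\frac{\veps}{\alpha\,\beta\,\gamma\,k\,|C_i|} \;=\; \frac{\veps^{2}}{\alpha\,\beta\,\gamma\,k\cdot 3^{\ell+2}}.
\]
Whenever $x \in G$, the definition of goodness gives the desired inequality on $\Phi(t(x),C_i)$, so the second claim follows.

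Neither step requires any delicate calculation; the only subtlety is to remember which case hypothesis ($C_i \in H$) is in force and to apply the $C_i^f$ threshold $R^{\ell}$ and the lower bound on $|G|$ consistently. The main ``conceptual'' obstacle is simply identifying that the first bound is a \emph{pointwise} lower bound on the sampling probability over all of $C_i^f$, so that summing it over the good points $G \subseteq C_i^f$ immediately yields the second bound with no further loss.
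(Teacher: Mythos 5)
Your proof is correct and follows essentially the same route as the paper: the first bound comes from the $R^{\ell}$ threshold defining $C_i^f$ together with the Case-2 hypothesis $C_i \in H$, and the second bound comes from summing the pointwise lower bound over the good set $G \subseteq C_i^f$ guaranteed by the preceding corollary. The paper phrases the second step with an indicator random variable $Z$, but the calculation is identical to yours.
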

\begin{proof}
For any point $p \in C_{i}^{f}$,
\[
\textbf{Pr}[x = p] = \frac{d^{\ell}(p,c(p))}{\Phi(F,C)} \geq \frac{R^{\ell}}{\Phi(F,C)} = \frac{1}{\beta \, |C_{i}|} \cdot \frac{\Phi(F,C_{i})}{\Phi(F,C)} \geq \frac{\veps}{\alpha \, \beta \, \gamma \, k \, |C_{i}|} 
\]
Let $Z$ denote an indicator random variable, such that $Z=1$ if $\Phi(t(x),C_{i}) \leq \Big( 3^{\ell}+\frac{\veps}{2} \Big) \cdot \Delta(C_{i} )$ and 0 otherwise.
\begin{align*}
\textbf{Pr}[\, Z = 1 \,] &\geq \sum_{p \in G} \, \textbf{Pr}[\, x = p \,] \cdot \textbf{Pr}[\left. Z = 1 \;\middle|\; x = p \right.] \\ 
&\geq \sum_{p \in G} \, \frac{\veps}{\alpha \, \beta \, \gamma \, k \, |C_{i}|} \cdot 1 \\
&= |G| \cdot \frac{\veps}{\alpha \, \beta \, \gamma \, k \, |C_{i}|} 
\geq \frac{\veps^{2}}{ \alpha \, \beta \, \gamma \, k \cdot 3^{\ell+2}}
\end{align*}
This completes the proof of the lemma.
\end{proof}

\noindent The above lemma states that, every point in $C_i^{f}$ has a sampling probability of at least $\tau$. 
Moreover, a sampled point gives $(3^{\ell} + \frac{\veps}{2})$-approximation for $C_{i}$ with probability at least $\frac{\veps^{2}}{(\alpha \, \beta \, \gamma \, k \cdot 3^{\ell+2})}$.
To boost this probability, we sample $\eta \coloneqq \fdfrac{\, \alpha \, \beta \, \gamma \, k \cdot 3^{\ell+2}}{\veps^{2}}$ points independently from $C$ using $D^{\ell}$-sampling (see line (4) of the algorithm). It follows that, with probability at least $1/2$, there is a point in the sampled set that gives $(3^{\ell} + \frac{\veps}{2})$-approximation for $C_{i}$. Hence, Property-I is satisfied for $C_{i}$. 
Also note that, in line(4) of the algorithm, we sample $\eta \cdot k$ points, i.e., $\eta$ points corresponding to each cluster. Hence, Property-I  holds for every cluster in $H$. 

Since Property-I is satisfied for every cluster in $W$ and $H$, we can finally claim that $T_{s} = \{t(s_{1}), t(s_{2}), \dotsc, t(s_{k})\}$ is a $\left( 3^{\ell} + \veps \right)$-approximation for $\C$ with probability at least $\frac{1}{2^{k}}$. 
However, as described earlier, $T_{s}$ could be a soft center-set since $t(s_{i})$ can be same as $t(s_{j})$ for some $i \neq j$. 
To obtain a hard center-set, we make use of line (8) of the algorithm. 
In line (8), the algorithm pulls out the $k$ closest points from $L$ instead of just one. 
Note that it is not necessary to open a facility at a closest location in $L$. 
Rather, we can open a facility at any location $f$ in $L$, that is at least as close to $s_{i}$ as $f_{i}^{*}$, i.e., $d(s_{i},f) \leq d(s_{i},f_{i}^{*})$.

\noindent Let $T(s_{i})$ denote a set of $k$ closest facility location for $s_{i}$. We show that there is a hard center-set $T_{h} \subset \displaystyle \cup_{i} T(s_{i}) $, such that $T_{h} \coloneqq \{f_{1},\dotsc,f_{k}\}$ and $d(s_{i},f_{i}) \leq d(s_{i},f_{i}^{*})$ for every $1 \leq i \leq k$.
We define $T_{h}$ using the following simple subroutine:
\begin{framed}
{\tt FindFacilities}\\
\hspace*{0.1in} - $T_h \leftarrow \emptyset$\\
\hspace*{0.1in} - For $i \in \{1, ..., k\}$:\\
\hspace*{0.3in} - if ($f_i^* \in T(s_i)$) $T_h \leftarrow T_h \cup \{f_i^*\}$\\
\hspace*{0.3in} - else \\
\hspace*{0.5in} - Let $f \in T(s_i)$ be any facility such that $f$ is not in $T_h$\\
\hspace*{0.5in} - $T_h \leftarrow T_h \cup \{f\}$
\end{framed}


\begin{lemma}
$T_{h} = \{ f_{1},f_{2}, \dotsc,f_{k} \}$ contains exactly $k$ different facilities such that for every $1 \leq i \leq k$, we have $d(s_{i},f_{i}) \leq d(s_{i},f_{i}^{*})$.
\end{lemma}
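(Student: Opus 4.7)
The plan is to verify two things: (i) every $f \in T(s_i)$ already satisfies $d(s_i, f) \le d(s_i, f_i^*)$, so the distance requirement holds no matter which element of $T(s_i)$ is chosen; and (ii) {\tt FindFacilities} produces exactly $k$ pairwise distinct facilities.

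For (i), fix an index $i$ and let $f \in T(s_i)$. If $f_i^* \in T(s_i)$, then $T(s_i)$, being the set of $k$ facility locations closest to $s_i$, consists of facilities all at distance at most that of the $k$-th closest, which is itself at most $d(s_i, f_i^*)$ since $f_i^*$ is among the $k$ closest; hence $d(s_i, f) \le d(s_i, f_i^*)$. If instead $f_i^* \notin T(s_i)$, then by definition every element of $T(s_i)$ is strictly closer to $s_i$ than $f_i^*$. Thus in both cases every candidate produced by the subroutine meets the distance bound.

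For (ii), I induct on the loop index, maintaining the invariant $|T_h| = i$ after iteration $i$. In the else branch, the invariant $|T_h| \le i-1 < k = |T(s_i)|$ combined with pigeonhole guarantees $T(s_i) \setminus T_h \neq \emptyset$, so a fresh $f$ exists. The delicate point is the if branch, where one must ensure $f_i^* \notin T_h$ when it is added; this is the main obstacle, since an earlier else branch for some $j < i$ could in principle have grabbed $f_i^*$ from $T(s_j)$. The cleanest way around it is to recognise that the whole claim is a Hall-type matching statement: on the bipartite graph with left side $\{1,\ldots,k\}$, right side $L$, and edges $i \sim f \iff f \in T(s_i)$, every left vertex has degree exactly $k$, so for every $S \subseteq \{1,\ldots,k\}$ we have $|N(S)| \ge |T(s_j)| = k \ge |S|$ for any $j \in S$. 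Hall's condition therefore holds trivially and a system of distinct representatives exists. The subroutine can be read as one constructive realisation, with the else-branch tie-breaking rule refined to prefer facilities not of the form $f_m^*$ for as-yet-unprocessed $m$; at step $j$ this refinement is feasible because $T(s_j)$ contains $k$ facilities, at most $k-1$ of which are of the form $f_m^*$ with $m \ne j$, and at most $j-1$ of which can already lie in $T_h$, leaving at least one valid choice. Combined with (i), this establishes both distinctness and the distance bound.
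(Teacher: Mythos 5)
You have caught a genuine gap in the paper's argument. The paper's distinctness proof checks only that (a) the if-branch additions are pairwise distinct (obvious since the $f_i^*$'s are distinct) and (b) each else-branch addition is fresh relative to the current $T_h$. It never rules out a collision between an earlier else-branch choice and a later if-branch's $f_i^*$: if $f_j^* \notin T(s_j)$ but $f_i^* \in T(s_j)$ for some $i > j$, the else-branch at step $j$ is allowed to pick $f = f_i^*$, after which iteration $i$'s if-branch union is a no-op and $|T_h| < k$ at the end. As written the subroutine can thus fail, and an amendment such as your refined tie-breaking rule is needed; your counting argument for it is correct once one maintains, inductively, that $T_h$ stays disjoint from $\{\, f_m^* : m \text{ unprocessed} \,\}$, so the two forbidden sets together occupy at most $k-1$ of the $k$ slots of $T(s_j)$.

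However, your item (i) is stated incorrectly. It is \emph{not} the case that every $f \in T(s_i)$ satisfies $d(s_i, f) \le d(s_i, f_i^*)$. If $f_i^* \in T(s_i)$ but is not the farthest of the $k$ nearest facilities, then the $k$-th closest facility is at distance \emph{at least} (not at most) $d(s_i, f_i^*)$, so some $f \in T(s_i)$ can be strictly farther from $s_i$ than $f_i^*$. As a consequence, the bare Hall-type existence of a system of distinct representatives from the sets $T(s_i)$ does not by itself yield the distance bound; one would have to force $f_i^*$ to be assigned to $i$ whenever $f_i^* \in T(s_i)$ and re-verify Hall's condition on the residual bipartite problem. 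Your constructive realisation survives this error because the modified subroutine never uses the false half of (i): in the if-branch it assigns $f_i = f_i^*$, for which the bound holds with equality, and in the else-branch $f_i^* \notin T(s_i)$, so there every element of $T(s_i)$ genuinely is within $d(s_i, f_i^*)$. The conclusion stands, but (i) should be restated to hold only in the else case.
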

\begin{proof}
First, let us show that all facilities in $T_{s}$ are different.
Since, $f_{i}^{*}$ is different for different clusters, the if statement adds facilities in $T_{h}$ that are different. In else part, we only add a facility to $T_{h}$ that is not present in $T_{h}$. Thus the else statement also adds facilities in $T_{h}$ that are different. 

Now, let us prove the second property, i.e., $d(s_{i},f_{i}) \leq d(s_{i},f_{i}^{*})$ for every $1 \leq i \leq k$. The property is trivially true for the facilities added in the if statement. 
Now, for the facilities added in the second step we know that $T(s_{i})$ does not contain $f_{i}^{*}$. Since, $T(s_{i})$ is a set of $k$-closest facility locations, we can say that for any facility location $f$ in $T(s_{i})$, $d(s_{i},f) \leq d(s_{i},f_{i}^{*})$. Thus any facility added in the else statement has $d(s_{i},f) \leq d(s_{i},f_{i}^{*})$. This completes the proof.
\end{proof}
\noindent Thus $T_{h} \in \mathcal{L}$ is a hard center-set, which gives the $(3^{\ell}+\veps)$-approximation for the problem. This completes the analysis of the algorithm.

Now, suppose we are given the flexibility to open a facility at a client location. In other words, suppose it is given that $C \subseteq L$. For this case, we can directly open the facilities at the locations $\{s_{1},s_{2},\dotsc,s_{k}\}$ instead of $t(s_{i})$'s, and we would not need lines (7) and (8) of the algorithm. 
Further, we can show that lemma~\ref{lemma:close_cluster} and~\ref{lemma:far_cluster} would give $(2^{\ell} + \veps)$-approximation for this special case. 
However, please note that $\{s_{1},s_{2},\dotsc,s_{k}\}$ is still a soft center-set. 
To obtain a hard center-set we do need to consider the $k$-closest facility locations for a point $s_{i}$. 
In that case, lemma~\ref{lemma:close_cluster} and~\ref{lemma:far_cluster} cannot provide $(2^{\ell}+\veps)$-approximation. Therefore, we make some slight changes in the analysis of lemma~\ref{lemma:close_cluster} and~\ref{lemma:far_cluster} to get a $(2^{\ell}+\veps)$-approximation. 
We discuss those details in the following sub-section.

\subsection{Analysis for \texorpdfstring{$\mathbf{C \subseteq L}$}{}}
The algorithm \LK($C,L,k,d,\ell,\veps$), gives better approximation guarantees if it is allowed to open a facility at a client location, i.e., $C \subseteq L$.
We will show the following result.
\begin{theorem}
Let $0 < \veps \leq 1$.
Let $(C,L,k,d,\ell)$ be any $k$-service instance with $C \subseteq L$ and let $\mathcal{C} = \{C_{1}, C_{2},\dotsc,C_{k}\}$ be any arbitrary clustering of the client set.
The algorithm ~ \LK($C,L,k,d,\ell,\veps$), with probability at least $1/2$, outputs a list $\mathcal{L}$ of size $(k /\veps)^{O(k \, \ell^{\,2})}$, such that there is a $k$ center set $S \in \mathcal{L}$ in the list such that
\[
    \Psi (S,\C) \leq (2^{\ell} + \veps) \cdot \Psi^{*}(\C).
\]
Moreover, the running time of the algorithm is $O \left( n \cdot (k /\veps)^{O(k \, \ell^{\,2})} \right)$.
\end{theorem}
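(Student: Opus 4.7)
The strategy is to repeat the proof of Theorem~\ref{theorem:list_k_service} essentially verbatim, replacing every occurrence of the factor $3^{\ell}$ by $2^{\ell}$ using the hypothesis $C \subseteq L$. I would classify the target clusters into low-cost $W$ and high-cost $H$ exactly as before and establish a ``$C \subseteq L$ version'' of Property-I in which both bounds have $3^{\ell}$ replaced by $2^{\ell}$. The list size $(k/\veps)^{O(k\ell^{2})}$ and running time $O(n\cdot (k/\veps)^{O(k\ell^{2})})$ are inherited since the algorithm itself is unchanged; only the analysis needs to be re-examined.

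The key structural change is that for every $p \in M$ we can take the candidate facility $t(p) = p$ itself, because $M \subseteq F \cup C \subseteq C \subseteq L$. As a result, the 4-point triangle-inequality chain $x \rightsquigarrow x' \rightsquigarrow c(x') \rightsquigarrow t(c(x')) \rightsquigarrow f_i^{*}$ that produced $3^{\ell}$ inside the proof of Lemma~\ref{lemma:close_cluster} collapses to the 3-point chain $x \rightsquigarrow x' \rightsquigarrow c(x')$, and the role of Lemma~\ref{lemma:exp} is taken over by Lemma~\ref{lemma:exp2}, which delivers a factor of $2^{\ell}$ rather than $3^{\ell}$. Concretely, I would apply Fact~\ref{fact:trade_off} to split $d^{\ell}(c(x'),x) \leq (1+\delta)^{\ell}\,d^{\ell}(x',x) + (1+1/\delta)^{\ell}\,d^{\ell}(c(x'),x')$, sum over $x,x' \in C_i$, and bound the first sum by $2^{\ell}\Delta(C_i)$ via Lemma~\ref{lemma:exp2}. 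Tuning $\delta = \Theta(\veps/(\ell\,2^{\ell}))$ gives $(1+\delta)^{\ell}2^{\ell} \leq 2^{\ell} + \veps/2$ and (with the values of $\beta,\gamma$ already set in the algorithm) absorbs the second sum into $\veps/(2^{\ell+1}k)\cdot OPT(C,C)$, yielding the analogue $\mathbb{E}[\Phi(p,C_i)] \leq (2^{\ell}+\veps/2)\Delta(C_i) + \veps/(2^{\ell+1}k)\cdot OPT(C,C)$ of Lemma~\ref{lemma:close_cluster}. An entirely parallel re-derivation of Lemma~\ref{lemma:far_cluster} (taking $t(p)=p$ for both $p \in C_i^{f}$ and $p = c(x') \in M_i^{n}$, using Lemma~\ref{lemma:exp2} in place of Lemma~\ref{lemma:exp}, and invoking Lemma~\ref{lemma:radius} to absorb the $C_i^{n}$ proxy error into $\veps/8\cdot\Delta(C_i)$) yields $\mathbb{E}[\Phi(p,C_i)] \leq (2^{\ell}+\veps/2)\Delta(C_i)$. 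These two bounds re-establish Property-I with $3^{\ell}$ replaced by $2^{\ell}$, so the soft center-set $\{t(s_1),\dots,t(s_k)\} = \{s_1,\dots,s_k\}$ is already a $(2^{\ell}+\veps)$-approximation for $\mathcal{C}$.

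It remains to produce a \emph{hard} $k$-center set in the enumerated list $\mathcal{L}$. Since each $s_i \in L$, the set $T(s_i)$ of $k$ closest facilities to $s_i$ contains $s_i$ itself (at distance $0$), so $\{s_1,\dots,s_k\} \subseteq T$ and, whenever the $s_i$ are distinct, it is enumerated by line~(9). The main obstacle is exactly the collision case: if $s_i = s_j$ then one of the two clusters must use some $f \in T(s_j)\setminus\{s_j\}$ with only the guarantee $d(s_j,f) \leq d(s_j,f_j^{*})$, and the direct triangle-inequality bound $d(f,x) \leq d(s_j,f_j^{*}) + d(s_j,x)$ does not telescope below the $3^{\ell}$-type estimate (as the author explicitly warns). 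I would resolve this with a Hall/counting-style argument: by Corollary~\ref{corollary:high_cost} (and its Case-1 analogue) each cluster $C_i$ has at least an $\Omega(\veps/3^{\ell+2})$-fraction of witnesses inside the $\eta k$ $D^{\ell}$-samples, so for the chosen $\eta$ and for large enough sample size the bipartite ``cluster--witness'' incidence graph satisfies Hall's condition with high probability, and one can extract $k$ distinct witnesses, one per cluster, all lying in $T$. Combining this with the $2^k$ repetitions of lines~(3)--(10) to boost success probability to $1/2$, and with the modified analogues of Lemmas~\ref{lemma:close_cluster} and~\ref{lemma:far_cluster}, completes the proof. The technically delicate step—and the one on which the ``slight changes'' really rest—is verifying that the Hall-type extraction goes through uniformly in $\ell$ and $k$ with the sample-size parameter $\eta$ already chosen in the algorithm.
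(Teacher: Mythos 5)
Your core observation—that in the $C \subseteq L$ case the sampled point itself is a valid facility, so Lemma~\ref{lemma:exp2} replaces Lemma~\ref{lemma:exp} and $2^{\ell}$ replaces $3^{\ell}$—is correct and is exactly the engine the paper uses. However, the final step of your argument has a genuine gap. You set $t(p) = p$ rigidly, which leaves no slack for resolving collisions when some $s_i = s_j$, and you try to patch this with a Hall-type extraction. That argument does not go through. For low-cost clusters (those in $W$), the witness $s_i$ comes from $M_i = \{c(x) : x \in C_i\} \subseteq F$, a set of only $k$ points, and it is entirely possible that the \emph{same} $f \in F$ is the unique good witness for many low-cost clusters simultaneously; in that case Hall's condition fails and no system of distinct representatives exists. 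Also note that your appeal to ``Corollary~\ref{corollary:high_cost} and its Case-1 analogue'' is misplaced: in Case~1 the witnesses are not among the $\eta k$ $D^{\ell}$-samples at all—they come from the deterministic inclusion of $F$ into $M$ (line~(5))—so there is no $\Omega(\veps/3^{\ell+2})$-fraction of sampled witnesses to count.

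The paper avoids this entirely by \emph{not} insisting on opening at $p$ itself. It defines Property-II with a relaxed target $u_i(p)$: any facility at distance at most $d(p, p_i)$ from $p$, where $p_i$ is the closest point of $C_i$ to $p$. Since $p_i \in C_i \subseteq C \subseteq L$ is itself a facility, if $p_i \notin T(s_i)$ then \emph{all} $k$ facilities in $T(s_i)$ are at distance $\leq d(s_i, p_i)$, so the greedy {\tt FindFacilities} subroutine can always pick a fresh one—exactly as in the general case, just with $p_i$ in place of $f_i^{*}$. The analysis still closes the 3-point chain $x \rightsquigarrow x' \rightsquigarrow c(x')$ with a factor $2^{\ell}$ because $d(c(x'), u_i(c(x'))) \leq d(c(x'), x')$ (since $x' \in C_i$ is a candidate for $p_i$). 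So the "slight changes" you allude to are not a Hall argument but a redefinition of the candidate facility relative to the cluster rather than to the sampled point; with that change the hard-set extraction is a one-line deterministic argument, requiring no probabilistic counting.
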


\noindent Since we are dealing with a special case, all the previous lemmas (i.e., Lemma~\ref{lemma:close_cluster},~\ref{lemma:radius},~\ref{lemma:far_cluster}, and~\ref{lemma:d2_sample}) are valid here as well. 
Here we will obtain improved versions of  Lemmas~\ref{lemma:close_cluster} and~\ref{lemma:far_cluster} and consequently obtain better approximation guarantee ($2^{\ell}$ instead of $3^{\ell}$). 
We show the following property for set $M$.
\begin{quote}{\textbf{Property}-\boldmath{$\rm II$}:}\label{property:1}
For any cluster $C_{i} \in \{C_{1}, C_{2}, \dotsc, C_{k}\}$,
there is a point $s_{i}$ in $M$ such that with probability at least $1/2$, following holds:
\[
    \Phi(u_i(s_{i}),C_{i}) \leq 
\begin{cases}
    \Big( 2^{\ell} + \fdfrac{\veps}{2} \Big) \cdot \Delta(C_{i}) + \fdfrac{ \veps}{2^{\ell+1} \,k} \cdot OPT(C,C),& \textrm{if } C_{i} \in W \\[7pt]
    \Big( 2^{\ell}+\fdfrac{\veps}{2}
    \Big) \cdot \Delta(C_{i}) ,& \textrm{if } C_{i} \in H.
\end{cases}
\]
Here, for any point $x \in C \cup L$, $u_i(x)$ denotes a location that is as close to $x$ as its \emph{closest} point in $C_{i}$. In other words, if $p = \argmin_{y \in C_{i}} \{ d(y,x) \}$ is the closest location in $C_{i}$, then $d(x,u_i(x)) \leq d(x,p)$.
\end{quote}

\noindent By Property-II, we claim that $\{u_1(s_{1}), u_2(s_{2}), \dotsc, u_k(s_{k}) \}$ is $\Big( 2^{\ell} 
+ \frac{\veps}{2} \Big)$-approximation for $\mathcal{C}$. 
The reasoning for this claim is the same as the one we provided in the previous subsection. Moreover, $s_{i}$ is always a possible candidate for $u(s_{i})$ and it is added to the set $T$ in line (8) of the algorithm. Therefore, $\mathcal{L}$ contains a set that is $\Big( 2^{\ell} 
+ \frac{\veps}{2} \Big)$-approximation for $\mathcal{C}$. However, this set is only a soft center-set. Later we will show that there is a hard center-set in $\mathcal{L}$ that gives the desired approximation for the problem. For now, let us prove Property-II for the target clusters.\\

\noindent \textbf{Case 1: \boldmath{$\Phi(F,C_{i}) \leq \fdfrac{\veps}{\alpha \, \gamma \, k} \cdot \Phi(F,C)$}}
~\\~\\
Recall that we defined a multi-set $M_{i} \coloneqq \{ c(x) \mid x \in C_{i} \}$, where $c(x)$ denotes a location in $F$ that is closest to $x$. We show the following result.
\begin{lemma}~\label{lemma:close_cluster1}
Let $p$ be a point sampled uniformly at random from $M_i$. Then the following bound holds:
\[
    \mathbb{E}[\Phi(u_i(p),C_{i})] \leq \left( 2^{\ell} + \frac{\veps}{2} \right) \cdot \Delta(C_{i}) + \frac{ \veps}{2^{\ell+1} \,k} \cdot OPT(C,C)
\]
\end{lemma}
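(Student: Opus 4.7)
The plan is to mimic the proof of Lemma~\ref{lemma:close_cluster}, but exploit the extra freedom afforded by $C \subseteq L$ at two places: (i) the reference inequality for a uniformly sampled center can be improved from $3^{\ell}$ to $2^{\ell}$ via Lemma~\ref{lemma:exp2} instead of Lemma~\ref{lemma:exp}, and (ii) the triangle inequality chain needs one fewer hop because $u_i(p)$ is allowed to be any facility at least as close to $p$ as the nearest point of $C_i$, so we can use a point of $C_i$ itself as the pivot rather than $f_i^{*}$.

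First I would fix $p = c(x') \in M_i$ for some $x' \in C_i$. Since $x' \in C_i$, the definition of $u_i(p)$ gives $d(p, u_i(p)) \leq d(p, x') = d(c(x'), x')$. Combined with the triangle inequality, this yields, for every $x \in C_i$,
\[
d(x, u_i(p)) \;\leq\; d(x, x') + d(x', p) + d(p, u_i(p)) \;\leq\; d(x, x') + 2\, d(x', c(x')).
\]
Note how much cleaner this is than the analogous bound in Lemma~\ref{lemma:close_cluster}, where an additional $d(x', f_i^{*})$ term appears. Raising both sides to the $\ell$-th power and applying Fact~\ref{fact:trade_off} with $a = 2\,d(x', c(x'))$ and $b = d(x, x')$ gives, for any $\delta > 0$,
\[
d^{\ell}(x, u_i(p)) \;\leq\; (1+\delta)^{\ell}\, d^{\ell}(x, x') \;+\; \Big(1+\tfrac{1}{\delta}\Big)^{\ell} 2^{\ell}\, d^{\ell}(x', c(x')).
\]

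Next, averaging over the uniform choice of $x' \in C_i$ (equivalently, the uniform choice of $p \in M_i$) and summing $x$ over $C_i$,
\[
\mathbb{E}[\Phi(u_i(p), C_i)] \;\leq\; (1+\delta)^{\ell}\cdot\frac{1}{|C_i|}\sum_{x' \in C_i}\sum_{x \in C_i} d^{\ell}(x, x') \;+\; \Big(1+\tfrac{1}{\delta}\Big)^{\ell} 2^{\ell}\, \Phi(F, C_i).
\]
Here Lemma~\ref{lemma:exp2} (applicable precisely because $C \subseteq L$ allows us to open a facility at $x' \in C_i$) bounds the first double sum by $2^{\ell}\,\Delta(C_i)$, so the whole expression is at most $(1+\delta)^{\ell}\cdot 2^{\ell}\,\Delta(C_i) + 2^{\ell}(1+1/\delta)^{\ell}\,\Phi(F, C_i)$.

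Finally, I would choose $\delta$ of order $\veps/(\ell\cdot 2^{\ell+c})$ (the exact constant picked to absorb the binomial approximation from Fact~\ref{fact:bin_approx}) so that $(1+\delta)^{\ell}\cdot 2^{\ell} \leq 2^{\ell} + \veps/2$. For this choice $(1+1/\delta)^{\ell} = O((\ell/\veps)^{\ell})$, and since by hypothesis $\Phi(F, C_i) \leq (\veps/(\alpha\gamma k))\,\Phi(F, C) \leq (\veps/(\gamma k))\,OPT(C, C)$, the definition of $\gamma$ is more than large enough to make the second term bounded by $\veps\, OPT(C,C)/(2^{\ell+1} k)$, yielding the claimed bound. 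The main (minor) obstacle is bookkeeping the constants so that the same $\gamma$ defined in \LK\ also works in this refined case; this is immediate because the required exponent on $\ell/\veps$ is strictly smaller here than in Lemma~\ref{lemma:close_cluster}, so no redefinition is needed.
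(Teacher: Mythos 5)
Your proposal is correct and follows the paper's own proof essentially step for step: the same triangle-inequality chain $d(x,u_i(c(x'))) \leq d(x,x') + 2d(x',c(x'))$ using the definition of $u_i$ to bound the last hop by $d(c(x'),x')$, the same application of Fact~\ref{fact:trade_off}, the same invocation of Lemma~\ref{lemma:exp2} to get the $2^\ell \Delta(C_i)$ term, and the same use of the low-cost hypothesis and $\gamma$ to absorb the $\Phi(F,C_i)$ term. The only difference is cosmetic: you apply Fact~\ref{fact:trade_off} with $\delta$ where the paper uses $1/\delta$ (the fact is symmetric under this swap), and your resulting $\delta$ of order $\veps/\ell$ is the reciprocal of the paper's $\delta = \ell\cdot 3^{\ell+2}/\veps$; your bookkeeping of $\gamma$ is also sound.
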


\begin{proof} 
The proof follows from the following sequence of inequalities.
\begin{align*}
    \mathbb{E}[\Phi(u_i(p),C_{i})] &= \frac{1}{|C_{i}|} \cdot \left( \sum_{p \in M_{i}} \Phi(u_i(p),C_{i}) \right) \\
    &= \frac{1}{|C_{i}|} \cdot \left(  \sum_{p \in M_{i}} \sum_{x \in C_{i}} d^{\ell}(x,u_i(p))  \right) \\
    &= \frac{1}{|C_{i}|} \cdot \left( \sum_{x' \in C_{i}} \sum_{x \in C_{i}} d^{\ell}(x,u_i(c(x'))) \right) \\
    &\leq \frac{1}{|C_{i}|} \cdot \left( \sum_{x' \in C_{i}} \sum_{x \in C_{i}} \left(~ d(x,x') + d(x',c(x')) + d(c(x'),u_i(c(x')))~ \right)^{\ell} \right), \\
    &\qquad \qquad \qquad \qquad \qquad \qquad \qquad \qquad \textrm{by triangle inequality} \\
    &\leq \frac{1}{|C_{i}|} \cdot \left( \sum_{x' \in C_{i}} \sum_{x \in C_{i}} \left( d(x,x') + d(x',c(x')) + d(c(x'),x' ) \right)^{\ell} \right), \\
    &\qquad \qquad \qquad \qquad \qquad \qquad \qquad \qquad \textrm{by the defn. of $u_i(c(x'))$} \\
    &= \frac{1}{|C_{i}|} \cdot \left( \sum_{x' \in C_{i}} \sum_{x \in C_{i}} \left( d(x,x') + 2d(x',c(x')) \right)^{\ell} \right) \\
\end{align*}
\noindent Let us use Fact~\ref{fact:trade_off} by setting $b = d(x,x')$ and $a = 2 \cdot d(x',c(x'))$. We obtain the following expression:
\begin{align*}
     \hspace{22mm} &\leq \frac{1}{|C_{i}|} \cdot \left( \sum_{x' \in C_{i}} \sum_{x \in C_{i}} \left( \left(1+\frac{1}{\delta} \right)^{\ell} \cdot d^{\ell}(x,x') + (1+\delta)^{\ell} \cdot 2^{\ell}d^{\ell}(x',c(x')) \right) \right), \\
    &\hspace{100mm} \textrm{for any $\delta > 0 $} \\
    &=  \left(1+\frac{1}{\delta} \right)^{\ell} \cdot \frac{1}{|C_{i}|} \cdot \left(  \sum_{x' \in C_{i}} \sum_{x \in C_{i}} d^{\ell}(x,x') \right) +
    (1+\delta)^{\ell} \cdot \frac{1}{|C_{i}|} \cdot \left( \sum_{x' \in C_{i}} |C_{i}| \cdot  2^{\ell}d^{\ell}(x',c(x')) \right) \\
    &= \left(1+\frac{1}{\delta} \right)^{\ell} \cdot \mathbb{E}[\Phi(x,C_{i})] + 2^{\ell}(1+\delta)^{\ell} \cdot \Phi(F,C_{i})\\
    &\leq \left(1+\frac{1}{\delta} \right)^{\ell} \cdot 2^{\ell} \cdot \Delta(C_{i}) + 2^{\ell}(1+\delta)^{\ell} \cdot \Phi(F,C_{i}), \hspace{17mm} \textrm{by lemma~\ref{lemma:exp2}} \\
    &= \left(1+\frac{\veps}{\ell \cdot  3^{\ell+2}} \right)^{\ell} \cdot 2^{\ell} \cdot \Delta(C_{i}) + 2^{\ell}(1+\delta)^{\ell} \cdot \Phi(F,C_{i}), \hspace{9mm} \textrm{by substituting $\delta = \frac{\ell \cdot 3^{\ell+2}}{\veps}$}\\
    &\leq \left(1+2 \ell \cdot \frac{\veps}{\ell \cdot 3^{\ell+2}} \right) \cdot 2^{\ell} \cdot \Delta(C_{i}) + 2^{\ell}(1+\delta)^{\ell} \cdot \Phi(F,C_{i}), \quad \textrm{by Fact~\ref{fact:bin_approx}} \\
    &= \left( 2^{\ell} + \frac{\veps}{2} \right) \cdot \Delta(C_{i}) + 2^{\ell}(1+\delta)^{\ell} \cdot \Phi(F,C_{i}), \\
    &\leq \left( 2^{\ell} + \frac{\veps}{2} \right) \cdot \Delta(C_{i}) + 2^{\ell} \cdot (2\delta)^{\ell} \cdot \Phi(F,C_{i}), \hspace{10mm} \textrm{$\because 1 \leq \delta$, for $\veps \leq 1$}\\
    &\leq \left( 2^{\ell} + \frac{\veps}{2} \right) \cdot \Delta(C_{i}) + \frac{4^{\ell} \cdot \delta^{\ell} \cdot \veps}{\alpha \, \gamma \, k} \cdot \Phi(F,C), \hspace{10mm} \textrm{$\because \Phi(F,C_{i}) \leq \frac{\veps}{\alpha \, \gamma \, k} \cdot \Phi(F,C)$}\\
    &\leq \left( 2^{\ell} + \frac{\veps}{2} \right) \cdot \Delta(C_{i}) + \frac{\veps}{2^{\ell+1} \, \alpha \, k} \cdot \Phi(F,C), \hspace{11mm} \textrm{$\because \gamma = \frac{\ell^{\ell} \cdot 3^{\ell^{2} + 5\ell + 1}}{\veps^{\ell}}$}\\
    &\leq \left( 2^{\ell} + \frac{\veps}{2} \right) \cdot \Delta(C_{i}) + \frac{\veps}{2^{\ell+1} \,k} \cdot OPT(L,C), \hspace{8mm} \textrm{$\because \Phi(F,C) \leq \alpha \cdot OPT(C,C)$} \\
\end{align*}
This completes the proof of the lemma.
\end{proof}
\begin{onehalfspace}
\noindent By the above lemma, we can claim that there is a point $p$ in $M_{i}$ such that $\Phi(u_i(p),C_{i}) \leq \left( 2^{\ell} + \fdfrac{\veps}{2} \right) \cdot \Delta(C_{i}) + \fdfrac{\veps}{2^{\ell+1} \, k} \cdot OPT(L,C)$. 
Since $M_{i}$ is only composed of the points from $F$ and $F$ is contained in $M$, we can say that Property- II is satisfied for every cluster $C_{i} \in W$.
Now, let us analyze the high cost clusters.
\end{onehalfspace}

~\\ \\
\textbf{Case 2: \boldmath{$\Phi(F,C_{i}) > \fdfrac{\veps}{\alpha \, \gamma \, k} \cdot \Phi(F,C)$}}. ~\\

\noindent We will reuse $C_i^n$ and $C_i^f$ as defined earlier.
As before, let us define a multi-set $M_{i}^{n} \coloneqq \{ c(x) \mid x \in C_{i}^{n} \}$ and a multi-set $M_{i} \coloneqq C_{i}^{f} \cup M_{i}^{n}$. We show the following result.

\begin{lemma}~\label{lemma:far_cluster2}
Let $p$ be a point sampled uniformly at random from $M_i$. Then the following bound holds:
\[
    \mathbb{E}[\Phi(u_i(p),C_{i})] \leq \left( 2^{\ell}+\frac{\veps}{4} \right) \cdot \Delta(C_{i})
\]
\end{lemma}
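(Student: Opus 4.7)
The plan is to mirror the proof of Lemma~\ref{lemma:far_cluster}, replacing $t(\cdot)$ by $u_i(\cdot)$ and invoking Lemma~\ref{lemma:exp2} in place of Lemma~\ref{lemma:exp} so that the dominant constant becomes $2^{\ell}$ instead of $3^{\ell}$. First I would split the expectation according to whether the sampled point lies in $M_i^n$ or in $C_i^f$, obtaining
\[
\mathbb{E}[\Phi(u_i(p),C_i)] = \frac{1}{|C_i|}\left( \sum_{x' \in C_i^n} \Phi(u_i(c(x')),C_i) + \sum_{x' \in C_i^f} \Phi(u_i(x'),C_i) \right).
\]

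For the far sum, the crucial observation will be that $x' \in C_i^f \subseteq C_i$, so the closest point of $C_i$ to $x'$ is $x'$ itself, which forces $d(x', u_i(x'))=0$ by the defining property of $u_i$. Hence $d(x,u_i(x')) \le d(x,x')$ and the inner sum is bounded simply by $\sum_{x \in C_i} d^{\ell}(x,x')$. For the near sum, the triangle inequality gives $d(x, u_i(c(x'))) \le d(x,x') + d(x', c(x')) + d(c(x'), u_i(c(x')))$; applying the defining property of $u_i$ to $c(x') \in C$ with witness $x' \in C_i$ bounds the detour term by $d(c(x'), x')$, so the overall bound on the near integrand is $(d(x,x') + 2d(x',c(x')))^{\ell}$. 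This is the critical place where we save compared to Lemma~\ref{lemma:far_cluster}: the extra detour uses only $d(x', c(x'))$ rather than incurring an additional $d(x', f_i^*)$ term, which is precisely what will let Lemma~\ref{lemma:exp2} (giving $2^{\ell}$) be applied in place of Lemma~\ref{lemma:exp} (which would have given $3^{\ell}$).

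Next I would apply Fact~\ref{fact:trade_off} with $a = d(x,x')$ and $b = 2\,d(x', c(x'))$ to the near sum, and use the trivial bound $1 \le (1+1/\delta)^{\ell}$ to absorb the far sum into the same expression. This collapses things to
\[
\mathbb{E}[\Phi(u_i(p),C_i)] \le \left(1+\tfrac{1}{\delta}\right)^{\!\ell}\cdot \frac{1}{|C_i|}\sum_{x' \in C_i}\sum_{x \in C_i} d^{\ell}(x,x') + 2^{\ell}(1+\delta)^{\ell}\cdot \Phi(F,C_i^n).
\]
Lemma~\ref{lemma:exp2} now bounds the first double sum by $2^{\ell}\Delta(C_i)$, and Lemma~\ref{lemma:radius} bounds $\Phi(F,C_i^n)$ by a tiny multiple of $\Delta(C_i)$.

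Finally I would set $\delta = \ell \cdot 3^{\ell+3}/\veps$. By Fact~\ref{fact:bin_approx} this makes $(1+1/\delta)^{\ell}$ essentially $1$, contributing at most $(\veps/8)\Delta(C_i)$ above $2^{\ell}\Delta(C_i)$. For the second term, $(1+\delta)^{\ell} \le (2\delta)^{\ell}$ (since $\delta \ge 1$), and after substituting Lemma~\ref{lemma:radius} the factors $\ell^{\ell}$ and the high powers of $3$ cancel, leaving a contribution of at most $(\veps/8)\Delta(C_i)$. Summing yields the claimed $(2^{\ell}+\veps/4)\Delta(C_i)$ bound. No step is a real obstacle; the only care required is in correctly invoking the definition of $u_i$ twice---once for the client $x' \in C_i^f$ and once for $c(x') \in F$---always using $x' \in C_i$ as the witness, so that the detour picks up only $d(x', c(x'))$. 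This is what preserves the $2^{\ell}$ constant throughout.
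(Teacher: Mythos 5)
Your proof is correct and follows essentially the same route as the paper: split over $M_i^n$ and $C_i^f$, exploit the definition of $u_i$ twice (forcing the far-detour to vanish for $x' \in C_i^f$, and replacing the $d(c(x'),f_i^*)$ detour with $d(c(x'),x')$ for the near part), apply Fact~\ref{fact:trade_off}, and then invoke Lemma~\ref{lemma:exp2} and Lemma~\ref{lemma:radius} with $\delta = \ell\cdot 3^{\ell+3}/\veps$. Incidentally, you correctly cite Lemma~\ref{lemma:exp2} at the step where $2^{\ell}\Delta(C_i)$ appears; the paper's proof there refers to Lemma~\ref{lemma:exp}, which is evidently a typo since that lemma yields $3^{\ell}$ rather than $2^{\ell}$.
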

\vspace*{2mm}
\begin{proof}
\hspace*{10mm}
$
\begin{aligned}[t]
    \mathbb{E}[\Phi(u_i(p),C_{i})] &= \frac{1}{|C_{i}|} \cdot \left( \sum_{p \in M_{i}} \Phi(u_i(p),C_{i}) \right) \\
    &= \frac{1}{|C_{i}|} \cdot \left( \sum_{x' \in C_{i}^{n}} \Phi(u_i(c(x')),C_{i}) + \sum_{x' \in C_{i}^{f}} \Phi(u_i(x'),C_{i}) \right) \\
\end{aligned}$
\\~\\

\noindent Let us evaluate these two terms separately.
\vspace*{2mm}
\begin{enumerate}
    \item The first term:
    \hspace{2mm} \begin{align*}
    \sum_{x' \in C_{i}^{n}} \Phi(u_i(c(x')),C_{i}) &= \sum_{x' \in C_{i}^{n}} \sum_{x \in C_{i}} d^{\ell}(x,u_i(c(x'))) \\
    &\leq \sum_{x' \in C_{i}^{n}} \sum_{x \in C_{i}} \left( d(x,x') + d(x',c(x')) + d(c(x'),u_i(c(x')))) \right)^{\ell}, \\
    & \hspace{70mm}
    \textrm{by triangle-inequality}\\
    &\leq \sum_{x' \in C_{i}^{n}} \sum_{x \in C_{i}} \left( d(x,x') + d(x',c(x')) + d(c(x'),x') ) \right)^{\ell}, \\
    &\hspace{70mm} \textrm{by the defn. of $u_i(c(x'))$}\\
    &= \sum_{x' \in C_{i}^{n}} \sum_{x \in C_{i}} \left( d(x,x') + 2d(x',c(x')) ) \right)^{\ell}, \\
    &\leq \sum_{x' \in C_{i}^{n}} \sum_{x \in C_{i}} \left( \left(1+\frac{1}{\delta} \right)^{\ell} \cdot d^{\ell}(x,x') + 2^{\ell}(1+\delta)^{\ell} \cdot d^{\ell}(x',c(x')) \right)\\
    &\hspace{70mm} \textrm{by Fact~\ref{fact:trade_off}}
    \end{align*}
    \vspace*{1mm}
    
    \item The second term:
    \begin{align*}
    \sum_{x' \in C_{i}^{f}} \Phi(u_i(x'),C_{i}) &= \sum_{x' \in C_{i}^{f}} \sum_{x \in C_{i}} (d(x,x') + d(x',u_i(x')))^{\ell}, \quad
    \textrm{by triangle-inequality}\\
    &= \sum_{x' \in C_{i}^{f}} \sum_{x \in C_{i}} d^{\ell}(x,x'), \hspace{29mm}
    \textrm{by the defn. of $u_i(x')$}
    \end{align*}
\end{enumerate}

\noindent Let us now combine the two terms. We get the following expression:
\begin{align*}
    \mathbb{E}[\Phi(u_i(p),C_{i})]
    &\leq  \left(1+\frac{1}{\delta} \right)^{\ell} \cdot \frac{1}{|C_{i}|} \cdot \left(  \sum_{x' \in C_{i}} \sum_{x \in C_{i}} d^{\ell}(x,x') \right)
    + (1+\delta)^{\ell} \cdot \frac{1}{|C_{i}|} \cdot \left( \sum_{x' \in C_{i}^{n}} |C_{i}| \cdot  2^{\ell}d^{\ell}(x',c(x')) \right) \\
    &= \left(1+\frac{1}{\delta} \right)^{\ell} \cdot \mathbb{E}[\Phi(x,C_{i})] + 2^{\ell}(1+\delta)^{\ell} \cdot \Phi(F,C_{i}^{n})\\
    &\leq \left(1+\frac{1}{\delta} \right)^{\ell} \cdot 2^{\ell} \cdot \Delta(C_{i}) + 2^{\ell}(1+\delta)^{\ell} \cdot \Phi(F,C_{i}^{n}), \hspace{16mm} \textrm{by lemma~\ref{lemma:exp}}\\
    &= \left(1+\frac{\veps}{\ell \cdot  3^{\ell+3}} \right)^{\ell} \cdot 2^{\ell} \cdot \Delta(C_{i}) + 2^{\ell}(1+\delta)^{\ell} \cdot \Phi(F,C_{i}^{n}), \hspace{8mm} \textrm{by substituting $\delta = \frac{\ell \cdot 3^{\ell+3}}{\veps}$}\\
    &\leq \left(1+2 \ell \cdot \frac{\veps}{\ell \cdot 3^{\ell+3}} \right) \cdot 2^{\ell} \cdot \Delta(C_{i}) + 2^{\ell}(1+\delta)^{\ell} \cdot \Phi(F,C_{i}^{n}), \hspace{3mm} \textrm{by Fact~\ref{fact:bin_approx}} \\
    &= \left( 2^{\ell} + \frac{\veps}{8} \right) \cdot \Delta(C_{i}) + 2^{\ell}(1+\delta)^{\ell} \cdot \Phi(F,C_{i}^{n}), \\
    &= \left( 2^{\ell} + \frac{\veps}{8} \right) \cdot \Delta(C_{i}) + 2^{\ell} \cdot (2\delta)^{\ell} \cdot \Phi(F,C_{i}^{n}), \hspace{8mm} \because 1 \leq \delta, \textrm { for $\veps \leq 1$}\\
    &\leq \left( 2^{\ell} + \frac{\veps}{8} \right) \cdot \Delta(C_{i}) + \frac{\veps}{8} \cdot \Delta(C_{i}), \hspace{25mm} \textrm{by lemma~\ref{lemma:radius}}\\
    &= \left( 2^{\ell} + \frac{\veps}{4} \right) \cdot \Delta(C_{i})
\end{align*}
This completes the proof of the lemma.
\end{proof}

\noindent We obtain the following corollary
using the Markov's inequality.
\begin{corollary}\label{corollary:high_cost2}
If we sample a point $p \in M_{i}$, uniformly at random, then for $\veps \leq 1$:
\[
\emph{\textbf{Pr}}[\Phi(u_i(p),C_{i}) \leq \left( 2^{\ell}+\frac{\veps}{2} \right) \cdot \Delta(C_{i})] \geq \frac{\veps}{2^{\ell+2}} > \frac{\veps}{3^{\ell+2}} 
\]
\end{corollary}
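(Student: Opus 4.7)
The plan is to obtain Corollary~\ref{corollary:high_cost2} as a direct consequence of Lemma~\ref{lemma:far_cluster2} via Markov's inequality, exactly mirroring how Corollary~\ref{corollary:high_cost} is derived from Lemma~\ref{lemma:far_cluster}. Write $X \coloneqq \Phi(u_i(p), C_i)$, a non-negative random variable over the uniform choice of $p \in M_i$. Lemma~\ref{lemma:far_cluster2} supplies the expectation bound $\mathbb{E}[X] \leq (2^\ell + \veps/4) \cdot \Delta(C_i)$.

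Next I would invoke Markov's inequality at the threshold $T \coloneqq (2^\ell + \veps/2) \cdot \Delta(C_i)$. This yields $\Pr[X > T] \leq \mathbb{E}[X]/T \leq (2^\ell + \veps/4)/(2^\ell + \veps/2)$. Taking complements gives $\Pr[X \leq T] \geq (\veps/4)/(2^\ell + \veps/2)$.

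The final step is an arithmetic simplification. Using $\veps \leq 1$ and $\ell \geq 1$ (so $\veps/2 \leq 1/2 \leq 2^\ell$), the denominator is at most $2 \cdot 2^\ell = 2^{\ell+1}$, and therefore $\Pr[X \leq T] \geq \veps/2^{\ell+3}$. To reach the second inequality asserted in the corollary I would then observe that for every $\ell \geq 1$ one has $3^{\ell+2} = 9 \cdot 3^\ell \geq 8 \cdot 2^\ell = 2^{\ell+3}$, so the probability is in particular at least $\veps/3^{\ell+2}$, which is the bound actually consumed downstream in (the analogue of) Lemma~\ref{lemma:d2_sample} and in the amplification argument that turns a single good sample into a high-probability event. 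I do not anticipate any conceptual obstacle here; the only thing worth watching is the bookkeeping on the constants, so that the crucial comparison with $\veps/3^{\ell+2}$ survives the simplification, since that is the exact bound needed to carry the approximation guarantee for the far-cluster case through the analysis.
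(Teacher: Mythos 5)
Your argument is exactly the paper's: apply Markov's inequality to the expectation bound of Lemma~\ref{lemma:far_cluster2} at the threshold $(2^{\ell}+\veps/2)\cdot\Delta(C_i)$, obtaining
\[
\Pr\!\left[\Phi(u_i(p),C_i)\leq \left(2^{\ell}+\tfrac{\veps}{2}\right)\Delta(C_i)\right]\;\geq\; \frac{\veps/4}{2^{\ell}+\veps/2},
\]
and then do the arithmetic. Your simplification (the denominator is at most $2\cdot 2^{\ell}$ since $\veps/2\leq 1/2\leq 2^{\ell}$) gives $\veps/2^{\ell+3}$, and your final observation that $2^{\ell+3}\leq 3^{\ell+2}$ does establish the bound $\veps/3^{\ell+2}$, which is the only part that the amplification argument downstream (choosing $\eta$ in the $D^{\ell}$-sampling step) actually consumes. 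So your derivation proves what is needed.

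The one thing you should flag rather than quietly absorb is that you obtained $\veps/2^{\ell+3}$, which is strictly weaker than the intermediate constant $\veps/2^{\ell+2}$ that the corollary as printed asserts; you have not reproduced the literal statement. On inspection, the printed constant is too strong for Markov: the Markov lower bound $\frac{\veps/4}{2^{\ell}+\veps/2}$ is strictly below $\frac{\veps/4}{2^{\ell}}=\frac{\veps}{2^{\ell+2}}$ for every $\veps>0$. A concrete counterexample is $\ell=1$, $\veps=1$: Markov yields $\Pr\geq 1-\tfrac{9/4}{5/2}=\tfrac{1}{10}$, while $\veps/2^{\ell+2}=\tfrac{1}{8}>\tfrac{1}{10}$. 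The slack that saves the analogous Corollary~\ref{corollary:high_cost} (there $9\cdot 3^{\ell}$ has ample room over $4\cdot 3^{\ell}+2\veps$) shrinks when $3$ is replaced by $2$, and the paper's $2^{\ell+2}$ overshoots. Your $\veps/2^{\ell+3}$ is the honest constant; since it still dominates $\veps/3^{\ell+2}$, the paper's slip is harmless to the rest of the proof, but it is a real discrepancy worth noting explicitly rather than matching the printed number.
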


\noindent Since Corollary~\ref{corollary:high_cost} is similar to Corollary~\ref{corollary:high_cost2}, all arguments made in the previous subsection are valid here as well. Thus we can claim that, with probability at least $1/2$ there is a point $x$ in the sampled set $M$ such that $u_i(x)$ gives $\Big(2^{\ell}+\frac{\veps}{2} \Big)$-approximation for $C_{i}$. In other words, Property-II is satisfied for every cluster in $H$.

Now let us show that there is a hard center-set $T_{h} = \{f_{1},f_{2},\dotsc,f_{k}\}$ in the list, which gives the desired approximation. 
Our argument is based on the fact that we can open a facility at any location $u_i(s_{i})$, which is at least as close to $s_{i}$ as the closest location in $C_{i}$ (see Property-II). 
Let $p_{i}$ denote a location in $C_{i}$ that is closest to $s_{i}$ i.e. $p_{i} = \argmin_{x \in C_{i}} d(x,s_{i})$. 
We will show that $d(s_{i},f_{i}) \leq d(s_{i},p_{i})$ for every $1 \leq i \leq k$. 
The following analysis is very similar to the analysis that we did in the previous section. 
The only difference is that $f_{i}^{*}$ is replaced with $p_{i}$.
\vspace*{2mm}

\noindent Let $T(s_{i})$ denote a set of $k$ closest facility locations for a sampled point $s_{i} \in M$ (see line(8) of the algorithm). We define $T_{h} = \{ f_{1},f_{2}, \dotsc,f_{k} \}$ using the following simple subroutine:
\begin{framed}
{\tt FindFacilities}\\
\hspace*{0.1in} - $T_h \leftarrow \emptyset$\\
\hspace*{0.1in} - For $i \in \{1, ..., k\}$\\
\hspace*{0.3in} - if ($p_i \in T(s_i)$) $T_h \leftarrow T_h \cup \{p_i\}$\\
\hspace*{0.3in} - else\\
\hspace*{0.5in} - Let $f \in T(s_i)$ be any facility such that $f$ is not in $T_h$\\
\hspace*{0.5in} - $T_h \leftarrow T_h \cup \{f\}$
\end{framed}

\noindent Since size of each $T(s_{i})$ is exactly $k$, $T_h$ will contain $k$ facilities at the end of the for-loop.
\vspace*{2mm}
\begin{lemma}
The subroutine picks a set $T_{h} = \{ f_{1},f_{2}, \dotsc,f_{k} \}$ of $k$ distinct facilities such that 
for every $1 \leq i \leq k$, $d(s_{i},f_{i}) \leq d(s_{i},p_{i})$.
\end{lemma}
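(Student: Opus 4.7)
The plan is to verify two claims about $T_h$ separately: a distance inequality $d(s_i, f_i) \leq d(s_i, p_i)$ for every $i$, and the assertion that $|T_h| = k$ (i.e.\ all $f_i$'s are distinct). The distance part is a direct unpacking of the two branches of the subroutine, while distinctness requires a bit of care because the if-branch adds $p_i$ unconditionally.

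For the distance inequality, I would case-split on which branch is taken at iteration $i$. If the if-branch fires, then $f_i = p_i$ and the inequality holds with equality. If the else-branch fires, then $p_i \notin T(s_i)$; since $T(s_i)$ consists of the $k$ facility locations in $L$ nearest to $s_i$, and $p_i$ is itself a valid facility location (because $C \subseteq L$), every element of $T(s_i)$ must be at distance at most $d(s_i, p_i)$ from $s_i$. In particular, the chosen $f_i \in T(s_i)$ satisfies $d(s_i, f_i) \leq d(s_i, p_i)$.

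For distinctness, the starting observation is that $p_1, \dotsc, p_k$ are pairwise distinct: each $p_i$ lies in $C_i$ and $\{C_1, \dotsc, C_k\}$ is a partition of $C$, so no two $p_i$'s can coincide. Therefore all if-branch additions are mutually distinct, and the else-branch, by its very selection rule, never duplicates an element already in $T_h$. The one remaining scenario is that an if-branch addition of some $p_i$ could coincide with an $f_j$ contributed by an earlier else-branch at iteration $j < i$.

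Ruling out this last scenario is the main technical obstacle. I would handle it by strengthening the else-branch rule so that at iteration $j$ it picks some $f \in T(s_j)$ lying outside $T_h \cup \{p_{j+1}, \dotsc, p_k\}$; such an $f$ exists because the forbidden set has size at most $(j-1) + (k-j) = k-1$, while $|T(s_j)| = k$. Equivalently, one can reorder the iterations so that all if-branches are processed first (adding a set of distinct $p_i$'s) and all else-branches afterwards (each picking a fresh element from a $k$-set against a current $T_h$ of size less than $k$). With either refinement, the final $T_h$ consists of $k$ distinct facilities and retains the distance inequality proved above, completing the proof.
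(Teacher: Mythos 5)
Your treatment of the distance inequality $d(s_i,f_i) \leq d(s_i,p_i)$ matches the paper exactly: equality in the if-branch; in the else-branch, $p_i \notin T(s_i)$ together with the fact that $T(s_i)$ is the set of $k$ closest facility locations to $s_i$ forces $d(s_i,f) \leq d(s_i,p_i)$ for every $f \in T(s_i)$.

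You diverge, and you are more careful than the paper, on distinctness. The paper's proof checks only two cases: if-branch additions are mutually distinct because the $p_i$'s lie in disjoint clusters, and an else-branch never duplicates an element already present in $T_h$. It silently skips the cross-case you flag: an else-branch at iteration $j$ may pick $f_j = p_i$ for some $i > j$, and if later $p_i \in T(s_i)$ then the if-branch at iteration $i$ contributes nothing new, leaving $|T_h| < k$. Nothing in the subroutine as written forbids this, so the paper's lemma is not actually justified as stated; this is a genuine oversight. Your proposed repairs both work: forbidding the else-branch at iteration $j$ from choosing anything in $T_h \cup \{p_{j+1},\dots,p_k\}$ is always feasible since $|T(s_j)| = k$ exceeds the forbidden set's size of at most $k-1$, and reordering so that all if-branches are processed first achieves the same effect. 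Either way the distance inequality is unaffected. The same gap, and the same fix, apply to the analogous lemma earlier in the appendix, where $f_i^*$ plays the role of $p_i$.
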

\begin{proof}
First let us show that the facilities in $T_{h}$ are distinct.
Since $p_i$'s are different for different clusters, the if statement adds distinct facilities to $T_h$. 
Moreover, in the else statement, we only add a facility to $T_{h}$ if it is not already present in $T_{h}$. This also ensures the facilities added in $T_{h}$ are distinct.

Now, let us prove the second property, i.e., $d(s_{i},f_{i}) \leq d(s_{i},p_{i})$ for every $1 \leq i \leq k$. The property is trivially true for the facilities added in the if statement. 
Now, for the facilities added in the else statement, it is known that $T(s_{i})$ does not contain $p_{i}$. 
Since, $T(s_{i})$ is a set of $k$-closest facility locations, for any facility facility $f$ in $T(s_{i})$, we have $d(s_{i},f) \leq d(s_{i},p_{i})$. 
Thus any facility added in step 2 has $d(s_{i},f) \leq d(s_{i},p_{i})$. This completes the proof.
\end{proof}

\noindent Thus $T_{h} \in \mathcal{L}$ is a hard center-set, which gives $(2^{\ell}+\veps)$-approximation for the problem. This completes the analysis of the algorithm when $C \subseteq L$.

\subsection{Algorithm for Outlier \texorpdfstring{$k$}{}-Service }\label{section:outlier_list_k_service}

The outlier $k$-service problem does not fit the framework of constrained $k$-service problem since a clustering is defined over the set $C \setminus Z$ instead of $C$. 
Due to this we can not use the list $k$-service algorithm. 
To overcome this issue, we modify the algorithm \LK, so that it gives a good center-set for any clustering defined over the set $C \setminus Z$. 
We do this modification only in the first step of the algorithm. 
In the first step, we run a $\alpha$-approximation algorithm for the $(k+m)$-means problem over the point set $C$ (the $k$-means++ algorithm with $(k+m)$ centers instead of $k$-centers is one such algorithm with $\alpha = O(\log{k})$), and we keep the rest of the algorithm same. 
Hence, we obtain the center-set $F$ of size $(k+m)$, such that $\Phi(F,C) \leq \alpha \cdot \overline{OPT}(C,C)$. Here $\overline{OPT}(C,C)$ denotes the optimal cost for the unconstrained $(k+m)$-service instance $(C,C,k+m,d,\ell)$, i.e., with $(k+m)$ centers. 
Let $\C \coloneqq \{C_{1}, C_{2}, \dotsc, C_{k}\}$ be an optimal clustering for the outlier $k$-service instance defined over $C \setminus Z$, where set $Z$ of size $m$ is the set of outliers. Let $F^{*} \coloneqq \{f_{1}^{*},f_{2}^{*},\dotsc,f_{k}^{*}\}$ denote an optimal center-set for $\C$. Let $\Delta(C_{i})$ denote the optimal cost for each cluster i.e. $\Delta(C_{i}) = \Phi(f_{i}^{*},C_{i})$.
On the basis of previous analysis of the list $k$-service algorithm, we obtain the following property for set $M$.

\begin{quote}{\textbf{Property} \boldmath{$\rm III$}:}\label{property:3}
For any cluster $C_{i} \in \{C_{1}, C_{2}, \dotsc, C_{k}\}$,
there is a point $s_{i}$ in $M$ such that with probability at least $1/2$, the following holds:
\[
    \Phi(t(s_{i}),C_{i}) \leq 
\begin{cases}
    \Big( 3^{\ell} + \fdfrac{\veps}{2} \Big) \cdot \Delta(C_{i}) + \fdfrac{ \veps}{2^{\ell+1} \,k} \cdot \overline{OPT}(C,C),& \textrm{if } C_{i} \in W \\[7pt]
    \Big( 3^{\ell}+\frac{\veps}{2}
    \Big) \cdot \Delta(C_{i}) ,& \textrm{if } C_{i} \in H
\end{cases}
\]
where $t(s_{i})$ denotes a facility location that is at least as close to $s_{i}$ as $f_{i}^{*}$, i.e., $d(s_{i},t(s_{i})) \leq d(s_{i},f_{i}^{*})$.
\end{quote}

\noindent The following lemma bounds the optimal $(k+m)$-service cost in terms of the optimal outlier $k$-service cost.
\vspace*{3mm}

\begin{lemma}
$\overline{OPT}(C,C) \leq 2^{\ell} \cdot \sum_{i = 1}^{k} \Delta(C_{i})$
\end{lemma}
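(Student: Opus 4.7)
The plan is to exhibit an explicit feasible solution for the unconstrained $(k+m)$-service instance $(C,C,k+m,d,\ell)$ whose cost is at most $2^{\ell}\cdot\sum_{i=1}^{k}\Delta(C_i)$, and conclude by the definition of $\overline{OPT}(C,C)$ as a minimum.

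First, I would fix an optimal outlier solution: an outlier set $Z\subseteq C$ of size $m$, the induced clustering $\mathcal{C}=\{C_{1},\ldots,C_{k}\}$ of $C\setminus Z$, and optimal centers $F^{*}=\{f_{1}^{*},\ldots,f_{k}^{*}\}\subseteq L$ with $\Delta(C_i)=\Phi(f_i^{*},C_i)$. Note that $Z$ and the $C_i$'s form a partition of $C$.

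Next, I would construct $k+m$ distinct centers drawn entirely from $C$. For each outlier $z\in Z$, put $z$ itself into the center set, contributing $d^{\ell}(z,z)=0$ to the cost. For each cluster $C_i$, I need a proxy center $p_i\in C_i$ that approximates $f_i^{*}$. Applying the averaging argument behind Lemma~\ref{lemma:exp2} to the set $S=C_i$ with target $f^{*}=f_i^{*}$, the expected value of $\Phi(x,C_i)$ over a uniformly sampled $x\in C_i$ is at most $2^{\ell}\cdot\Phi(f_i^{*},C_i)=2^{\ell}\cdot\Delta(C_i)$, so there must exist a point $p_i\in C_i$ with $\Phi(p_i,C_i)\le 2^{\ell}\cdot\Delta(C_i)$. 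Since the $C_i$'s are pairwise disjoint and disjoint from $Z$, the resulting set $\{p_1,\ldots,p_k\}\cup Z\subseteq C$ has exactly $k+m$ distinct elements, so it is a valid feasible solution for $(C,C,k+m,d,\ell)$.

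Finally, I would bound the cost of this feasible solution. Every outlier $z\in Z$ is served at cost $0$ by itself, and every client $x\in C_i$ is served at cost at most $d^{\ell}(x,p_i)$. Summing,
\[
\sum_{z\in Z}0 \;+\; \sum_{i=1}^{k}\Phi(p_i,C_i)\;\le\;2^{\ell}\sum_{i=1}^{k}\Delta(C_i).
\]
Since $\overline{OPT}(C,C)$ is defined as the minimum over all $(k+m)$-center subsets of $C$, this immediately gives the claimed inequality $\overline{OPT}(C,C)\le 2^{\ell}\sum_{i=1}^{k}\Delta(C_i)$. There is no real obstacle here — the only subtlety is checking that the $k+m$ chosen centers really are distinct, which is immediate from the partition structure of the optimal outlier solution; the $2^{\ell}$ factor is exactly the loss incurred by restricting facility locations from $L$ to $C$, mirroring Fact~\ref{fact:optimal}.
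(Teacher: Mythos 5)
Your proof is correct and follows essentially the same route as the paper: both build a feasible $(k+m)$-center solution consisting of the $m$ outliers (each serving itself at zero cost) together with one proxy center per cluster $C_i$ chosen from $C_i$ itself, incurring a factor of $2^{\ell}$ when moving the center from $L$ into $C$. The only cosmetic difference is that you make the averaging argument (via Lemma~\ref{lemma:exp2}) explicit to exhibit the proxy $p_i\in C_i$, whereas the paper folds this step into appealing to an optimal center set $F_c\subseteq C\setminus Z$ and asserting $\Psi(F_c,\C)\le 2^{\ell}\sum_i\Delta(C_i)$ without spelling out the justification; your version is a bit more self-contained.
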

\begin{proof}
Let $F_{c} \subseteq C \setminus Z$ denote an optimal center set for $\C$ if centers are restricted to $C\setminus Z$. 
Let us see why $\Psi(F_{c},\C) \leq \sum_{i = 1}^{k} 2^{\ell} \cdot \Delta(C_{i})$. 
Consider each point in $Z$ as a cluster of its own. 
Let us denote these clusters by $\{C_{k+1},C_{k+2}, \dotsc,C_{k+m}\}$. 
Let us define a new clustering $\C' \coloneqq \{C_{1},C_{2},\dotsc,C_{k+m}\}$ having $(k+m)$ clusters. Let $F' \coloneqq F_{c} \cup Z$ be a center-set for $\C'$. Thus we get $\Psi(F',\C') \leq \sum_{i = 1}^{k} 2^{\ell} \cdot \Delta(C_{i})$. 
Moreover, since $\overline{OPT}(C,C)$ is the optimal cost for the unconstrained $(k+m)$-service instance $(C,C,k+m,d,\ell)$ (i.e., with $k+m$ centers), we have $\overline{OPT}(C,C) \leq \Psi(F',\C')$. Hence we get 
$\overline{OPT}(C,C) \leq 2^{\ell} \cdot \sum_{i = 1}^{k} \Delta(C_{i})$.
\end{proof}
\noindent Using the above claim and Property-III, we can say that $\{t(s_{1}),t(s_{2}),\dotsc,t(s_{k})\}$ is $(3^{\ell} + \veps)$-approximation for the outlier $k$-service problem. Hence, we claim that, with probability at least 1/2, there is a center-set in the list that gives $(3^{\ell} + \veps)$-approximation for the outlier $k$-service problem. 
However, the list size is larger in this case. In particular, the list size is $((k+m)/\veps)^{O(k \, \ell^{\, 2})}$.

The outlier $k$-service problem also admits a partition algorithm with running time of $O(nk)$ (see Section~\ref{section: partition_algm}). 
Therefore, we can apply Theorem~\ref{theorem: list_alpha_approx} for the outlier $k$-service problem. This gives us the following main theorem for the outlier problem.

\begin{theorem}
There is a $(3^{\ell}+\veps)$-approximation algorithm for the outlier $k$-service problem with running time of $O(n \cdot ((k+m)/\veps)^{O(k \, \ell^{\,2})})$. 
For the special case, when $C \subseteq L$, the algorithm gives an approximation guarantee of $(2^{\ell}+\veps)$.
\end{theorem}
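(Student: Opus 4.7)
The plan is to apply the general reduction of Theorem~\ref{theorem: list_alpha_approx} to the outlier version, but using the \emph{modified} list $k$-service algorithm described immediately above, together with a partition algorithm for the outlier $k$-service problem. The only place the outlier case fails to fit the framework of Definition~\ref{definition:constrained_k_service} is that the target clustering $\C=\{C_1,\dots,C_k\}$ partitions $C\setminus Z$ rather than $C$; this is exactly what the modification in Section~\ref{section:outlier_list_k_service} is designed to handle, and it is the only conceptual novelty in the proof.

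First I would invoke the modified list algorithm: replace line~(1) of \LK{} by an $\alpha$-approximation for the \emph{unconstrained} $(k+m)$-service instance $(C,C,k+m,d,\ell)$, and leave the remaining lines unchanged. A straightforward re-run of the Case~1 / Case~2 analysis (Lemmas~\ref{lemma:close_cluster},~\ref{lemma:radius},~\ref{lemma:far_cluster},~\ref{lemma:d2_sample}) with $F$ now of size $(k+m)$ and the bound $\overline{OPT}(C,C)$ in place of $OPT(C,C)$ gives Property~III. Combining Property~III with the lemma $\overline{OPT}(C,C)\le 2^{\ell}\sum_{i=1}^{k}\Delta(C_i)$ proved just above, the additive slack $\tfrac{\veps}{2^{\ell+1}k}\overline{OPT}(C,C)$ per low-cost cluster summed over $k$ clusters is absorbed into $\tfrac{\veps}{2}\sum_i\Delta(C_i)$, so the list $\mathcal{L}$ contains, with probability at least $1/2$, a $k$-center-set $S$ with $\Psi(S,\C)\le(3^{\ell}+\veps)\,\Psi^{*}(\C)$. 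The list size bookkeeping is identical to Theorem~\ref{theorem:list_k_service} except $k$ is replaced by $k+m$ in the counts coming from $\eta$, yielding $|\mathcal{L}|=((k+m)/\veps)^{O(k\,\ell^{\,2})}$.

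Next I would plug this into Theorem~\ref{theorem: list_alpha_approx} with the partition algorithm $A_{\mathbb{C}}$ for the outlier $k$-service problem referenced in Section~\ref{section: partition_algm}; since $A_{\mathbb{C}}$ runs in time $T_A=O(nk)$, the overall running time becomes $O(T_B+|\mathcal{L}|\cdot T_A)=O(n\cdot((k+m)/\veps)^{O(k\,\ell^{\,2})})$. The correctness argument of Theorem~\ref{theorem: list_alpha_approx} applies verbatim once we observe that, for the center-set $S$ promised by Property~III, running the outlier partition algorithm on $S$ produces a feasible outlier clustering $\C_S$ (a choice of $m$ outliers plus a partition of the rest) whose cost is at most $\Psi(S,\C^{*})\le(3^{\ell}+\veps)\Psi^{*}(\C^{*})$, and the algorithm returns the minimum-cost such clustering over all elements of $\mathcal{L}$.

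For the special case $C\subseteq L$, I would re-run exactly the same reduction but appeal to the improved Lemmas~\ref{lemma:close_cluster1} and~\ref{lemma:far_cluster2} (Property~II) in place of Lemmas~\ref{lemma:close_cluster} and~\ref{lemma:far_cluster}, which replace the factor $3^{\ell}$ by $2^{\ell}$ throughout, and use the \texttt{FindFacilities} sub-routine to extract a hard center-set from the $k$-closest-facilities step. The main obstacle, as anticipated, is the mismatch between $\overline{OPT}(C,C)$ and the outlier optimum $\sum_i\Delta(C_i)$; but this is handled cleanly by the $2^{\ell}$-factor lemma proved just above, whose proof embeds an outlier solution of cost $\sum_i\Delta(C_i)$ into a feasible $(k+m)$-center solution by making each outlier its own singleton cluster. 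Everything else is routine propagation of the bounds from Theorem~\ref{theorem:list_k_service} through Theorem~\ref{theorem: list_alpha_approx}.
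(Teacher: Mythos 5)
Your proposal follows essentially the same route as the paper: swap line~(1) for a $(k+m)$-center approximation on $(C,C,k+m,d,\ell)$, re-derive Property~III with $\overline{OPT}(C,C)$ replacing $OPT(C,C)$, use the lemma $\overline{OPT}(C,C)\le 2^{\ell}\sum_i\Delta(C_i)$ (proved by making each outlier a singleton cluster) to absorb the additive slack, and then plug into Theorem~\ref{theorem: list_alpha_approx} via the outlier partition algorithm, with the $C\subseteq L$ improvement coming from the Property~II lemmas. This matches the paper's argument point for point, including the $((k+m)/\veps)^{O(k\ell^2)}$ list-size accounting.
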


\noindent It should be noted that for small value of $k$, our algorithm shows improvement over existing algorithms.
The known polynomial time approximation-guarantee for the metric $k$-median problem is $7$ and for the metric $k$-means problem is $53$~\cite{outlier:kmeans_2018_Ravishankar}. 
Our algorithm gives $(3+\veps)$-approximation for the metric $k$-median problem and $(9+\veps)$-approximation for the metric $k$-means problem, albeit in FPT time.
Moreover, the running time of our algorithm is only linear in $n$.


\section{Partition Algorithms}
\label{section: partition_algm}
We would like to make some important observations about the constrained problems mentioned in Table~\ref{table:1}. 
The $r$-gather and $r$-capacity $k$-service problems that we consider in this work is more general than the one considered by Ding and Xu~\cite{Ding_and_Xu_15} in that the  $r_{i}$ values for different clusters may not be the same. 
Ding and Xu~\cite{Ding_and_Xu_15} only considered the uniform version of the problem (where all $r_i$'s are the same) and designed a partition algorithm for the same. 
However, their algorithm can easily be generalized for the non-uniform version.
We describe these algorithms in the two subsections of this section.
Let us first discuss the last four problems in Table~\ref{table:1} -- fault-tolerant, semi-supervised, uncertain, and outlier $k$-service. These problems do not strictly follow the definition of the constrained $k$-service problem due to which Theorem~\ref{theorem: list_alpha_approx} does not hold for these problems. 
Let us see why.
The fault-tolerant, semi-supervised, and uncertain $k$-service problems have their objective functions different from the one mentioned in Definition~\ref{definition:constrained_k_service}. This distinction is crucial since the list $\mathcal{L}$ only provides a center set, which gives constant approximation with respect to the cost function $\Psi^{*}(\C)$. Therefore, Theorem~\ref{theorem: list_alpha_approx} can not be applied here.
For the outlier $k$-service problem, a feasible clustering is defined over $C \setminus Z$ instead of $C$. Whereas, $\mathcal{L}$ only provide a good center set for the clustering of $C$ and not for $C \setminus Z$. Therefore, Theorem~\ref{theorem: list_alpha_approx} also does not work here.
This problem of misfit can be evaded by defining these problems differently so that they fit the definition of the constrained $k$-service problem. Let us consider these problems one by one.
\begin{enumerate}
    \item An instance of the fault-tolerant $k$-service problem can be reduced to a special instance of the chromatic $k$-service problem (see section 4.5 of~\cite{Ding_and_Xu_15}). Since we can apply Theorem~\ref{theorem: list_alpha_approx} on the reduced instance, we can also obtain the clustering for the original instance.
    \item For the semi-supervised $k$-service problem, we do not need another equivalent definition. In fact, it can be shown that the list $\mathcal{L}$ provides a good center-set for the objective function $\overline{\Psi}(F,\C) \coloneqq \alpha \cdot \Psi(F,\C) + (1-\alpha) \cdot Dist(\C', \C)$ also (see section 4.4 of~\cite{gjk19}). Therefore, we can apply Theorem~\ref{theorem: list_alpha_approx} on a semi-supervised $k$-service instance.
    \item For the uncertain $k$-service problem (\emph{assigned} case), Cormode and McGregor~\cite{uncertain:kcenter_2008_Cormode} gave an equivalent definition as follows: Given a weighted point set $D \coloneqq \cup_{i = 1}^{n} D_{p}$ such that a point $p_{i} \in D_{p}$  carries a weight of $t_{p}^{i}$, the task is to find a clustering $\mathcal{D} \coloneqq \{D_{1}, D_{2}, \dotsc, D_{k}\}$ of $D$ with minimum $\Psi^{*}(\mathcal{D})$, such that \textbf{all} points in $D_{p}$ are assigned to the \textbf{same} cluster. This definition follows from linearity of expectation. Moreover, the weighted instance can easily be converted to an unweighted instance by replacing a weighted point with an appropriate number of points of unit weight (see Section 5~\cite{uncertain:kcenter_2008_Cormode}).
    Now, it fits the definition of the constrained $k$-service problem. Hence, it can be solved using Theorem~\ref{theorem: list_alpha_approx}.
    
    \item At last, we have the outlier $k$-service problem. We do not define the problem differently; instead we construct a new list $\mathcal{L}'$ of center-sets such that it contains a good center-set for any clustering defined over the set $C \setminus Z$ (for any set $Z$ of size $m$).
    However, this new list has a large size, some function of $k$ and $m$.
    Therefore, the FPT algorithm for the outlier $k$-service problem is parameterized by both $k$ and $m$. On the positive side, we only have $k$ in the exponent term.
    We mentioned the details of the algorithm in Section~\ref{section:outlier_list_k_service}.
    Since the problem has not been discussed in~\cite{Ding_and_Xu_15}, we mention its partition algorithm in a subsection here.
\end{enumerate}
We expect that more problems can fit directly or indirectly in the framework of the constrained $k$-service problem. 
Since it already encapsulates a wide range of problems, it is important to study this framework.

Next, we design the partition algorithms for the \emph{non-uniform} $r$-gather, $r$-capacity, and outlier $k$-service problems. 
For rest of the problems mentioned in Table~\ref{table:1}, efficient partition algorithms (in the Euclidean space) are already known (see Section 4 and 5.3 of ~\cite{Ding_and_Xu_15}). 
These algorithms can be easily extended for the general metric space. 

\subsection{\texorpdfstring{$r$}{}-gather \texorpdfstring{$k$}{}-service}
\emph{Partition Problem}: Given a $k$-center-set $F$, we have to find a clustering $\C = \{C_{1}, \dotsc, C_{k}\}$ with minimum $\Psi(F,\C)$, such that each cluster $C_{i} \in \C$ contains at least $r_{i}$ points.

Let us describe the algorithm. Since we do not know which cluster is assigned to which center, let us consider all possible cases. Without loss in generality, assume that $F = \{f_{1}, f_{2}, \dotsc,f_{k} \}$ and  $C_{i}$ is assigned to $f_{i}$. Now, we know that at least $r_{i}$ clients must be assigned to $f_{i}$. Let us reduce the problem to a flow problem on a complete bipartite graph $G = (V_{l},V_{r},E)$. Vertices in $V_{l}$ corresponds to the points in $F$; vertices in $V_{r}$ corresponds to the points in $C$; and an edge $(u,v) \in E$ carries a weight of $d^{\ell}(u,v)$ with a demand $0$ and capacity $1$. Let us create a new source vertex and join it to every vertex in $V_{l}$ and put a demand of $r_{i}$ on each edge. Similarly, create a new sink vertex and join it with every vertex in $V_{r}$ and put a demand of $1$ and capacity of $1$ on each of the edge. It is easy to see that a max-flow on the graph corresponds to a partitioning of $C$ satisfying the $r$-gather problem constraints.
Now we can apply any standard \emph{min-cost flow} algorithm to obtain the optimal partitioning of $C$. 

The running time of the algorithm is $k^{k} \cdot n^{O(1)}$, where $k^{k}$ term appears because we exhaustively guessed the demands for the facilities in $F$, and $n^{O(1)}$ term appears due to the min-cost flow algorithm. Ding and Xu~\cite{Ding_and_Xu_15} used the same method to obtain the partition algorithm for the uniform $r$-gather $k$-service problem. However, there was no guessing involved because each facility had the same demand.

\begin{theorem}
There is a partition algorithm for the $r$-gather $k$-service problem, with a running time of $k^{O(k)} \cdot n^{O(1)}$.
\end{theorem}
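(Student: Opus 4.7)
The plan is to reduce the partition problem to a small family of minimum-cost flow instances. Given the $k$-center set $F = \{f_1, \dotsc, f_k\}$, the only combinatorial choice left is how to match the $k$ lower-bound values $r_1, \dotsc, r_k$ to the $k$ facilities. Since there are only $k! \le k^{O(k)}$ such matchings, I would enumerate all permutations $\pi$ of $\{1,\dotsc,k\}$ and, for each fixed $\pi$, find the best assignment of clients to facilities in which $f_j$ receives at least $r_{\pi(j)}$ clients. Outputting the best solution over all permutations yields the optimum partition.

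For a fixed $\pi$, I would build the bipartite min-cost flow network exactly as in the paragraph preceding the theorem: a source $s$, a sink $t$, a node for every facility $f_j \in F$, and a node for every client $x \in C$. Add arcs $s \to f_j$ with capacity $|C|$, lower bound $r_{\pi(j)}$, and cost $0$; arcs $f_j \to x$ with capacity $1$ and cost $d^{\ell}(f_j,x)$; and arcs $x \to t$ with capacity $1$ and cost $0$. Integral min-cost flows of value $|C|$ in this network are in bijection with client-to-facility assignments satisfying the $r$-gather constraint under $\pi$, and their flow costs equal the service costs $\sum_x d^{\ell}(x, f_{\text{assign}(x)})$. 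Standard techniques handle min-cost flow with integer lower bounds in polynomial time, e.g., by the usual transformation that removes lower bounds followed by any polynomial-time min-cost flow routine.

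Correctness will follow from the definition of $\Psi(F, \mathcal{C})$, which itself minimises over the permutation matching clusters to facilities: the optimal partition $\mathcal{C}^\star$ with its optimal permutation $\pi^\star$ is feasible for the flow instance associated with $\pi^\star$ and has objective value exactly $\Psi(F, \mathcal{C}^\star)$, so enumerating over all $\pi$ cannot miss the optimum. Conversely, every feasible integral flow produces a valid partition with matching cost, so no permutation yields a better value than the true optimum. Summing over permutations, the total running time is $k! \cdot n^{O(1)} = k^{O(k)} \cdot n^{O(1)}$.

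The step that requires the most care is the reduction to plain min-cost flow: the $r_{\pi(j)}$ lower bounds on the source arcs must be enforced exactly, and one must argue integrality of the optimum (which is automatic because all capacities and lower bounds are integers and the network is a circulation with integer data). Apart from this bookkeeping, the construction is routine, and the only reason the running time is not polynomial in $k$ is the unavoidable enumeration of the $k!$ matchings between the non-uniform demands $r_i$ and the facilities in $F$.
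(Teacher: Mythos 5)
Your proof is correct and mirrors the paper's argument almost exactly: enumerate the $k!$ ways of matching the lower bounds $r_1,\dotsc,r_k$ to the facilities (the paper states this as ``exhaustively guessing the demands,'' counted loosely as $k^k$), and for each fixed matching solve a min-cost flow on the same source--facility--client--sink bipartite network with lower bounds $r_{\pi(j)}$ on the source arcs and unit capacities on the facility--client and client--sink arcs. You spell out the integrality argument and the flow-value $|C|$ requirement a bit more explicitly, but the decomposition, the network, and the $k^{O(k)} \cdot n^{O(1)}$ bound are the same.
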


\noindent Now, let us discuss the partition algorithm in streaming setting. Goyal \emph{et al.}~\cite{gjk19} designed a streaming partition algorithm for the uniform $r$-gather/$r$-capacity $k$-service problem. 
The same algorithm can be generalized to the non-uniform setting in general metric spaces. 
Let us briefly see what their algorithm does. 
The algorithm first creates a small (in terms of space) representation ($G' = (V_{l}',V_{r}',E')$) of the original graph $G = (V_{l},V_{r},E)$ that satisfies the following definition:

\begin{definition}[Definition 4 of~\cite{gjk19}]
$G' = (V_{l}',V_{r}',E')$ is an edge-weighted bipartite graph such that a number $n_v$ is associated with each vertex $v \in V_{r}'.$ $G'$ {\em represents} the pair $(F, C)$, where $F$ is a set of $k$ centers and $C$ is a set of $n$ points, such that it satisfies the following conditions:
\begin{itemize}
    \item The set $V_{l}' = F$. Each point $p \in C$ is mapped to a unique vertex $v$ in $V_{r}'$ -- call this vertex $\phi(p)$. Further $n_v$ is equal to $|\phi^{-1}(v)|$.
    \item For each point $p \in C$ and center $f \in F$, the weight of the edge $(\phi(p), f)$ in $G'$ is within $(1 \pm \veps)$ of $d^{\ell}(p,f)$. 
\end{itemize}
\end{definition}

\noindent Moreover, the graph $G'$ can be constructed using a two-pass streaming algorithm, as described by the following Theorem.
\begin{theorem}[Theorem 7 of~\cite{gjk19}]
Given a pair $(F,C)$ of $k$ centers and $n$ points respectively, there is a single pass streaming algorithm which builds a bipartite graph $G' = (V_{l}',V_{r}',E')$.
The space used by this algorithm (which includes the size of $G'$) is \\ $O \left(k^2 \cdot 2^k \cdot (k + \log n + \log \Delta) \cdot \left( k \cdot 6^k \cdot  \log{n} + k^k \cdot \log^k (\frac{1}{\veps})  \right) \right)$. Further, the dependence on $\log \Delta$ can be removed by adding one more pass to the algorithm, where $\Delta$ denotes the aspect ratio $\frac{\max_{x \in C, f \in F}d(x,f)}{\min_{x \in C, f \in F}d(x,f)}$ 
\end{theorem}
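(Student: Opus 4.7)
The plan is to maintain, as each point $p$ arrives in the stream, a compact discretized ``signature'' of its distance vector $(d(p,f_1),\ldots,d(p,f_k))$ to the $k$ centers, and to group points that share the same signature. Specifically, in one pass I would compute the $k$ distances for each incoming $p$ (using $O(k)$ time per point and $O(k \log \Delta)$ bits to hold the raw distances), round each coordinate to the nearest power of $(1+\veps)$, and maintain a hash map from signatures to counters. A signature $\sigma$ seen $n_\sigma$ times becomes a vertex $v \in V_r'$ with $n_v = n_\sigma$, and its edge to $f_i$ in $G'$ carries the rounded value of the $i$-th coordinate of $\sigma$. By construction every edge weight is within a $(1\pm\veps)$ factor of the true $d^\ell(p,f_i)$, so the representation property is immediate; reading out $G'$ at the end of the pass gives the required bipartite graph.

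For the space bound, I would argue that the number of distinct signatures is much smaller than the naive $(\log \Delta/\veps)^k$. Observe that if $f^\ast \in F$ is the closest center to $p$ with $r := d(p,f^\ast)$, then by the triangle inequality every other center $f_i$ satisfies $d(p,f_i) \in [\,d(f^\ast,f_i)-r,\; d(f^\ast,f_i)+r\,]$. So whenever $r \le \veps \cdot d(f^\ast,f_i)$, the distance to $f_i$ is already determined up to a $(1\pm\veps)$ factor by the fixed quantity $d(f^\ast,f_i)$ and needs no discretization at all; only the ``nearby'' centers $f_i$ with $d(f^\ast,f_i) \le r/\veps$ require genuine discretization, and there are at most $O(\log(1/\veps))$ rounded values in that range for each such $f_i$. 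Summing over the $k$ possible choices of closest center $f^\ast$, and over the (at most $k$) nearby centers per signature, one gets the claimed $k^k \cdot \log^k(1/\veps)$ term; the additive $k \cdot 6^k \cdot \log n$ term accounts for bookkeeping of the counters and of which closest-center region each signature belongs to.

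The aspect ratio $\Delta$ enters only because we must know the scale of distances in order to set up the discretization grid on the fly. This is what forces the $\log \Delta$ factor in the single-pass version. To eliminate it, I would add a preliminary pass whose sole job is to compute $\min_{x,f} d(x,f)$ and $\max_{x,f} d(x,f)$, which needs only $O(\log \Delta)$ bits of running state rather than $O(\log \Delta)$ per stored object; once those extremes are known, the discretization grid is fixed and the second pass proceeds exactly as above, avoiding the $\log\Delta$ multiplier in the dominant storage term.

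The part I expect to be most delicate is the refined signature counting that produces $k^k \cdot \log^k(1/\veps)$ rather than the trivial $(\log \Delta/\veps)^k$. Making the triangle-inequality argument work requires being careful about boundary cases (points whose distances to two centers are essentially tied, and points for which many centers are simultaneously ``near''), and translating the combinatorial bound on signatures into a bit-level space bound needs a clean encoding scheme. Everything else — the approximation guarantee on edge weights, the vertex/count bookkeeping, and the two-pass extension — follows in a straightforward way from the discretization.
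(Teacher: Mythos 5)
This statement is Theorem~7 of~\cite{gjk19}, which the present paper imports by citation; there is no proof of it here, so there is no in-paper argument to compare your proposal against. On its own merits, your high-level plan --- one pass, compute the $k$ distances per incoming point, round each coordinate to a multiplicative grid, hash signatures to counters, and bound the number of signatures via the triangle-inequality observation that distances to far-away centers are pinned down by $d(f^\ast, f_i)$ alone --- is a sensible sketch and is likely close in spirit to the cited construction. The two-pass fix for removing $\log \Delta$ (a prepass to learn the scale, then a pass with a fixed grid) is also standard and fine.

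Where I do not see your argument closing is the quantitative signature count, and you rightly flag this as the delicate step. If $f^\ast$ is the closest center at distance $r$, a ``nearby'' center $f_i$ with $d(f^\ast,f_i) \le r/\veps$ has $d(p,f_i) \in [\max(r, d(f^\ast,f_i)-r),\, d(f^\ast,f_i)+r]$, and that interval has bounded \emph{ratio} (at most a small constant). Covering an interval of constant ratio with a $(1\pm\veps)$-multiplicative grid costs $\Theta(1/\veps)$ grid points per nearby center, not $O(\log(1/\veps))$. Over up to $k$ nearby centers that gives a $(1/\veps)^{O(k)}$ blowup rather than the claimed $\log^k(1/\veps)$, so as written your bound is off by roughly a factor of $(1/\veps)^k$. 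To actually hit $k^k \log^k(1/\veps)$ one needs a coarser grid (e.g.\ powers of two in the ratio $d(p,f_i)/r$, truncated at a polynomial in $1/\veps$) together with a separate argument that such coarse rounding still yields a $(1\pm\veps)$-accurate \emph{global} assignment cost, which is a genuinely different guarantee from the per-edge $(1\pm\veps)$ accuracy your signature grid enforces directly. Relatedly, your proposal does not explain the $2^k$ and $6^k$ factors in the stated space bound at all; calling them ``bookkeeping'' is not an argument, and they most plausibly arise from enumerating which subset of centers is ``near'' for each signature class, a structure your current decomposition does not maintain. In short: the scaffolding is right, but the combinatorial count of signatures and the exponential-in-$k$ factors need a materially different and more careful accounting than the one you have sketched.
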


\noindent It is easy to see that, a flow on the graph $G'$ corresponds to a partitioning of the client set. Moreover, the cost of a flow is at most $(1+\veps)$ times the corresponding partitioning cost. Therefore, we first solve the min-cost flow problem on the graph $G'$. This gives us an optimal flow on graph $G'$. Using it, we can obtain a partitioning of $C$, which takes one additional pass. To know the complete details see Section 4.2 of~\cite{gjk19}. We state the final result as follows.

\begin{theorem}
Consider the partition problem for a $r$-gather $k$-service instance $(C,L,k,d,\ell)$. For a center-set $F \subseteq L$, there is a 3-pass streaming algorithm that outputs a partitioning $\C = \{ C_{1}, C_{2}, \dotsc, C_{k}\}$ of the client set such that the cost of the partitioning is at most $(1+\veps)$ times the optimal partitioning cost. Moreover, the algorithm has space complexity of $f(k,\veps) \cdot \log n$ and the running time of $f(k,\veps) \cdot n^{O(1)}$, where $f(k,\veps) = k^{O(k)} \cdot \log^{k}(\frac{1}{\veps})$.
\end{theorem}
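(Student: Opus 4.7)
\medskip

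\noindent\textbf{Proof Proposal.} The plan is to combine the compact-graph construction of Goyal \emph{et al.}~\cite{gjk19} with the offline min-cost flow reduction described just above the theorem, adapting both to handle the non-uniform lower bounds $r_i$. Conceptually the streaming algorithm has three phases matching the three passes: (i) compress the instance into a small representative bipartite graph $G'$ in a streaming fashion, (ii) solve an offline min-cost flow problem on $G'$ to decide how many clients should be routed from each bucket of $V_r'$ to each center in $F$, and (iii) in a final streaming pass over $C$, actually assign the clients in each bucket according to the fractional-to-integral plan obtained in (ii).

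First I would invoke the two-pass construction of Goyal \emph{et al.}\ (their Theorem~7, together with the extra pass that removes the $\log\Delta$ dependence) to build the representative graph $G'=(V_l',V_r',E')$ of the pair $(F,C)$, using $O(f(k,\veps)\cdot \log n)$ space. By the definition of representation, each client $p\in C$ is mapped to some vertex $\phi(p)\in V_r'$, a count $n_v$ tracks $|\phi^{-1}(v)|$, and edge weights are within a $(1\pm\veps)$ factor of the true $d^\ell$-distances. This is exactly the streaming primitive I need: any flow from $V_l'$ to $V_r'$ of total value $|C|$ corresponds, up to a $(1+\veps)$ multiplicative distortion in cost, to a partitioning of $C$ into $k$ groups indexed by $F$.

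Second, I would reduce the $r$-gather partition problem on $G'$ to a min-cost flow instance, mirroring the offline construction given earlier for the non-uniform $r$-gather case. Since we do not know which center in $F$ serves which cluster $C_i$, I enumerate the $k!$ (equivalently, guess among $k^k$) assignments of the lower bounds $(r_1,\dotsc,r_k)$ to the centers in $F$. For each guess I add a super-source $s$ with an edge of lower bound $r_i$ and capacity $|C|$ to each $f_i\in V_l'$, add a super-sink $t$ with edges of capacity $n_v$ from each $v\in V_r'$, and set edge $(f_i,v)$ to have capacity $n_v$ and cost equal to the $G'$-weight of $(f_i,v)$. A min-cost $|C|$-flow in this network, computed offline on the compressed graph, yields the optimal integral number of clients from each bucket that should go to each center, subject to the $r_i$ demands. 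Among the $k^{O(k)}$ guesses I keep the one with minimum cost.

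Third, in the final streaming pass I materialize the partition: for each client $p$ streaming by, look up its bucket $v=\phi(p)$ (recorded in $O(\log n)$ space per bucket from the first two passes), consult the optimal flow values $x_{i,v}$ for that bucket, and assign $p$ to the first center $f_i$ whose residual quota $x_{i,v}$ is still positive (decrementing the quota). This yields a valid clustering $\mathcal{C}=\{C_1,\dotsc,C_k\}$ whose cost is within $(1+\veps)$ of the optimal flow cost on $G'$, and hence within $(1+\veps)^2$ of the true optimum; rescaling $\veps$ absorbs the square into the stated $(1+\veps)$ guarantee. The space bound follows because $G'$, the flow vector, and the bucket-to-client mapping all fit in $f(k,\veps)\cdot\log n$ space; the running time is dominated by the $k^{O(k)}$ guesses times a polynomial-time min-cost flow on $G'$ plus $O(n)$ work per pass, giving $f(k,\veps)\cdot n^{O(1)}$.

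The main obstacle I expect is the third pass: the streaming representation records only bucket counts $n_v$, not individual clients, so I must argue that a bucket-level flow $x_{i,v}$ can be disaggregated to individual clients without extra cost. This works because all clients in a bucket are treated as interchangeable by $G'$, so assigning them arbitrarily to centers within the quota $x_{i,v}$ introduces no additional error beyond the $(1\pm\veps)$ distance distortion that is already accounted for; formalizing this interchangeability, and verifying that the per-bucket quotas $x_{i,v}$ are integral (which holds since all capacities and lower bounds in the flow network are integral), is the one place where care is needed.
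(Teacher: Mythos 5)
Your proposal follows the same route the paper takes: build the Goyal~\emph{et~al.}\ representative bipartite graph $G'$ in two streaming passes (Theorem~7 of~\cite{gjk19} plus the extra pass to drop the $\log\Delta$ dependence), reduce the $r$-gather partition on $G'$ to an offline min-cost flow with a $k^{O(k)}$ enumeration of which lower bound $r_i$ attaches to which open facility, and then use one final pass to disaggregate the bucket-level flow $x_{i,v}$ into an actual assignment of clients. The paper is terse here and delegates the mechanics to Section~4.2 of~\cite{gjk19}; you have filled those mechanics in correctly, and your observations that interchangeability of clients inside a bucket together with integrality of min-cost flow make the third-pass disaggregation lossless beyond the $(1\pm\veps)$ distortion already charged to $G'$ are exactly the two points on which the argument really rests. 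The only cosmetic quibble is that the map $\phi$ is recomputed on the fly from the discretization parameters of $G'$ rather than stored per bucket, but that does not affect the bound you derive.
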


\subsection{\texorpdfstring{$r$}{}-capacity \texorpdfstring{$k$}{}-service}
\emph{Partition Problem}: Given a $k$-center-set $F$, find a clustering $\C = \{C_{1}, \dotsc, C_{k}\}$ with minimum $\Psi(F,\C)$, such that each cluster $C_{i}$ contains at most $r_{i}$ points, for $1 \leq i \leq k$.

The algorithm is similar to the partition algorithm of the $r$-gather $k$-service problem. We can exhaustively consider the capacities of the facilities and then reduce the problem to a flow problem on a complete bipartite graph. The only difference is that the edges connecting the source vertex to $V_{l} = F$, have a demand of $0$ and the capacity of $r_{i}$. Similarly we can design a streaming partition algorithm for the problem. We state the final results as follows.

\begin{theorem}
There is a partition algorithm for the $r$-capacity $k$-service problem, with a running time of $k^{O(k)} \cdot n^{O(1)}$.
\end{theorem}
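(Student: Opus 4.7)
The plan is to mimic the partition algorithm for the $r$-gather $k$-service problem, modifying only the capacity/demand structure on the source-side edges. Recall that $\Psi(F, \C)$ already includes a minimization over permutations matching clusters to facilities; since the $r_i$ values can differ, the capacity bound $|C_i| \le r_i$ is attached to a cluster \emph{label}, not to a particular facility in $F$, so we first have to decide which facility will play the role of ``the center of $C_i$.'' I would enumerate all permutations $\pi : [k] \to [k]$ specifying that cluster $C_i$ is served by facility $f_{\pi(i)}$. There are $k!  = k^{O(k)}$ such permutations, which accounts for the $k^{O(k)}$ factor in the running time.

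For each fixed $\pi$, I would reduce the remaining problem to a minimum-cost flow instance on a bipartite graph $G = (V_\ell, V_r, E)$, analogous to the $r$-gather construction. Concretely, $V_\ell = F$, $V_r = C$, and between them we place an edge $(f, x)$ of capacity $1$ and cost $d^\ell(f, x)$ for every $(f, x) \in F \times C$. Add a source $s$ with an edge $(s, f_{\pi(i)})$ of capacity $r_i$ and cost $0$ for each $i \in [k]$ (this is the key difference from the $r$-gather case: an upper capacity on the source side rather than a lower demand). Add a sink $t$ with an edge $(x, t)$ of capacity $1$ and cost $0$ for each $x \in C$. Insisting that every client be routed (total flow $|C|$) is feasible iff $\sum_i r_i \ge |C|$, which is a necessary condition for any feasible $r$-capacity clustering anyway; if the instance is infeasible the algorithm can detect it here.

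The correctness argument is direct: integral min-cost flows on this network are in bijection with partitions $\{C_1, \dots, C_k\}$ of $C$ satisfying $|C_i| \le r_i$, where $C_i = \{x : \text{unit of flow passes through } (f_{\pi(i)}, x)\}$, and the flow cost equals $\sum_i \sum_{x \in C_i} d^\ell(x, f_{\pi(i)})$. Since the integer min-cost flow problem on networks with integral capacities admits polynomial-time algorithms (e.g.\ successive shortest paths), each guess is resolved in $n^{O(1)}$ time. Taking the best clustering over all $k!$ guesses yields the global optimum of the partition problem with total running time $k^{O(k)} \cdot n^{O(1)}$.

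I do not anticipate a serious obstacle: both the enumeration step and the flow reduction are standard, and the total unimodularity of the constraint matrix guarantees that min-cost flow returns an integral optimal solution. The only mildly subtle point is matching labels to facilities, which is what forces the $k^{O(k)}$ factor and explains why the non-uniform case is genuinely harder than the uniform case handled by Ding and Xu, where all $r_i$ are equal and no permutation guessing is needed.
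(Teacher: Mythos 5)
Your proposal is correct and takes essentially the same approach as the paper: guess how the non-uniform capacity values $r_i$ are assigned to the $k$ facilities (the paper counts this as $k^k$ exhaustive guesses, you as $k!$ permutations — both $k^{O(k)}$), then reduce each guess to a min-cost flow instance on a source--facility--client--sink bipartite network with source-edge capacities $r_i$, and take the best flow over all guesses. The detail you added about total unimodularity guaranteeing integral optimal flows is standard and implicit in the paper.
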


\begin{theorem}
Consider the partition problem for a $r$-capacity $k$-service instance $(C,L,k,d,\ell)$. For a center-set $F \subseteq L$, there is a 3-pass streaming algorithm that outputs a partitioning $\C = \{ C_{1}, C_{2}, \dotsc, C_{k}\}$ of the client set such that the cost of the partitioning is at most $(1+\veps)$ times the optimal partitioning cost.  Moreover, the algorithm has the space complexity of $f(k,\veps) \cdot \log n$ and running time of $f(k,\veps) \cdot n^{O(1)}$, where $f(k,\veps) = k^{O(k)} \cdot \log^{k}(\frac{1}{\veps})$.
\end{theorem}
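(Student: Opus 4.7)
The plan is to mirror the three-pass construction used for the $r$-gather partition algorithm and adapt it to the upper-bound setting. As in the $r$-gather case, I would first reduce the partition problem to a min-cost flow problem on a complete bipartite graph $G=(V_l,V_r,E)$, where $V_l=F$, $V_r=C$, every edge $(f,x)$ has weight $d^{\ell}(f,x)$ and unit capacity, a source is attached to each $f_i\in V_l$ by an edge of capacity $r_i$ and demand $0$, and a sink is attached to each $x\in V_r$ by an edge of unit capacity and unit demand. Since we do not know a priori which center in $F$ plays the role of which index $i$ (and hence which capacity $r_i$ it inherits), I would enumerate all $k!\le k^{O(k)}$ permutations from $F$ to the list $(r_1,\dots,r_k)$ and return the partition of minimum total cost. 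The key observation is that any feasible solution of the $r$-capacity partition problem corresponds to an integral feasible flow in at least one of these networks and vice versa, so the minimum-cost integral flow over all permutations yields an optimal partition.

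To convert this to streaming I would invoke Theorem~7 of~\cite{gjk19}, which builds, in a single pass (two passes if the aspect ratio $\Delta$ is unknown), a sketch graph $G'=(V_l',V_r',E')$ of size $f(k,\veps)=k^{O(k)}\log^k(1/\veps)$ that represents the pair $(F,C)$ with edge weights that are $(1\pm\veps)$-approximations of the true distances $d^{\ell}(\cdot,\cdot)$ and with multiplicity counts $n_v$ at each $v\in V_r'$. Replace the capacitated flow network above by the analogous network on $G'$: connect the source to each $f_i\in V_l'$ with capacity $r_i$, and connect each $v\in V_r'$ to the sink with capacity $n_v$ and demand $0$. A min-cost integral flow here can be computed entirely in memory of size $f(k,\veps)\cdot\log n$ since $G'$ itself has this size. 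Enumerate the $k^{O(k)}$ permutations and keep the flow of smallest cost; the cost of this flow is within $(1+\veps)$ of the optimal partition cost because the weights in $G'$ are $(1\pm\veps)$-approximate and the combinatorial structure of feasible assignments is preserved by the mapping $\phi$.

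Finally, in one additional pass over the stream I would turn the optimal flow on $G'$ into an actual partition of $C$. Each vertex $v\in V_r'$ is the image $\phi(p)$ of the $n_v$ points that fell into its cell, and the flow tells us, for each such vertex, how many of those $n_v$ points are assigned to each center $f_i\in F$. In the final pass we stream the clients once more, and when a client $p$ arrives we look up the vertex $\phi(p)$ (using the same hashing used to construct $G'$) and assign $p$ to any center $f_i$ whose residual demand at $\phi(p)$ is still positive, decrementing that residual. This produces a valid partition whose cost on the original metric is at most $(1+\veps)$ times the optimal $r$-capacity partition cost. The space and time bounds are $f(k,\veps)\cdot\log n$ and $f(k,\veps)\cdot n^{O(1)}$, dominated by the graph-sketch construction and the enumeration of permutations.

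The main obstacle I anticipate is arguing carefully that the $(1+\veps)$ edge-weight distortion in $G'$ translates to a $(1+\veps)$ loss in the cost of the recovered partition even though the flow is solved on $G'$ and then realized on the original metric. Concretely, I would show that because every edge weight in $G'$ is within a multiplicative $(1\pm\veps)$ of the true $d^{\ell}$ distance and the set of feasible $r$-capacity assignments on $C$ is in bijection with the set of feasible integral flows on $G'$ (via $\phi$), the optimal flow cost on $G'$ is sandwiched between $(1-\veps)$ and $(1+\veps)$ times the optimal $r$-capacity cost, so a $(1+O(\veps))$ guarantee follows after rescaling $\veps$. This is the same argument used in the $r$-gather case, so the analysis goes through verbatim once the flow network has been set up with capacities rather than demands on the source side.
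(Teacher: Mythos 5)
Your proposal mirrors the paper's approach exactly: reduce the offline partition problem to min-cost flow on a bipartite graph with source-side edge capacities $r_i$ (instead of demands, as in the $r$-gather case), enumerate the $k^{O(k)}$ assignments of capacities to facilities, then replace the explicit graph by the compact $G'$ of Theorem~7 of~\cite{gjk19} built in at most two passes, solve the flow in memory, and use a final pass to realize the partition via $\phi$. This is precisely what the paper does; it treats the $r$-capacity streaming theorem as immediate from the $r$-gather construction with the demand/capacity modification on the source edges, and your write-up just makes the details a bit more explicit.
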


\subsection{Outlier \texorpdfstring{$k$}{}-service}
\emph{Partition Problem}: Given a center set $F$, find a set $Z \subseteq C$ of size $m$ and a clustering $\C = \{C_{1}, \dotsc, C_{k}\}$ of $C \setminus Z$ such that $\Psi(F,\C)$ is minimized.
\vspace{2mm}

\noindent The algorithm is simple. First, we define the distance of a point $x$ from the set $F$ as $\min_{f \in F} \left\{d(f,x)\right\}$. Based on these distances, we define the outlier set as a set of $m$ points which are farthest from $F$. The clustering can be obtained by simply running the Voronoi partitioning algorithm on $C \setminus Z$, with $F$ as a center-set. Thus we obtain the following result.

\begin{theorem}
There is a partition algorithm for the outlier $k$-service problem, with the running time of $O(n)$.
\end{theorem}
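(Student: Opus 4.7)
The plan is to reduce the partition problem to a combination of nearest-neighbor assignment plus a simple selection problem. Concretely, the algorithm will (i) compute, for every client $x \in C$, the quantity $d(x,F) \coloneqq \min_{f \in F} d(f,x)$ together with a witness $\pi(x) \in F$ achieving the minimum, (ii) use a linear-time selection routine (e.g., median-of-medians) on the values $d^{\ell}(x,F)$ to identify the $m$ clients with the largest distance to $F$, declare this set $Z$, and (iii) form the clusters by setting $C_i \coloneqq \{x \in C \setminus Z : \pi(x) = f_i\}$. No permutation search is needed because the Voronoi assignment itself realises the minimum in the definition of $\Psi(F,\C)$.

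For correctness I would argue as follows. For any candidate outlier set $Z$ of size $m$ and any clustering $\C$ of $C \setminus Z$, the cost satisfies
\[
\Psi(F,\C) \;=\; \min_{\pi}\sum_i \sum_{x \in C_i} d^{\ell}(x,f_{\pi(i)}) \;\geq\; \sum_{x \in C \setminus Z} d^{\ell}(x,F),
\]
with equality attained exactly when each $x$ is placed in the cluster whose assigned centre is the nearest point of $F$ to $x$ (ties broken arbitrarily). Hence for a fixed $Z$ the Voronoi partition is optimal, and the partitioning problem reduces to choosing $Z$ so as to minimise $\sum_{x \in C \setminus Z} d^{\ell}(x,F) = \sum_{x \in C} d^{\ell}(x,F) - \sum_{x \in Z} d^{\ell}(x,F)$. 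This is maximised by taking $Z$ to be the $m$ clients with the largest values of $d^{\ell}(x,F)$, which is exactly what step (ii) produces.

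The running time is a straightforward accounting: step (i) examines each of the $kn$ pairs $(x,f)$ once, step (ii) runs linear-time selection on $n$ numbers, and step (iii) scans each client once. Treating $k$ as a constant in the partitioning step (consistent with the FPT framework used throughout the paper), the total work is $O(n)$. The only subtle point worth flagging is the reliance on linear-time selection rather than sorting to pick out the $m$ farthest points; a sorting-based implementation would yield $O(n \log n)$, but since we only need the threshold value and membership above it, median-of-medians suffices and the linear bound is preserved.
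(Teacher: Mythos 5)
Your proposal matches the paper's argument exactly: both remove the $m$ clients farthest from $F$ and then Voronoi-assign the remaining clients, and your justification that this is optimal (Voronoi is best for any fixed outlier set, and the farthest-$m$ choice maximizes the removed cost) is the natural argument the paper leaves implicit. The only addition is your explicit use of linear-time selection to sidestep sorting, which is a sensible implementation detail but not a different approach.
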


\noindent The same algorithm can be converted to a streaming algorithm by keeping an account of the farthest point from each center. In the first pass, we can identify the set $Z$. In the second pass, we can output the required clustering, simply by assigning the points to the closest center. We state the final result as follows.

\begin{theorem}
Consider the partition problem for an outlier $k$-service instance $(C,L,k,d,\ell)$. For a center-set $F \subseteq L$, there is a two-pass streaming algorithm that finds a set $Z$ of size $m$ and outputs an optimal partitioning $\C = \{ C_{1}, C_{2}, \dotsc, C_{k}\}$ for $C \setminus Z$. The algorithm has the space complexity of $O(k)$ and the running time of $O(n k)$.
\end{theorem}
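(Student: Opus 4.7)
The plan is to convert the offline algorithm described immediately above the theorem (remove the $m$ clients farthest from $F$, then Voronoi-partition the remainder) into a two-pass streaming procedure exactly as sketched in the paragraph preceding the statement. Correctness of the offline procedure follows from a simple exchange argument: for any feasible outlier set $Z$, the cost is $\sum_{x \in C \setminus Z} d^{\ell}(x,F)$ where $d^{\ell}(x,F) \coloneqq \min_{f \in F} d^{\ell}(f,x)$, and swapping an included client with an excluded one having larger $d^{\ell}(\cdot,F)$ strictly decreases this sum; hence the optimum takes $Z$ to be the $m$ clients with largest $d^{\ell}(\cdot,F)$, after which the best assignment of $C \setminus Z$ is the Voronoi partition with respect to $F$. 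So the streaming task reduces to identifying $Z$ and then emitting the Voronoi labels.

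In Pass 1, I would store the input center-set $F$ together with a min-priority queue $H$ of size at most $m$ keyed on $d^{\ell}(\cdot,F)$. Each client $x$ read from the stream is scored by computing $d^{\ell}(x,F)$ in $O(k)$ time with a single scan over $F$, and inserted into $H$ whenever $|H|<m$, or whenever its score exceeds the current heap minimum (in which case the minimum is evicted). At the end of Pass 1 the heap equals the outlier set $Z$. In Pass 2, I re-stream $C$; for each $x$ I test membership $x \in Z$ by a hash lookup built from $H$ between the passes, and for every non-outlier I emit the assignment $(x,\arg\min_{f \in F} d(f,x))$, which specifies $\C$ implicitly. Per-client work is $O(k)$ for distance computations and $O(\log m)$ for heap maintenance, so the total running time over the two passes is $O(nk)$, and the auxiliary storage is the $O(k)$ words for $F$ plus the priority queue.

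The main subtlety I anticipate is ensuring that Pass 1 and Pass 2 agree on $Z$ when several clients tie at the $m$-th largest distance: this is resolved by using client identifiers as secondary keys in the heap, so that the tie-breaking rule is deterministic and can be reproduced by the lookup in the second pass. Once this bookkeeping is in place, the approximation guarantee is exact (the offline algorithm is optimal for the partition problem at the fixed $F$), the pass count is two by construction, and the time and space bounds follow from the per-client accounting described above.
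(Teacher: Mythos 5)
Your approach matches the paper's: identify the $m$ farthest clients as $Z$ in the first pass and emit the Voronoi assignment on $C \setminus Z$ in the second, which is exactly the conversion the paper sketches. Your exposition is in fact more careful than the paper's one-line description (the heap-based extraction of $Z$ and the deterministic tie-breaking are both glossed over there); one small observation worth making is that any implementation of this strategy, yours included, needs $\Theta(k+m)$ words to hold both $F$ and $Z$, so the theorem's stated $O(k)$ space bound should really read $O(k+m)$ — a slip in the statement rather than in your argument.
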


\end{document}